\def \VersionAuthorforArXiV {}
	\newcommand{\arXivVersion}[1]{#1}
	\newcommand{\FinalACMVersion}[1]{}
	\newcommand{\arXivVersion}[1]{}
	\newcommand{\FinalACMVersion}[1]{#1}
\def \VersionLong {}
	\newcommand{\LongVersion}[1]{\ifdefined\VersionWithComments{\color{red!40!black}#1}\else#1\fi}
	\newcommand{\ShortVersion}[1]{\ifdefined\VersionWithComments{\color{black!40}#1}\fi}
	\newcommand{\LongVersion}[1]{\ifdefined\VersionWithComments{\color{black!60}#1}\fi}
	\newcommand{\ShortVersion}[1]{\ifdefined\VersionWithComments{\color{red!40!black}#1}\else#1\fi}
\newcommand{\LongVersionTable}[1]{}
\newcommand{\ShortVersionTable}[1]{#1}
\newcommand{\nbLoC}[1]{\cellcolor{gray!20}}
\newenvironment{ienumeration}
	{\ifdefined\VersionLong\begin{enumerate}\else\begin{inparaenum}[\itshape i\upshape)]\fi}
	{\ifdefined\VersionLong\end{enumerate}\else\end{inparaenum}\fi}
\newenvironment{oneenumeration}
	{\ifdefined\VersionLong\begin{enumerate}\else\begin{inparaenum}[1)]\fi}
	{\ifdefined\VersionLong\end{enumerate}\else\end{inparaenum}\fi}
	\definecolor{darkblue}{rgb}{0, 0, 0.7}
\crefname{line}{\text{line}}{\text{lines}} %
\definecolor{mygreen}{rgb}{0,0.6,0}
\definecolor{mygray}{rgb}{0.5,0.5,0.5}
\definecolor{mymauve}{rgb}{0.58,0,0.82}
\tiny\color{black}, %
\newcommand{\defProblem}[3]
{%
	\noindent\fcolorbox{black}{blue!15}{
	\begin{minipage}{.95\columnwidth}
		\textbf{#1 Problem:}\\
		\textsc{Input}: #2\\
		\textsc{Problem}: #3
	\end{minipage}
}

	\smallskip

}
\pgfplotsset{compat=1.15}
\tikzstyle{every node}=[initial text=]
\tikzstyle{location}=[rectangle, rounded corners, minimum size=12pt, draw=black, fill=blue!10, inner sep=2pt]
\tikzstyle{invariant}=[draw=black, dotted, inner sep=1pt, node distance=0] %
\tikzstyle{final}=[double, fill=blue!50]
\tikzstyle{urgent}=[fill=yellow, thick, dotted] %
\tikzstyle{private}=[fill=red,thick]
\definecolor{coloract}{rgb}{0.50, 0.70, 0.30}
\definecolor{colorclock}{rgb}{0.4, 0.4, 1}
\definecolor{colordisc}{rgb}{1, 0, 1}
\definecolor{colorloc}{rgb}{0.4, 0.4, 0.65}
\definecolor{colorparam}{rgb}{1, 0.6, 0.0}
\definecolor{loccolor1}{rgb}{1, 0.3, 0.3}
\definecolor{loccolor2}{rgb}{0.3, 1, 0.3}
\definecolor{loccolor3}{rgb}{0.3, 0.3, 1}
\definecolor{loccolor4}{rgb}{1, 0.3, 1}
\definecolor{loccolor5}{rgb}{1, 1, 0.3}
\definecolor{loccolor6}{rgb}{0.3, 1, 1}
\definecolor{loccolor7}{rgb}{0.9, 0.6, 0.2}
\definecolor{loccolor8}{rgb}{0.7, 0.4, 1}
\definecolor{loccolor9}{rgb}{0.5, 1, 0.75}
\definecolor{loccolor10}{rgb}{0.8, 0.7, 0.6}
\definecolor{loccolor11}{rgb}{0.6, 0.7, 0.8}
\definecolor{loccolor12}{rgb}{0.2, 0.5, 0.9}
\definecolor{loccolor13}{rgb}{0.5, 0.9, 0.2}
\definecolor{loccolor14}{rgb}{0.9, 0.2, 0.5}
\definecolor{loccolor15}{rgb}{0.7, 0.7, 0.7}
\definecolor{loccolor16}{rgb}{0.8, 0.8, 0.5}
\newcommand{\styleact}[1]{\ensuremath{\textcolor{coloract}{{#1}}}}
\newcommand{\styleclock}[1]{\ensuremath{\textcolor{colorclock}{{#1}}}}
\newcommand{\styledisc}[1]{\ensuremath{\textcolor{colordisc}{\mathrm{#1}}}}
\newcommand{\styleloc}[1]{\ensuremath{\mathrm{#1}}}
\newcommand{\styleparam}[1]{\ensuremath{\textcolor{colorparam}{{#1}}}}
\newcommand{\clockx}{\ensuremath{\styleclock{x}}}
\newcommand{\clockcl}{\styleclock{\mathit{cl}}}
\newcommand{\stylecode}[1]{\textcolor{colorloc}{\texttt{#1}}}
\newcommand{\stylebench}[1]{\textcolor{colorloc}{\texttt{#1}}}
\newcommand{\cellHeader}[0]{\cellcolor{blue!20}\bfseries}
\newcommand{\rowHeader}{\rowcolor{blue!20}\bfseries}
\newcommand{\cellYes}{\cellcolor{green!20}\textbf{$\surd$}}
\newcommand{\cellNo}{\cellcolor{red!20}\textbf{$\times$}}
\newcommand{\cellFixable}{\cellcolor{orange!20}\textbf{$(\times)$}}
\newcommand{\cellKall}{\cellcolor{blue!20}\textbf{$\KTrue$}}
\newcommand{\cellKnone}{\cellcolor{red!20}\textbf{$\KFalse$}}
\newcommand{\cellKsome}{\cellcolor{green!20}\textbf{$K$}}
\newcommand{\init}{_0}
\newcommand{\A}{\ensuremath{\mathcal{A}}}
\newcommand{\Azeroinf}{\ensuremath{\A_{0,\infty}}}
\newcommand{\Actions}{\Sigma}
\newcommand{\action}{\ensuremath{a}}
\newcommand{\actionEnd}{\ensuremath{\styleact{\mathrm{finish}}}}
\newcommand{\ActionsIndices}{\zeta}
\newcommand{\assign}{\leftarrow}
\newcommand{\bflag}{\ensuremath{b}} %
\newcommand{\BTrue}{\text{true}}
\newcommand{\BFalse}{\text{false}}
\newcommand{\Constr}{C}
\newcommand{\class}[1]{\ensuremath{\left[#1\right]}}
\newcommand{\Clock}{\mathbb{X}} %
\newcommand{\ClockCard}{H} %
\newcommand{\clock}{x} %
\newcommand{\clockabs}{\ensuremath{x_\mathit{abs}}} %
\newcommand{\clockval}{\mu} %
\newcommand{\ClocksZero}{\vec{0}}
\newcommand{\compOp}{\bowtie}
\newcommand{\compOpLeq}{\triangleleft}
\newcommand{\CTrue}{\mathbf{true}}
\newcommand{\duration}{\ensuremath{\mathit{dur}}}
\newcommand{\edge}{e}
\newcommand{\Edges}{E}
\newcommand{\integralp}[1]{\ensuremath{\lfloor#1\rfloor}}
\newcommand{\equivalent}{\ensuremath{\approx}}
\newcommand{\fract}[1]{\ensuremath{\text{fract}(#1)}}
\newcommand{\longuefleche}[1]{\stackrel{#1}{\longrightarrow}}
\newcommand{\longueflecheRel}[1]{\stackrel{#1}{\mapsto}}
\newcommand{\flecheRel}{{\rightarrow}}
\newcommand{\guard}{g}
\newcommand{\invariant}{I}
\newcommand{\K}{K}
\newcommand{\KTrue}{\top}
\newcommand{\KFalse}{\bot}
\newcommand{\loc}{\ensuremath{\ell}} %
\newcommand{\locinit}{\loc\init}
\newcommand{\Loc}{L} %
\newcommand{\locfinal}{\ensuremath{\loc_f}}
\newcommand{\locpriv}{\ensuremath{\loc_{\mathit{priv}}}}
\newcommand{\locpub}{\ensuremath{\loc_{\mathit{pub}}}}
\newcommand{\locTarget}{\ensuremath{\loc_{T}}}
\newcommand{\lterm}{\mathit{lt}}
\newcommand{\Param}{\mathbb{P}} %
\newcommand{\param}{p} %
\newcommand{\paramabs}{\ensuremath{\param_\mathit{abs}}} %
\newcommand{\ParamCard}{M} %
\newcommand{\pval}{v} %
\newcommand{\PZG}{\ensuremath{\mathcal{PZG}}} %
\newcommand{\R}{{\mathbb{R}}}
\newcommand{\region}{\ensuremath{r}}
\newcommand{\Regions}{{\mathcal{R}}}
\newcommand{\RegionGraph}{{\mathcal{RG}}}
\newcommand{\regionEdges}{{\mathcal{F}}}
\newcommand{\Rgeqzero}{\R_{\geq 0}}
\newcommand{\sinit}{s\init} %
\newcommand{\somelocs}{\ensuremath{\Loc_T}} %
\newcommand{\concstate}{\ensuremath{s}} %
\newcommand{\States}{S} %
\newcommand{\Succ}{\mathsf{Succ}}
\newcommand{\timelapse}[1]{#1^\nearrow}
\newcommand{\Times}{\ensuremath{D}}
\newcommand{\TTS}{\ensuremath{T}}
\newcommand{\varrun}{\rho} %
\newcommand{\setn}{{\mathbb N}}
\newcommand{\setq}{{\mathbb Q}}
\newcommand{\setqplus}{\setq_{+}} %
\newcommand{\setr}{\ensuremath{\mathbb R}}
\newcommand{\setrplus}{\ensuremath{\setr_{+}}} %
\newcommand{\setz}{{\mathbb Z}}
\newcommand{\PrivDurReach}[3]{\ensuremath{\mathit{DReach}^{#1}_{#2}(#3)}}
\newcommand{\PubDurReach}[3]{\ensuremath{\mathit{DReach}^{#1}_{\neg #2}(#3)}}
\newcommand{\set}[1]{\ensuremath{\left\{ #1 \right\}}}
\newcommand{\Set}[2]{\ensuremath{\left\{ #1 \ | \ #2 \right\}}}
\newcommand{\durRegions}[2]{\ensuremath{\lambda_{#1,#2}}}
\newcommand{\styleSymbStatesSet}[1]{\ensuremath{\mathbf{#1}}}
\newcommand{\symbstate}{\ensuremath{\styleSymbStatesSet{s}}} %
\newcommand{\SymbState}{\ensuremath{\styleSymbStatesSet{S}}} %
\newcommand{\symbstateinit}{\symbstate\init} %
\newcommand{\symbtrans}{{\Rightarrow}} %
\newcommand{\resets}{R}
\newcommand{\projectP}[1]{\ensuremath{#1{\downarrow_{\Param}}}}
\newcommand{\reset}[2]{\ensuremath{[#1]_{#2}}}
\newcommand{\valuate}[2]{\ensuremath{#2(#1)}}
\newcommand{\stylealgo}[1]{\ensuremath{\textsf{#1}}}
\newcommand{\Copy}{\stylealgo{Copy}}
\newcommand{\EFsynth}{\stylealgo{EFsynth}}
\newcommand{\Enrich}{\stylealgo{Enrich}}
\newcommand{\SynthOp}{\stylealgo{SynthOp}}
	\theoremstyle{plain}
	\newtheorem{lemma}{Lemma}
	\newtheorem{proposition}{Proposition}
	\newtheorem{theorem}{Theorem}
	\theoremstyle{definition}
	\newtheorem{definition}{Definition}
	\newtheorem{example}{Example}
\theoremstyle{remark}
\newtheorem{remark}{Remark}
	\newcommand{\todoinline}[1]{\mbox{}{\color{red}{\textbf{TODO}\ifx#1\\\else:\ \fi #1}}} %
	\newcommand{\todoinline}[1]{}
\footnotesize\printfield{doi}}
\newcommand{\imitator}{\textsf{IMITATOR}}
\newcommand{\SpaceEx}{\textsc{SpaceEx}}
\newcommand{\uppaal}{\textsc{Uppaal}}
 	\definecolor{colorok}{RGB}{80,80,150}
	\definecolor{colorok}{RGB}{0,0,0}
\newcommand{\eg}{\textcolor{colorok}{e.\,g.,}\xspace}
\newcommand{\ie}{\textcolor{colorok}{i.\,e.,}\xspace}
\newcommand{\st}{\textcolor{colorok}{s.t.}\xspace}
\newcommand{\wrt}{{w.r.t.}\xspace} %
\newcommand{\ouracks}{%
	We would like to thank Sudipta Chattopadhyay for helpful suggestions, Nicolas Markey for an interesting discussion on the proof of \cref{proposition:ET-opacity-computation}, Jiaying Li for his help with preliminary model conversion,
	and an anonymous reviewer of ATVA~2019 for suggesting \cref{remark:R2}.
	This work is partially supported
			by the ANR national research program PACS (ANR-14-CE28-0002),
			by the ANR-NRF research program ProMiS (ANR-19-CE25-0015),
			and
		by ERATO HASUO Metamathematics for Systems Design Project (No.\ JPMJER1603), JST.
}
\newcommand{\ourkeywords}{opacity, timed automata, \imitator{}, parameter synthesis}
\newcommand{\ourabstract}{%
	Information leakage can have dramatic consequences on systems security.
	Among harmful information leaks, the timing information leakage occurs whenever an attacker successfully deduces confidential internal information.
	In this work, we consider that the attacker has access (only) to the system execution time.
	We address the following timed opacity problem:
		given a timed system, a private location and a final location,
			synthesize the execution times from the initial location to the final location for which one cannot deduce whether the system went through the private location.
		We also consider the full timed opacity problem, asking whether the system is opaque for all execution times.
	We show that these problems are decidable for timed automata (TAs) but become undecidable when one adds parameters, yielding parametric timed automata (PTAs).
	We identify a subclass with some decidability results.
	We then devise an algorithm for synthesizing PTAs parameter valuations guaranteeing that the resulting TA is opaque.
	We finally show that our method can also apply to program analysis.
}
\def\orcidID#1{\,\smash{\href{https://orcid.org/#1}{\protect\raisebox{%
	+1.25pt%
}{\protect\includegraphics{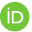}}}}}
\begin{document}
\sloppy

\title{Guaranteeing Timed Opacity using Parametric Timed Model Checking\arXivVersion{\thanks{%
	This is the author version of the manuscript of the same name published in ACM Transactions on Software Engineering and Methodology (ToSEM).
	The final version is available at \href{https://dl.acm.org/journal/tosem}{\nolinkurl{acm.org}}.
	\ouracks{}
	}}
}

\ifdefined\VersionAuthorforArXiV
	\author{\'Etienne Andr\'e\orcidID{0000-0001-8473-9555}$^1$,
	Didier Lime\orcidID{0000-0001-9429-7586}$^2$,
	Dylan Marinho\orcidID{0000-0002-2548-6196}$^1$
	and
	Jun Sun\orcidID{0000-0002-3545-1392}$^3$
	\\
		$^1$Université de Lorraine, CNRS, Inria, LORIA, Nancy, France\\
		$^2$École Centrale de Nantes, LS2N, UMR CNRS 6004, Nantes\\
		$^3$School of Information Systems, Singapore Management University\\
	}
	\date{}
\else
	\author{\'Etienne Andr\'e}
	\orcid{0000-0001-8473-9555}
	\affiliation{%
	\institution{Université de Lorraine, CNRS, Inria, LORIA, F-54000 Nancy}
	\country{France}
	}

	\author{Didier Lime}
	\orcid{0000-0001-9429-7586}
	\affiliation{%
	\institution{École Centrale de Nantes, LS2N, UMR CNRS 6004, Nantes}
	\country{France}}

	\author{Dylan Marinho}
	\orcid{0000-0002-2548-6196}
	\affiliation{%
	\institution{Université de Lorraine, CNRS, Inria, LORIA, F-54000 Nancy}
	\country{France}
	}

	\author{Jun Sun}
	\orcid{0000-0002-3545-1392}
	\affiliation{%
	\institution{School of Information Systems, Singapore Management University}
	\country{Singapore}
	}

\begin{abstract}
	\ourabstract{}
\end{abstract}
\begin{CCSXML}
<ccs2012>
   <concept>
       <concept_id>10002978.10002986.10002990</concept_id>
       <concept_desc>Security and privacy~Logic and verification</concept_desc>
       <concept_significance>500</concept_significance>
       </concept>
   <concept>
       <concept_id>10003752.10003766.10003773.10003775</concept_id>
       <concept_desc>Theory of computation~Quantitative automata</concept_desc>
       <concept_significance>500</concept_significance>
       </concept>
   <concept>
       <concept_id>10003752.10003790.10011192</concept_id>
       <concept_desc>Theory of computation~Verification by model checking</concept_desc>
       <concept_significance>500</concept_significance>
       </concept>
   <concept>
       <concept_id>10003752.10003790.10002990</concept_id>
       <concept_desc>Theory of computation~Logic and verification</concept_desc>
       <concept_significance>500</concept_significance>
       </concept>
 </ccs2012>
\end{CCSXML}

\ccsdesc[500]{Security and privacy~Logic and verification}
\ccsdesc[500]{Theory of computation~Quantitative automata}
\ccsdesc[500]{Theory of computation~Verification by model checking}
\ccsdesc[500]{Theory of computation~Logic and verification}

\keywords{\ourkeywords{}}

\fi
\maketitle

\arXivVersion{%
	\newcommand{\keywords}[1]
	{%
		\small\textbf{\textit{Keywords---}} #1
	}

	\begin{abstract}
		\ourabstract{}
	\end{abstract}

	\keywords{\ourkeywords{}}
}

\todo{This is the version with comments. To disable comments, comment out line~3 in the \LaTeX{} source.}

\ifdefined\VersionWithComments
	\tableofcontents{}
\fi
\section{Introduction}\label{section:introduction}

Timed systems often combine hard real-time constraints with other complications such as concurrency.
Information leakage can have dramatic consequences on the security of such systems.
Among harmful information leaks, the \emph{timing information leakage} is the ability for an attacker to deduce internal information depending on timing information.
In this work, we focus on timing leakage through the total execution time, \ie{} when a system works as an almost black-box and the ability of the attacker is limited to know the model and observe the total execution time.

\paragraph{Opacity}
In its most general form on partially observed labeled transitions systems, given a set of runs that reveal a secret (\eg{} they perform a secret action or visit a secret state), \emph{opacity} states that if there exists a run of the system that reveals the secret (\ie{} belongs to the given secret set), there exists another run, with the same observation, that does not reveal that secret~\cite{BKMR08}.
This secret is completely generic and, depending on its actual definition, properties and their decidability can differ.

In our setting, we define a form of opacity in which the observation is only \emph{the time to reach a designated location}.

\paragraph{Contributions for TAs}
In this work, we consider the setting of timed automata (TAs), which is a popular extension of finite-state automata with clocks~\cite{AD94}.
We consider the following version of timed opacity:
	given a TA, a private location denoting the execution of some secret behavior and a final location denoting the completion of the execution,
	the TA is opaque for a given execution time $d$ (\ie{} the time of a run from the initial location to the final location) if there exist two runs of duration~$d$ from the initial location to the final location, one going through the private location, and another run \emph{not} going through the private location.
That is, for this particular execution time, the system is opaque if one cannot deduce whether the system went through the private location.
Such a notion of timed opacity can be used to capture many interesting security problems: for instance, it is possible to deduce whether a secret satisfies a certain condition based on whether a certain branch is visited or not.

To be explicit, the attacker knows a TA model of the system, and can observe the execution time from the system start until it reaches some particular final location.
No other actions can be observed.
Then, the system is timed opaque if the attacker cannot deduce whether the system has visited some particular private location.
From a higher-level point of view, this means that the attacker cannot deduce some private information, such as whether some location has been visited, or whether some branch of a given program was visited, by only observing the execution time.
In practice, this corresponds to a setting where the attacker may interact with some computational process on a remote machine (\eg{} a server) and receives the responses only at the end of the process (\eg{} a server message is received).
\label{newtext:attacker}

We consider two problems based on this notion of timed opacity:
\begin{enumerate}
	\item a computation problem: the computation of the set of possible execution times for which the system is timed opaque; and
	\item a decision problem: whether the TA is timed opaque for all execution times (referred to as full timed opacity).
\end{enumerate}

We first prove that these problems can be effectively solved for TAs.
We implement our procedure and apply it to a set of benchmarks containing notably a set of Java programs known for their (absence of) timing information leakage.

\paragraph{Contributions for parametric TAs}
As a second setting, we consider a higher-level version of these problems by allowing (internal) timing parameters in the system, which can model uncertainty or unknown constants at early design stage.
The setting becomes parametric timed automata (PTAs)~\cite{AHV93}.

On the theoretical side, we answer an existential parametric version of the two aforementioned problems, that is, the existence of (at least) one parameter valuation for which the TA is (fully) timed opaque.
Although we show that these problems are in general undecidable, we exhibit a subclass with some decidability results.

Then, we address a practical problem:
	given a timed system with timing parameters, a private location and a final location,
		synthesize the timing parameters and the execution times for which one cannot deduce whether the system went through the private location.
We devise a general procedure not guaranteed to terminate, but that behaves well on examples from the literature.

\paragraph{Summary of the contributions}
To sum up, this manuscript proposes the following contributions:
\begin{enumerate}
	\item a notion of timed opacity, and a notion of full timed opacity for TAs;
	\item a procedure to solve the timed opacity computation problem for TAs, and a procedure to answer the full timed opacity decision problem for TAs;
	\item a study of two theoretical decision problems extending the two aforementioned problems to the parametric setting, and exhibition of a decidable subclass;
	\item a practical algorithm to synthesize parameter valuations and execution times for which the TA is guaranteed to be opaque;
	\item a set of experiments on a set of benchmarks, including PTAs translations from Java programs.
\end{enumerate}

This manuscript is an extension of~\cite{AS19} with the following improvements.
\begin{itemize}
	\item We provide all proofs of the results published in~\cite{AS19}.

	\item We %
		extend the theoretical part, by considering not one problem (as in~\cite{AS19}) but two versions (timed opacity \wrt{} a set of execution times, and full timed opacity), both for TAs and PTAs (including the subclass of L/U-PTAs).

	\item We propose a more elegant proof of \cref{proposition:ET-opacity-computation} (formerly \cite[Proposition~1]{AS19}), based on RA arithmetic~\cite{W99}.

	\item On the practical side, we give hints to extend our construction to a richer framework (\cref{ss:richer}).
\end{itemize}

\paragraph*{Outline}
After reviewing related works in \cref{section:related},
\cref{section:preliminaries} recalls necessary concepts and \cref{section:problem} introduces the problem.
\cref{section:TA} addresses timed opacity for timed automata.
We then address the parametric version of timed opacity, with theory studied in \cref{section:theory,section:theoryFullOpacity},
algorithmic in~\cref{section:synthesis}
and experiments in \cref{section:experiments}.
\cref{section:conclusion} concludes the paper.

\section{Related works}\label{section:related}

\paragraph{Opacity and timed automata}
This work is closely related to the line of work on defining and analyzing information flow in timed automata.
It is well-known (see \eg{} \cite{Kocher96,FS00,BB07,KPJJ13,BCLR15}) that time is a potential attack vector against secure systems.
That is, it is possible that a non-interferent (secure) system can become interferent (insecure) when timing constraints are added~\cite{GMR07}.

In \emph{non-interference}, actions are partitioned into two levels of privilege, \emph{high} and \emph{low}, and we require that the system in which high-level actions are removed is equivalent to the system in which they are hidden (\ie{} replaced by an unobservable action).
Different equivalences lead to different flavors of non-interference. 
In~\cite{BDST02,BT03}, a first notion of \emph{timed} non-interference is proposed for TAs.
This notion is extended to PTAs in~\cite{AK20}, with a semi-algorithm.

In~\cite{GMR07}, Gardey \emph{et al.}\ define timed strong non-deterministic non-interference (SNNI) based on timed language equivalence between the automaton with hidden low-level actions and the automaton with removed low-level actions. Furthermore, they show that the problem of determining whether a timed automaton satisfies SNNI is undecidable. In contrast, timed cosimulation-based SNNI, timed bisimulation-based SNNI and timed state SNNI are decidable.
Classical SNNI is the one corresponding to the equality of the languages of the two systems.
As such it is clearly a special case of opacity in which the secret runs are those containing a high-level action~\cite{BKMR08}.
Other equivalence relations (namely (timed) cosimulation, (timed) bisimulation, sets of states) are not as easily relatable to opacity.
No implementation is provided in~\cite{GMR07}.

In~\cite{Cassez09}, it is proved that it is undecidable whether a TA is opaque, for the following definition of opacity: the system is opaque if an attacker cannot deduce whether some set of actions was performed, by only observing a given set of observable actions together with their timestamp.
This problem is proved undecidable even for the restricted class of event-recording automata~\cite{AFH99}, which is a subclass of TAs.
No implementation nor procedure is provided.
In contrast, our definition of opacity is decidable for TAs, notably because in our setting the attacker power is more restricted (they can only observe the ``execution time''); in addition, our definition of opacity has some practical relevance nonetheless, when an attacker is able to interact remotely with the system under attack, and is therefore able to measure the response time.

In~\cite{AEYM21}, the authors consider a \emph{time-bound} opacity, where the attacker has to disclose the secret before an upper bound, using a partial observability.
The authors prove that this problem is decidable for~TAs.
A construction and an algorithm are also provided to solve it; a case study is verified using \SpaceEx{}~\cite{FLDCRLRGDM11}.
In contrast, our definition of opacity only assumes observation of the execution time, does not assume any time-bounded setting, and our most general problem is parametric.

In~\cite{NNV17}, the authors propose a type system dealing with non-determinism and (continuous) real-time, the adequacy of which is ensured using non-interference.
We share the common formalism of TAs; however, we mainly focus on leakage as execution time, and we \emph{synthesize} internal parts of the system (clock guards), in contrast to~\cite{NNV17} where the system is fixed.

In~\cite{VNN18}, Vasilikos \emph{et al.}\ define the security of timed automata in term of information flow using a bisimulation relation over a set of observable nodes and develop an algorithm for deriving a sound constraint for satisfying the information flow property locally based on relevant transitions.
\label{newtext:Vasilikos}

In~\cite{GSB18}, Gerking \emph{et al.}\ study non-interference properties with input, high and low actions and provide a resolution method reducing a secure behavior to an unreachability construction. The proof-of-concept consists in the exhibition of a test automaton with a dedicated location that indicates violations of noninterference whenever it is reachable during execution.
Then, \uppaal{}~\cite{LPY97} is used to obtain the answer.

In~\cite{BCLR15}, Benattar \emph{et al.}\ study the control synthesis problem of timed automata for SNNI.
That is, given a timed automaton, they propose a method to automatically generate a (largest) sub-system such that it is non-interferent, if possible.
Different from the above-mentioned work, our work considers \emph{parametric} timed automata, \ie{} timed systems with unknown design parameters, and focuses on synthesizing parameter valuations which guarantee information flow property.
Compared to~\cite{BCLR15}, our approach is more realistic as it does not require change of program structure.
Rather, our result provides guidelines on how to choose the timing parameters (\eg{} how long to wait after certain program statements) for avoiding information leakage.
In~\cite{WZ17,WZA18}, Wang \emph{et al.}\ investigate interesting opacity problems for real-time automata.
These works come with a dedicated Python implementation.
Although their definition shares similarities with ours, real-time automata are a severely restricted formalism compared to TAs.
Indeed, timed aspects are only considered by interval restrictions over the total elapsed time along transitions.
Real-time automata can be seen as a subclass of TAs with a single clock, reset at each transition.
Also, parameters are not considered in their work.

To the best of our knowledge, our approach is the first work on parametric model checking for timed automata for information flow property.
In addition, and in contrast to most of the aforementioned works, our approach comes with an implementation.

\paragraph{Execution times and timed automata}
In this paper, we need to compute execution times in timed automata, \ie{} the durations of all runs reaching the final state.
Despite its natural aspect, this problem seems sparsely investigated in the literature. Both~\cite{BDR08,Rosenmann19} deal with the computation of duration sets; we do reuse some of the reasoning from~\cite{BDR08} in our proof of \cref{proposition:ET-opacity-computation}.\label{oldtext:BDR08,Rosenmann19}
Conversely, results from~\cite{BR07} cannot be used in our work: while the equality must be forbidden in PTCTL formulae to make the problems decidable, equality constraints in a PTCTL formula would be required for such an approach to answer our problems.
Furthermore, the results presented in~\cite{BHJJM21} cannot be applied to our study, since they concern one-clock timed automata.

\paragraph{Mitigating information leakage}

Complex systems may exhibit security problems through information leakage due to the presence of unintended communication media, called side channels.
An example is time side channels in which measuring, \eg{} execution times, gives information on some sensitive information.\label{newtext:sidechannel}
In 2018, the Spectre vulnerability~\cite{KHFGGHHLMP20} exploited speculative execution to bring secret information into the cache; subsequently, cache-timing attacks were launched to exfiltrate these secrets.
Therefore, mitigation of timing attacks is of utmost importance.

Our work is related to work on mitigating information leakage through those time side channels~\cite{Aga00,MPSW05,CVBS09,WS17,WGSW18}.
	In~\cite{Aga00}, Agat \emph{et al.}\ proposed to eliminate time side channels through type-driven cross-copying.
	In~\cite{MPSW05}, Molnar~\emph{et al.}\ proposed, along the program counter model, a method for mitigating side channel through merging branches. A similar idea was proposed in~\cite{BRW06}.
	Coppens \emph{et al.}~\cite{CVBS09} developed a compiler backend for removing such leaks on x86 processors.
In~\cite{WS17}, Wang \emph{et al.}\ proposed to automatically generate masking code for eliminating side channels through program synthesis.
In~\cite{WGSW18}, Wu \emph{et al.}\ proposed to eliminate time side channels through program repair.
Different from the above-mentioned works, we reduce the problem of mitigating time side channels as a parametric model checking problem and solve it using parametric reachability analysis techniques.

This work is related to work on identifying information leakage through timing analysis~\cite{SPW18,CR11,ALKH16,ZGSW18,DSF16,DKMR15,GWW18}.
In~\cite{CR11}, Chattopadhyay and Roychoudhury applied model checking to perform cache timing analysis.
In~\cite{CJM16}, Chu \emph{et al.}\ performed similar analysis through symbolic execution.
In~\cite{ALKH16}, Abbasi \emph{et al.}\ apply the NuSMV model checker to verify integrated circuits against information leakage through side channels.
In~\cite{DKMR15}, a tool is developed to identify time side channel through static analysis. In~\cite{ZGSW18}, Sung \emph{et al.}\ developed a framework based on LLVM for cache timing analysis.
\section{Preliminaries}\label{section:preliminaries}

In this work, we assume a system is modeled in the form of a parametric timed automaton (PTA).
	In \cref{ss:Java2PTA}, we discuss how we can model programs with unknown design parameters (\eg{} a Java program with a statement \stylecode{Thread.sleep(n)} where \stylecode{n} is unknown) as PTA.

\subsection{Clocks, parameters and guards}\label{ss:clocks}

We assume a set~$\Clock = \{ \clock_1, \dots, \clock_\ClockCard \} $ of \emph{clocks}, \ie{} real-valued variables that all evolve over time at the same rate.
A clock valuation is a function
$\clockval : \Clock \rightarrow \Rgeqzero$.
We write $\ClocksZero$ for the clock valuation assigning $0$ to all clocks.
Given $d \in \Rgeqzero$, $\clockval + d$ denotes the valuation \st{} $(\clockval + d)(\clock) = \clockval(\clock) + d$, for all $\clock \in \Clock$.
Given $\resets \subseteq \Clock$, we define the \emph{reset} of a valuation~$\clockval$, denoted by $\reset{\clockval}{\resets}$, as follows: $\reset{\clockval}{\resets}(\clock) = 0$ if $\clock \in \resets$, and $\reset{\clockval}{\resets}(\clock)=\clockval(\clock)$ otherwise.

We assume a set~$\Param = \{ \param_1, \dots, \param_\ParamCard \} $ of \emph{parameters}, \ie{} unknown constants.
A parameter {\em valuation} $\pval$ is a function
$\pval : \Param \rightarrow \setqplus$.
We assume ${\compOp} \in \{<, \leq, =, \geq, >\}$.
A guard~$\guard$ is a constraint over $\Clock \cup \Param$ defined by a conjunction of inequalities of the form
$\clock \compOp \sum_{1 \leq i \leq \ParamCard} \alpha_i \param_i + d$\label{def:clockguards}, with
	$\param_i \in \Param$,
	and
	$\alpha_i, d \in \setz$.
Given~$\guard$, we write~$\clockval\models\pval(\guard)$ if %
the expression obtained by replacing each~$\clock$ with~$\clockval(\clock)$ and each~$\param$ with~$\pval(\param)$ in~$\guard$ evaluates to true.

\subsection{Parametric timed automata}

Parametric timed automata (PTAs) extend timed automata with parameters within guards and invariants in place of integer constants~\cite{AHV93}.

\subsubsection{Syntax}
\begin{definition}[PTA]\label{def:uPTA}
	A PTA $\A$ is a tuple \mbox{$\A = (\Actions, \Loc, \locinit, \locfinal, \Clock, \Param, \invariant, \Edges)$}, where:
	\begin{ienumeration}
		\item $\Actions$ is a finite set of actions,
		\item $\Loc$ is a finite set of locations,
		\item $\locinit \in \Loc$ is the initial location,
		\item $\locfinal \in \Loc$ is the (unique) final location,
		\item $\Clock$ is a finite set of clocks,
		\item $\Param$ is a finite set of parameters,
		\item $\invariant$ is the invariant, assigning to every $\loc\in \Loc$ a guard $\invariant(\loc)$,
		\item $\Edges$ is a finite set of edges  $\edge = (\loc,\guard,\action,\resets,\loc')$
		where~$\loc,\loc'\in \Loc$ are the source and target locations, $\action \in \Actions$, $\resets\subseteq \Clock$ is a set of clocks to be reset, and $\guard$ is a guard.
	\end{ienumeration}
\end{definition}
\begin{figure}[tb]

	\centering
	 \footnotesize

	\begin{tikzpicture}[scale=1, xscale=2.5, yscale=1.5, auto, ->, >=stealth']

		\node[location, initial] at (0, -.5) (s0) {$\loc_0$};

		\node[location, private] at (.75, 0) (s2) {$\loc_2$};

		\node[location, final] at (1.5, -.5) (s1) {$\loc_1$};

		\node[invariant, above=of s0] {$\styleclock{\clock} \leq 3$};
		\node[invariant, above=of s2] {$\styleclock{\clock} \leq 3$};

		\path (s0) edge[bend left] node[align=center]{$\styleclock{\clock} \geq \styleparam{\param_1}$} (s2);
		\path (s0) edge[] node[below, align=center]{$\styleclock{\clock} \geq \styleparam{\param_2}$} (s1);
		\path (s2) edge[bend left] node[align=center]{} (s1);

	\end{tikzpicture}
	\caption{A PTA example}
	\label{figure:example-PTA}

\end{figure}
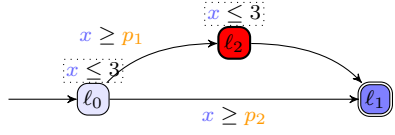
\begin{example}
	Consider the PTA in \cref{figure:example-PTA} (inspired by \cite[Fig.~1b]{GMR07}), using one clock~$\clock$ and two parameters~$\param_1$ and~$\param_2$.
	$\loc_0$ is the initial location, while we assume that~$\loc_1$ is the (only) \emph{final} location, \ie{} a location in which an attacker can measure the execution time from the initial location.
\end{example}
\paragraph{L/U-PTAs}

For some theoretical problems solved in \cref{section:theory,section:theoryFullOpacity}, we will consider the subclass of PTAs called ``lower-bound/upper-bound parametric timed automata'' (L/U-PTAs), introduced in~\cite{HRSV02}.

\begin{definition}[L/U-PTA~\cite{HRSV02}]\label{def:LUPTA}
	An \emph{L/U-PTA} is a PTA where the set of parameters is partitioned into lower-bound parameters and upper-bound parameters,
	where each upper-bound (resp.\ lower-bound) parameter~$\param_i$ must be such that,
    for every guard or invariant constraint $\clock \compOp \sum_{1 \leq i \leq \ParamCard} \alpha_i \param_i + d$, we have:
		${\compOp} \in \{ \leq, < \}$ implies $\alpha_i \geq 0$ (resp.\ $\alpha_i \leq 0$) and
		${\compOp} \in \{ \geq, > \}$ implies $\alpha_i \leq 0$ (resp.\ $\alpha_i \geq 0$).
\end{definition}
\begin{example}
	The PTA in \cref{figure:example-PTA} is an L/U-PTA with $\{ \param_1, \param_2 \}$ as lower-bound parameters, and $\emptyset$ as upper-bound parameters.

	The PTA in \cref{figure:example-Java:PTA} is not an L/U-PTA, because $\styleparam{\param}$ is compared to $\styleclock{cl}$ both as a lower-bound (in ``$\styleparam{p} \times 32^2 \leq \styleclock{cl}$'') and as an upper-bound (``$\styleclock{cl} \leq \styleparam{p} \times 32^2 + \styleparam{\epsilon}$'').
\end{example}

Given\ a parameter valuation~$\pval$, we denote by $\valuate{\A}{\pval}$ the non-parametric structure where all occurrences of any parameter~$\param_i$ have been replaced by~$\pval(\param_i)$.
	We denote as a \emph{timed automaton} any structure $\valuate{\A}{\pval}$, by assuming a rescaling of the constants: by multiplying all constants in $\valuate{\A}{\pval}$ by the least common multiple of their denominators, we obtain an equivalent (integer-valued) TA, as defined in~\cite{AD94}.

\paragraph{Synchronized product of PTAs}

The \emph{synchronous product} (using strong broadcast, \ie{} synchronization on a given set of actions), or \emph{parallel composition}, of several PTAs gives a PTA.
\begin{definition}[synchronized product of PTAs]\label{definition:parallel}
	Let $N \in \setn$.
	Given a set of PTAs $\A_i = (\Actions_i, \Loc_i, (\locinit)_i, (\locfinal)_i, \Clock_i, \Param_i, \invariant_i, \Edges_i)$, $1 \leq i \leq N$,
	and a set of actions $\Actions_s$,
	the \emph{synchronized product} of $\A_i$, $1 \leq i \leq N$,
	denoted by $\A_1 \parallel_{\Actions_s} \A_2 \parallel_{\Actions_s} \cdots \parallel_{\Actions_s} \A_N$,
	is the tuple
		$(\Actions, \Loc, \locinit, \locfinal, \Clock, \Param, \invariant, \Edges)$, where:
	\begin{enumerate}
		\item $\Actions = \bigcup_{i=1}^N\Actions_i$,
		\item $\Loc = \prod_{i=1}^N \Loc_i$,
		\item $\locinit = ((\locinit)_1, \dots, (\locinit)_N)$,
		\item $\locfinal = ((\locfinal)_1, \dots, (\locfinal)_N)$,
		\item $\Clock = \bigcup_{1 \leq i \leq N} \Clock_i$,
		\item $\Param = \bigcup_{1 \leq i \leq N} \Param_i$,
		\item $\invariant((\loc_1, \dots, \loc_N)) = \bigwedge_{i = 1}^{N} \invariant_i(\loc_i)$ for all $(\loc_1, \dots, \loc_N) \in \Loc$,
	\end{enumerate}
	and $\Edges{}$ is defined as follows.
	For all $\action \in \Actions$,
	let $\ActionsIndices_\action$ be the subset of indices $i \in 1, \dots, N$
	such that $\action \in \Actions_i$.
	For all  $\action \in \Actions$,
	for all $(\loc_1, \dots, \loc_N) \in \Loc$,
	for all \mbox{$(\loc_1', \dots, \loc_N') \in \Loc$},
	$\big((\loc_1, \dots, \loc_N), \guard, \action, \resets, (\loc'_1, \dots, \loc'_N)\big) \in \Edges$
	if:
	\begin{itemize}
		\item if $\action \in \Actions_s$, then
		\begin{enumerate}
			\item for all $i \in \ActionsIndices_\action$, there exist $\guard_i, \resets_i$ such that $(\loc_i, \guard_i, \action, \resets_i, \loc_i') \in \Edges_i$, $\guard = \bigwedge_{i \in \ActionsIndices_\action} \guard_i$, $\resets = \bigcup_{i \in \ActionsIndices_\action}\resets_i$, and,
			\item for all $i \not\in \ActionsIndices_\action$, $\loc_i' = \loc_i$.
		\end{enumerate}
		\item otherwise (if $\action \notin \Actions_s$), then there exists $i \in \ActionsIndices_\action$ such that
		\begin{enumerate}
			\item there exist $\guard_i, \resets_i$ such that $(\loc_i, \guard_i, \action, \resets_i, \loc_i') \in \Edges_i$, $\guard = \guard_i$, $\resets = \resets_i$, and,
			\item for all $j \neq i$, $\loc_j' = \loc_j$.
		\end{enumerate}
	\end{itemize}
\end{definition}

That is, synchronization is only performed on~$\Actions_s$, and other actions are interleaved.
\subsubsection{Concrete semantics of TAs}

Let us now recall the concrete semantics of TA.

\begin{definition}[Semantics of a TA]\label{def:semantics:TA}
	Given a PTA $\A = (\Actions, \Loc, \locinit, \locfinal, \Clock, \Param, \invariant, \Edges)$,
	and a parameter valuation~\(\pval\),
	the semantics of $\valuate{\A}{\pval}$ is given by the timed transition system (TTS) \cite{HMP91} $\TTS_{\valuate{\A}{\pval}}=(\States, \sinit, \flecheRel)$, with
	\begin{itemize}
		\item $\States = \{ (\loc, \clockval) \in \Loc \times \Rgeqzero^\ClockCard \mid \clockval \models \valuate{\invariant(\loc)}{\pval} \}$, %
		\item $\sinit = (\locinit, \ClocksZero) $,
		\item  $\flecheRel$ consists of the discrete and (continuous) delay transition relations:
		\begin{ienumeration}
			\item discrete transitions: $(\loc,\clockval) \longueflecheRel{\edge} (\loc',\clockval')$, %
				if $(\loc, \clockval) , (\loc',\clockval') \in \States$, and there exists ${\edge = (\loc,\guard,\action,\resets,\loc') \in \Edges}$, such that $\clockval'= \reset{\clockval}{\resets}$, and $\clockval\models\pval(\guard$).
			\item delay transitions: $(\loc,\clockval) \longueflecheRel{d} (\loc, \clockval+d)$, with $d \in \Rgeqzero$, if $\forall d' \in [0, d], (\loc, \clockval+d') \in \States$.
		\end{ienumeration}
	\end{itemize}
\end{definition}

    Moreover we write $(\loc, \clockval)\longuefleche{(d, \edge)} (\loc',\clockval')$ for a combination of a delay and discrete transition if
		$\exists  \clockval'' :  (\loc,\clockval) \longueflecheRel{d} (\loc,\clockval'') \longueflecheRel{\edge} (\loc',\clockval')$.

Given a TA~$\valuate{\A}{\pval}$ with concrete semantics $(\States, \sinit, \flecheRel)$, we refer to the states of~$\TTS_{\valuate{\A}{\pval}}$ as the \emph{concrete states} of~$\valuate{\A}{\pval}$.
A \emph{run} of~$\valuate{\A}{\pval}$ is a (finite or infinite\label{newtext:finite}) alternating sequence of concrete states of $\valuate{\A}{\pval}$ and pairs of delays and edges starting from the initial state~$\sinit$ of the form
$\concstate_0, (d_0, \edge_0), \concstate_1, \cdots$
with
$i = 0, 1, \dots$, $\edge_i \in \Edges$, $d_i \in \Rgeqzero$ and
	$\concstate_i \longuefleche{(d_i, \edge_i)} \concstate_{i+1}$.

Given a state~$\concstate = (\loc, \clockval)$, we say that $\concstate$ is reachable in~$\valuate{\A}{\pval}$ if $\concstate$ appears in a run of $\valuate{\A}{\pval}$.
By extension, we say that $\loc$ is reachable; and by extension again, given a set~$\somelocs$ of locations, we say that $\somelocs$ is reachable if there exists $\loc \in \somelocs$ such that $\loc$ is reachable in~$\valuate{\A}{\pval}$.\footnote{%
	We use an existential quantification over the set~$\somelocs$ of locations, \ie{} the set of locations is reachable if at least one target location is reachable.
	This is a standard definition for reachability synthesis (see, \eg{} \cite{JLR15}), and our \cref{algo:SynthOp} uses a singleton set for $\somelocs$ anyway.
	\label{newtext:EF:existential}
}
Given $\loc, \loc' \in \Loc$ and a run~$\varrun$, we say that $\loc$ is reached on the way to~$\loc'$ in~$\varrun$ if $\varrun$ is of the form $(\loc_0, \clockval_0), (d_0, \edge_0), (\loc_1, \clockval_1), \cdots, (\loc_m, \clockval_m), (d_m, \edge_m), \cdots (\loc_n, \clockval_n)$ %
	for some~$m,n \in \setn$ such that $\loc_m = \loc$, $\loc_n = \loc'$ and $\forall 0 \leq i \leq m-1, \loc_i \neq \loc'$.
Conversely, $\loc$ is avoided on the way to~$\loc'$ in~$\varrun$ if $\varrun$ is of the form $(\loc_0, \clockval_0), (d_0, \edge_0), (\loc_1, \clockval_1), \cdots, (\loc_n, \clockval_n )$ %
	with $\loc_n = \loc'$ and $\forall 0 \leq i \leq n, \loc_i \neq \loc$.
Given $\loc, \loc' \in \Loc$, we say that $\loc$ is \emph{reachable on the way to~$\loc'$} in~$\valuate{\A}{\pval}$ if there exists a run~$\varrun$ of~$\valuate{\A}{\pval}$ for which $\loc$ is reached on the way to~$\loc'$ in~$\varrun$.
Otherwise, $\loc$ is \emph{unreachable on the way to~$\loc'$}.\label{text:reachableontheway}

The \emph{duration} of a finite run $\varrun : \concstate_0, (d_0, \edge_0), \concstate_1, \cdots, (d_{i-1}, \edge_{i-1}), (\loc_i, \clockval_i)$ is $\duration(\varrun) = \sum_{0 \leq j \leq i-1} d_j$.
We also say that $\loc_i$ is reachable in time~$\duration(\varrun)$.
\begin{example}
	Consider again the PTA~$\A$ in \cref{figure:example-PTA}, and let $\pval$ be such that $\pval(\param_1) = 1$ and $\pval(\param_2) = 2$.
	Consider the following run~$\varrun$ of $\valuate{\A}{\pval}$:
	$(\loc_0, \clock = 0) , (1.4, \edge_2) , (\loc_2, \clock = 1.4) , (1.3, \edge_3) , (\loc_1, \clock = 2.7)$, where
		$\edge_2$ is the edge from $\loc_0$ to~$\loc_2$ in \cref{figure:example-PTA},
		and
		$\edge_3$ is the edge from $\loc_2$ to~$\loc_1$.
		We write ``$\clock = 1.4$'' instead of ``$\clockval$ such that $\clockval(\clock) = 1.4$''.
	We have $\duration(\varrun) = 1.4 + 1.3 = 2.7$.
	In addition, $\loc_2$ is reached on the way to~$\loc_1$ in~$\varrun$.
\end{example}
\subsubsection{Timed automata regions}\label{sss:regions}

Let us next recall the concept of regions and the region graph~\cite{AD94}.

Given a TA~$\valuate{\A}{\pval}$, for a clock $\clock_i$, we denote by $c_i$ the largest constant to which $\clock_i$ is compared within the guards and invariants of~$\valuate{\A}{\pval}$ (that is, $c_i = \max_{i}( \{\ d_i \mid \clock \compOp d_i \text{~appears in a guard or invariant of~}\valuate{\A}{\pval} \}$).
Given a clock valuation $\clockval$ and a clock~$\clock_i$, let $\integralp{\clockval(\clock_i)}$ and $\fract{\clockval(\clock_i)}$ denote respectively the integral part and the fractional part of~$\clockval(\clock_i)$.
\begin{example}\label{example:largestConstant}
	Consider again the PTA in \cref{figure:example-PTA}, and let $\pval$ be such that $\pval(\param_1) = 2$ and $\pval(\param_2) = 4$.
	In the TA $\valuate{\A}{\pval}$, the clock $\clock$ is compared to the constants in $\set{2, 3, 4}$. 
	In that case, $c = 4$ is the largest constant to which the clock $\clock$ is compared.
\end{example}
\begin{definition}[Region equivalence]\label{definition:region-equivalence}
	We say that two clock valuations $\clockval$ and $\clockval'$ are equivalent,
	denoted $\clockval \equivalent \clockval'$,
	if the following three conditions hold for any clocks $x_i,x_j$:
	\begin{enumerate}
		\item either
		\begin{ienumeration} 
			\item $\integralp{\clockval(\clock_i)} = \integralp{\clockval'(\clock_i)}$
			or 
			\item $\clockval(\clock_i)>c_i$ and $\clockval'(\clock_i)>c_i$
		\end{ienumeration}
		\item $\fract{\clockval(\clock_i)}\leq\fract{\clockval(\clock_j)}$ iff $\fract{\clockval'(\clock_i)}\leq\fract{\clockval'(\clock_j)}$
		\item $\fract{\clockval(\clock_i)}=0$ iff $\fract{\clockval'(\clock_i)}=0$
	\end{enumerate}
\end{definition}
The equivalence relation $\equivalent$ is extended to the states of $\TTS_{\valuate{\A}{\pval}}$: if ${\concstate = (\loc,\clockval)}, {\concstate' = (\loc',\clockval')}$ are two states of $\TTS_{\valuate{\A}{\pval}}$, we write ${\concstate\equivalent \concstate'}$ iff ${\loc=\loc'}$ and ${\clockval\equivalent \clockval'}$.

We denote by $\class{\concstate}$ the equivalence class of $\concstate$ for $\equivalent$. %
A \emph{region} is an equivalence class $\class{\concstate}$ of $\equivalent$.
The set of all regions is denoted $\Regions_{\valuate{\A}{\pval}}$.
Given a state $\concstate = (\loc, \clockval)$ and $d \geq 0$, we write $\concstate + d$ to denote $(\loc, \clockval + d)$.%

\begin{definition}[Region graph \cite{BDR08}]\label{definition:region-graph}
	The \emph{region graph} ${\RegionGraph_{\valuate{\A}{\pval}} = (\Regions_{\valuate{\A}{\pval}},\regionEdges_{\valuate{\A}{\pval}})}$ is a finite graph with:
	\begin{itemize}
		\item $\Regions_{\valuate{\A}{\pval}}$ as the set of vertices
		\item given two regions $\region=\class{\concstate},\region'=\class{\concstate'}\in\Regions_{\valuate{\A}{\pval}}$, we have $(\region,\region')\in\regionEdges_{\valuate{\A}{\pval}}$ if one of the following holds:
		\begin{itemize}
			\item if $\concstate \longueflecheRel{\edge} \concstate' \in \TTS_{\valuate{\A}{\pval}}$ for some $\edge \in \Edges$ (\emph{discrete instantaneous transition});
			
			\item if $\region'$ is a time successor of $\region$: $r\neq r'$ and there exists $d$ such that $\concstate + d\in \region'$ and ${\forall d'<d, \concstate+d'\in \region\cup\region'}$  (\emph{delay transition});

			\item $\region=\region'$ is unbounded: $\concstate = (\loc,\clockval)$ with $\clockval(\clock_i)>c_i$ for all~$x_i$  (\emph{equivalent unbounded regions}).
		\end{itemize}
	\end{itemize}
\end{definition}

\begin{figure}
	\centering
	\footnotesize

	\begin{tikzpicture}[scale=1, xscale=2.5, yscale=2.2, auto, ->, >=stealth']
	
	\node[location, initial] at (0, 0) (l0) {$\locinit$};
	
	\node[location] at (1, 0) (l1) {$\loc_1$};
	
	\node[location, final] at (2, 0) (l2) {$\locfinal$};

	\node[invariant, above=of l1] {$\styleclock{\clock_1} \leq 2$};
	
	\path (l0) edge node[align=center]{$\styleclock{\clock_2} \leq 1$\\$ \styleclock{\clock_1} \assign 0$} (l1);
	\path (l1) edge node[align=center]{$\styleclock{\clock_2} \leq 2$} (l2);
	
	\end{tikzpicture}
	\caption{A TA example}
	\label{figure:example-TA}
\end{figure}

\begin{figure}
	\begin{subfigure}{.3\textwidth}
		\begin{adjustbox}{width=\textwidth}
			\begin{tikzpicture}[line cap=round,line join=round,>=triangle 45,x=1cm,y=1cm]
			\begin{axis}[
			x=1cm,y=1cm,
			axis lines=middle,
			ymajorgrids=true,
			xmajorgrids=true,
			xmin=0,
			xmax=4,
			ymin=0,
			ymax=4,
			xtick={0,1,...,4},
			ytick={0,1,...,4},
			xlabel=$\clock_1$,
			ylabel=$\clock_2$,
			]
			
			\clip(0,0) rectangle (4,4);
			
			\draw [line width=2pt, color=blue] (0,0)-- (1,1);
			\end{axis}
			\end{tikzpicture}
		\end{adjustbox}
		\caption{Step 1: $\clock_1$ and $\clock_2$ linearly increase as long as $y\leq 1$}
	\label{figure:example-region-graph:1}
	\end{subfigure}
	\ 
	\begin{subfigure}{.3\textwidth}
		\begin{adjustbox}{width=\textwidth}
			\begin{tikzpicture}[line cap=round,line join=round,>=triangle 45,x=1cm,y=1cm]
			\begin{axis}[
			x=1cm,y=1cm,
			axis lines=middle,
			ymajorgrids=true,
			xmajorgrids=true,
			xmin=0,
			xmax=4,
			ymin=0,
			ymax=4,
			xtick={0,1,...,4},
			ytick={0,1,...,4},
			xlabel=$\clock_1$,
			ylabel=$\clock_2$,
			]
			
			\clip(0,0) rectangle (4,4);
			
			\draw [line width=2pt, color=red] (0,0)-- (0,1);
			\end{axis}
			\end{tikzpicture}
		\end{adjustbox}
		\caption{Step 2: $\clock_1$ is reset}
	\label{figure:example-region-graph:2}
	\end{subfigure}
	\ 
	\begin{subfigure}{.3\textwidth}
		\begin{adjustbox}{width=\textwidth}
			\begin{tikzpicture}[line cap=round,line join=round,>=triangle 45,x=1cm,y=1cm]
			\begin{axis}[
			x=1cm,y=1cm,
			axis lines=middle,
			ymajorgrids=true,
			xmajorgrids=true,
			xmin=0,
			xmax=4,
			ymin=0,
			ymax=4,
			xtick={0,1,...,4},
			ytick={0,1,...,4},
			xlabel=$\clock_1$,
			ylabel=$\clock_2$,
			]
			
			\clip(0,0) rectangle (4,4);
			
			\fill[line width=2pt,color=green,fill=green,pattern=north west lines,pattern color=green] (0,0) -- (0,1) -- (2,3) -- (2,2) -- cycle;
			\end{axis}
			\end{tikzpicture}
		\end{adjustbox}
		\caption{Step 3: $\clock_1$ and $\clock_2$ linearly increase as long as $\clock_1\leq 2$}
	\end{subfigure}
	
	\begin{subfigure}{.3\textwidth}
		\begin{adjustbox}{width=\textwidth}
			\begin{tikzpicture}[line cap=round,line join=round,>=triangle 45,x=1cm,y=1cm]
			\begin{axis}[
			x=1cm,y=1cm,
			axis lines=middle,
			ymajorgrids=true,
			xmajorgrids=true,
			xmin=0,
			xmax=4,
			ymin=0,
			ymax=4,
			xtick={0,1,...,4},
			ytick={0,1,...,4},
			xlabel=$\clock_1$,
			ylabel=$\clock_2$,
			]
			
			\clip(0,0) rectangle (4,4);
			
			\fill[line width=2pt,color=orange,fill=orange,pattern=north west lines,pattern color=orange] (0,0) -- (0,1) -- (1,2) -- (2,2) -- cycle;
			\end{axis}
			\end{tikzpicture}
		\end{adjustbox}
		\caption{Step 4: $\clock_2\leq 2$}
	\end{subfigure}
	\ 
	\begin{subfigure}{.3\textwidth}
		\begin{adjustbox}{width=\textwidth}
			\begin{tikzpicture}[line cap=round,line join=round,>=triangle 45,x=1cm,y=1cm]
			\begin{axis}[
			x=1cm,y=1cm,
			axis lines=middle,
			ymajorgrids=true,
			xmajorgrids=true,
			xmin=0,
			xmax=4,
			ymin=0,
			ymax=4,
			xtick={0,1,...,4},
			ytick={0,1,...,4},
			xlabel=$\clock_1$,
			ylabel=$\clock_2$,
			]
			
			\clip(0,0) rectangle (4,4);
			
			\fill[line width=2pt,color=orange,fill=orange,pattern=north west lines,pattern color=orange] (0,0) -- (0,1) -- (1,1) -- cycle;
			\draw [line width=2pt, color=green] (0,1)-- (1,1);
			\fill[line width=2pt,color=blue,fill=blue,pattern=north east lines,pattern color=blue] (0,1) -- (1,1) -- (1,2) -- cycle;
			\draw [line width=2pt, color=purple] (1,1) -- (1,2);
			\fill[line width=2pt,color=red,fill=red,pattern=north west lines,pattern color=red] (1,1) -- (1,2) -- (2,2) -- cycle;
			\end{axis}
			\end{tikzpicture}
		\end{adjustbox}
		\caption{Division into regions of the last constraint}
	\end{subfigure}
	\caption{Region abstraction of the TA in~\cref{figure:example-TA}}
	\label{figure:example-region-graph}
\end{figure}
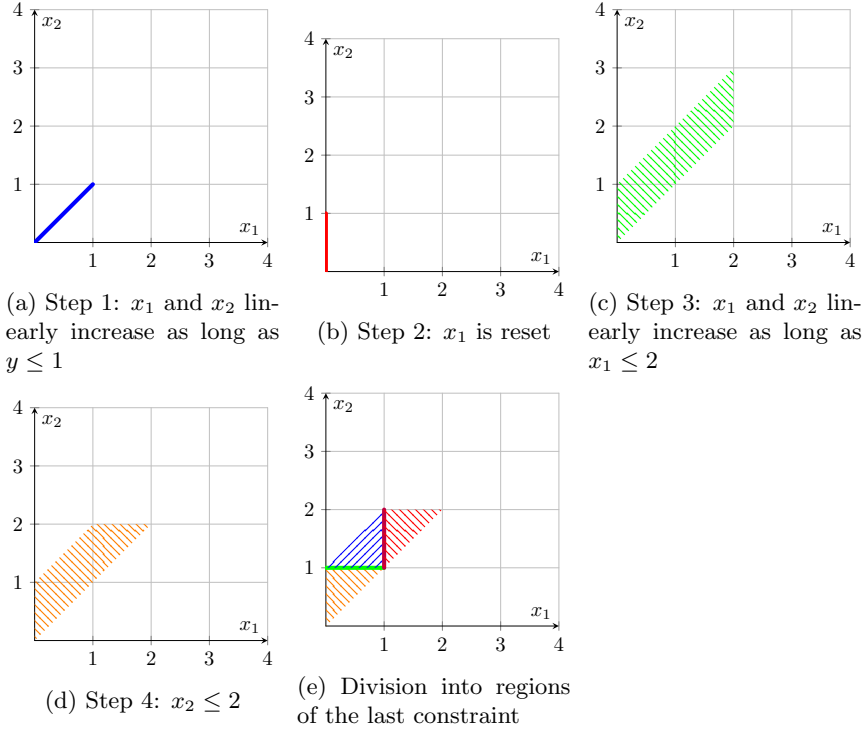

\begin{example}[Region abstraction]
	\label{example:region-abstraction}
	Consider the TA in~\cref{figure:example-TA}.
	We present in~\cref{figure:example-region-graph} some of its regions.
	For example, the region depicted in \cref{figure:example-region-graph:2} is the successor (by taking the discrete transition from~$\loc_0$ to~$\loc_1$, resetting $\clock_1$) of the region depicted in \cref{figure:example-region-graph:1}.

	Consider the valuations $\clockval_1$ and~$\clockval_2$ such that $\clockval_1(\clock_1) = 0$, $\clockval_1(\clock_2) = 0.35$, $\clockval_2(\clock_1) = 0$, and $\clockval_2(\clock_2) = 0.75$.
	These two valuations satisfy all three conditions of \cref{definition:region-equivalence}, and therefore $\clockval_1 \equivalent \clockval_2$.
	Note that they both belong to the region depicted in \cref{figure:example-region-graph:2}, and any pair of valuations in this region is (by definition) equivalent.
\end{example}

\subsection{Symbolic semantics}\label{ss:symbolic}

Let us now recall the symbolic semantics of PTAs (see \eg{} \cite{HRSV02,ACEF09}).

\paragraph{Constraints}
We first need to define operations on constraints.
A linear term over $\Clock \cup \Param$ is of the form $\sum_{1 \leq i \leq \ClockCard} \alpha_i \clock_i + \sum_{1 \leq j \leq \ParamCard} \beta_j \param_j + d$, with
	$\clock_i \in \Clock$,
	$\param_j \in \Param$,
	and
	$\alpha_i, \beta_j, d \in \setz$.
A \emph{constraint}~$\Constr$ (\ie{} a convex polyhedron\footnote{%
	Strictly speaking, we manipulate \emph{polytopes}, while polyhedra refer to 3-dimensional polytopes.
	However, for sake of consistency with the parametric timed model checking literature, and with the Parma polyhedra library (among others), we refer to these geometric objects as \emph{polyhedra}.
}) over $\Clock \cup \Param$ is a conjunction of inequalities of the form $\lterm \compOp 0$, where $\lterm$ is a linear term.

Given a parameter valuation~$\pval$, $\valuate{\Constr}{\pval}$ denotes the constraint over~$\Clock$ obtained by replacing each parameter~$\param$ in~$\Constr$ with~$\pval(\param)$.
Likewise, given a clock valuation~$\clockval$, $\valuate{\valuate{\Constr}{\pval}}{\clockval}$ denotes the expression obtained by replacing each clock~$\clock$ in~$\valuate{\Constr}{\pval}$ with~$\clockval(\clock)$.
We write $\clockval \models \valuate{\Constr}{\pval}$ whenever $\valuate{\valuate{\Constr}{\pval}}{\clockval}$ evaluates to true.
We say that %
$\pval$ \emph{satisfies}~$\Constr$,
denoted by $\pval \models \Constr$,
if the set of clock valuations satisfying~$\valuate{\Constr}{\pval}$ is nonempty.
We say that $\Constr$ is \emph{satisfiable} if $\exists \clockval, \pval \text{ s.t.\ } \clockval \models \valuate{\Constr}{\pval}$.

We define the \emph{time elapsing} of~$\Constr$, denoted by $\timelapse{\Constr}$, as the constraint over $\Clock$ and $\Param$ obtained from~$\Constr$ by delaying all clocks by an arbitrary amount of time.
That is,
\[\clockval' \models \valuate{\timelapse{\Constr}}{\pval} \text{ if } \exists \clockval : \Clock \to \setrplus, \exists d \in \setrplus \text { s.t. } \clockval \models \valuate{\Constr}{\pval} \land \clockval' = \clockval + d \text{.}\]
Given $\resets \subseteq \Clock$, we define the \emph{reset} of~$\Constr$, denoted by $\reset{\Constr}{\resets}$, as the constraint obtained from~$\Constr$ by resetting the clocks in~$\resets$ to $0$, and keeping the other clocks unchanged.
That is,
\[\clockval' \models \valuate{\reset{\Constr}{\resets}}{\pval} \text{ if } \exists \clockval : \Clock \to \setrplus \text { s.t. } \clockval \models \valuate{\Constr}{\pval} \land \forall \clock \in \Clock
	\left \{ \begin{array}{ll}
		 \clockval'(\clock) = 0 & \text{if } \clock \in \resets\\
		 \clockval'(\clock) = \clockval(\clock) & \text{otherwise.}
	\end{array} \right .\]
We denote by $\projectP{\Constr}$ the projection of~$\Constr$ onto~$\Param$, \ie{} obtained by eliminating the variables not in~$\Param$ (\eg{} using Fourier-Motzkin~\cite{Schrijver86}).

\begin{definition}[Symbolic state]
	A symbolic state is a pair $(\loc, \Constr)$ where $\loc \in \Loc$ is a location, and $\Constr$ its associated parametric zone.
\end{definition}
\begin{definition}[Symbolic semantics]\label{def:PTA:symbolic}
	Given a PTA $\A = (\Actions, \Loc, \locinit, \locfinal,
	\Clock, \Param, \invariant, \Edges)$,
	the symbolic semantics of~$\A$ is the labeled transition system called \emph{parametric zone graph}
	$ \PZG = ( \Edges, \SymbState, \symbstateinit, \symbtrans )$, with
	\begin{itemize}
		\item $\SymbState = \{ (\loc, \Constr) \mid \Constr \subseteq \invariant(\loc) \}$, %
		\item $\symbstateinit = \big(\locinit, \timelapse{(\bigwedge_{1 \leq i\leq\ClockCard}\clock_i=0)} \land \invariant(\loc_0) \big)$,
				and
		\item $\big((\loc, \Constr), \edge, (\loc', \Constr')\big) \in \symbtrans $ if $\edge = (\loc,\guard,\action,\resets,\loc') \in \Edges$ and
			\[\Constr' = \timelapse{\big(\reset{(\Constr \land \guard)}{\resets}\land \invariant(\loc')\big )} \land \invariant(\loc')\]
			with $\Constr'$ satisfiable.
	\end{itemize}

\end{definition}

That is, in the parametric zone graph, nodes are symbolic states, and arcs are labeled by \emph{edges} of the original PTA.

If $(\symbstate, \edge, \symbstate' ) \in \symbtrans $, we write $\Succ(\symbstate, \edge) = \symbstate'$.
By extension, we write $\Succ(\symbstate)$ for $\cup_{\edge \in \Edges} \Succ(\symbstate, \edge)$.\label{newtext:succ}

In the non-parametric timed automata setting, a zone can be seen as a \emph{convex union} of regions.
In the parametric setting, a parametric zone constrains both clocks and parameters in such a way that, for each admissible parameter valuation, the resulting projection on clocks is a zone.
\label{newtext:zones}

\begin{example}
	Consider again the PTA~$\A$ in \cref{figure:example-PTA}.
	The parametric zone graph of~$\A$ is given in \cref{figure:example-PTA:PZG}, where
		$\edge_1$ is the edge from $\loc_0$ to~$\loc_1$ in \cref{figure:example-PTA},
		$\edge_2$ is the edge from $\loc_0$ to~$\loc_2$,
		and
		$\edge_3$ is the edge from $\loc_2$ to~$\loc_1$.
	In addition, the symbolic states are:

	\begin{tabular}{r @{ } l  @{ }c  @{ }l  @{ }l}
		$\symbstate_0 =($ & $\loc_0$ & $,$ & $0 \leq \clock \leq 3 \land \param_1 \geq 0 \land \param_2 \geq 0 $ & $)$\\
		$\symbstate_1 =($ & $\loc_1$ & $,$ & $\clock \geq \param_2 \land 0 \leq \param_2 \leq 3 \land \param_1 \geq 0 $ & $)$\\
		$\symbstate_2 =($ & $\loc_2$ & $,$ & $3 \geq \clock \geq \param_1 \land 0 \leq \param_1 \leq 3 \land \param_2 \geq 0 $ & $)$\\
		$\symbstate_3 =($ & $\loc_1$ & $,$ & $\clock \geq \param_1 \land 0 \leq \param_1 \leq 3 \land \param_2 \geq 0 $ & $)$
		.
	\end{tabular}
\end{example}
\begin{figure}[tb]

	\centering
	 \footnotesize

	\begin{tikzpicture}[scale=1, xscale=2.5, yscale=2.2, auto, ->, >=stealth']

		\node[location, initial] at (0, 0) (s0) {$\symbstate_0$};

		\node[location] at (1, 0) (s2) {$\symbstate_2$};

		\node[location] at (0, -.5) (s1) {$\symbstate_1$};

		\node[location] at (1, -.5) (s3) {$\symbstate_3$};

		\path (s0) edge node[align=center]{$\edge_1$} (s1);
		\path (s0) edge[] node[align=center]{$\edge_2$} (s2);
		\path (s2) edge[] node[align=center]{$\edge_3$} (s3);

	\end{tikzpicture}
	\caption{Parametric zone graph of \cref{figure:example-PTA}}
	\label{figure:example-PTA:PZG}

\end{figure}
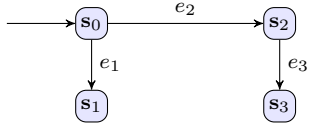
\subsection{Reachability synthesis}

We use reachability synthesis to solve the problems defined in \cref{section:problem}.
This procedure, called \EFsynth{}, takes as input a PTA~$\A$ and a set of target locations~$\somelocs$, and attempts to synthesize all parameter valuations~$\pval$ for which~$\somelocs$ is reachable in~$\valuate{\A}{\pval}$.
$\EFsynth(\A, \somelocs)$ was formalized in \eg{} \cite{JLR15} and is a procedure that may not terminate, but that computes an exact result (sound and complete) if it terminates.
\EFsynth{} traverses the \emph{parametric zone graph} of~$\A$.
\begin{example}
	Consider again the PTA~$\A$ in \cref{figure:example-PTA}.
	$\EFsynth(\A, \{ \loc_1 \}) = \param_1 \leq 3 \lor \param_2 \leq 3$.
	Intuitively, it corresponds to all parameter constraints in the parametric zone graph in \cref{figure:example-PTA:PZG} associated to symbolic states with location~$\loc_1$.
\end{example}

We finally recall the correctness of \EFsynth{}.

\begin{lemma}[\cite{JLR15}]\label{prop:EFsynth}
	Let $\A$ be a PTA, and let $\somelocs$ be a subset of the locations of~$\A$.
	Assume $\EFsynth(\A, \somelocs)$ terminates with result~$\K$.
	Then $\pval \models \K$ iff $\somelocs$ is reachable in~$\valuate{\A}{\pval}$.
\end{lemma}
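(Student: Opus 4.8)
The plan is to reduce the statement to the standard correspondence between the concrete semantics of $\valuate{\A}{\pval}$ and the symbolic semantics given by the parametric zone graph $\PZG$ (as established in, e.g., \cite{HRSV02,ACEF09}), together with the observation that, upon termination, $\EFsynth(\A, \somelocs)$ has explored a set of symbolic states covering the whole reachable part of $\PZG$ and returned $\K = \bigvee \projectP{\Constr}$, the disjunction ranging over all explored symbolic states $(\loc, \Constr)$ with $\loc \in \somelocs$.

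The crucial ingredient I would invoke is the run correspondence lemma: for every finite sequence of edges $\edge_0, \dots, \edge_{n-1}$ of $\A$, a concrete state $(\loc_n, \clockval_n)$ is reachable in $\valuate{\A}{\pval}$ by a run using exactly these edges if and only if there is a symbolic run $\symbstateinit \symbtrans^{\edge_0} \symbstate_1 \symbtrans^{\edge_1} \cdots \symbtrans^{\edge_{n-1}} (\loc_n, \Constr_n)$ in $\PZG$ with $\clockval_n \models \valuate{\Constr_n}{\pval}$ (and in particular $\pval \models \Constr_n$). This is proved by induction on $n$, the inductive step amounting to checking that time elapsing $\timelapse{\cdot}$, intersection with guards and invariants, and clock reset $\reset{\cdot}{\resets}$ commute appropriately with instantiation by $\pval$ — which is precisely the content of the definitions of $\timelapse{\Constr}$, $\reset{\Constr}{\resets}$ and of the symbolic transition relation. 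I would reuse this lemma verbatim from \cite{HRSV02} rather than reprove it.

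Given that lemma, both directions of the equivalence follow. For completeness, suppose $\somelocs$ is reachable in $\valuate{\A}{\pval}$: take a run reaching some $\loc \in \somelocs$; the correspondence yields a symbolic run ending in a symbolic state $(\loc, \Constr)$ with $\pval \models \Constr$, hence $\pval \models \projectP{\Constr}$; since $\EFsynth$ terminated, this symbolic state (or one subsuming it) was explored, so $\projectP{\Constr}$ is one of the disjuncts of $\K$ and $\pval \models \K$. For soundness, suppose $\pval \models \K$: then $\pval \models \projectP{\Constr}$ for some explored symbolic state $(\loc, \Constr)$ with $\loc \in \somelocs$, so the set of clock valuations satisfying $\valuate{\Constr}{\pval}$ is nonempty; picking $\clockval_n$ in it and applying the correspondence to the symbolic run that produced $(\loc, \Constr)$ gives a concrete run of $\valuate{\A}{\pval}$ reaching $\loc \in \somelocs$.

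The main obstacle — and the point a careful write-up must handle precisely — is the bookkeeping internal to $\EFsynth$: the procedure typically discards or merges symbolic states that are subsumed by already-seen ones, so ``explored $(\loc, \Constr)$'' must be read as ``some explored $(\loc, \Constr')$ with $\Constr \subseteq \Constr'$'', and one must argue that subsumption is sound, preserves the union of parameter projections, and that ``$\EFsynth$ terminates'' exactly means the state space is finite up to this subsumption. Since this reasoning, combined with the run correspondence above, is carried out in full in \cite{JLR15}, I would ultimately cite that reference for the complete argument.
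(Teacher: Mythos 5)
The paper does not prove this lemma at all: it is merely \emph{recalled} from~\cite{JLR15}, which is also where your argument ultimately points. Your sketch (concrete/symbolic run correspondence in the parametric zone graph, $\K$ as the union of $\projectP{\Constr}$ over explored target symbolic states, plus the subsumption bookkeeping) is the standard argument from that reference, so it is correct and consistent with what the paper relies on.
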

\section{Timed opacity problems}\label{section:problem}
\subsection{Definitions}

Let us first introduce two key concepts to define our notion of opacity.
$\PrivDurReach{\valuate{\A}{\pval}}{\loc}{\loc'}$ (resp.\ $\PubDurReach{\valuate{\A}{\pval}}{\loc}{\loc'}$) is the set of all the durations of the runs for which $\loc$ is reachable (resp.\ unreachable) on the way to~$\loc'$.
Formally:
\[\PrivDurReach{\valuate{\A}{\pval}}{\loc}{\loc'} = \{ d \mid \exists \varrun \text{ in } \valuate{\A}{\pval} \text{ such that } d = \duration(\varrun) \land \loc \text{ is reached on the way to~$\loc'$ in~}\varrun \}\]
and
\[\PubDurReach{\valuate{\A}{\pval}}{\loc}{\loc'} = \{ d \mid \exists \varrun \text{ in } \valuate{\A}{\pval} \text{ such that } d = \duration(\varrun) \land \loc \text{ is avoided on the way to~$\loc'$ in~}\varrun \}\text{.}\]

These concepts can be seen as the set of execution times from the initial location~$\locinit$ to a target location $\loc'$ while passing (resp.\ not passing) by a private location~$\loc$.
Observe that, from the definition of $\duration(\varrun)$, this ``execution time'' does not include the time spent in~$\loc'$.

\begin{example}\label{example:running:DurReach}
	Consider again the PTA in \cref{figure:example-PTA}, and let $\pval$ be such that $\pval(\param_1) = 1$ and $\pval(\param_2) = 2$.
	We have $\PrivDurReach{\valuate{\A}{\pval}}{\loc_2}{\loc_1} = [1, 3]$
	and
	$\PubDurReach{\valuate{\A}{\pval}}{\loc_2}{\loc_1} = [2, 3]$.
\end{example}

We now introduce the concept of ``timed opacity \wrt{} a set of durations (or execution times) $\Times$'': a system is opaque \wrt{} a given location $\locpriv$ on the way to~$\locfinal$ for execution times~$\Times$ whenever, for any duration in~$\Times$, it is not possible to deduce whether the system went through~$\locpriv$ or not.
In other words, if an attacker measures an execution time within~$\Times$ from the initial location to the target location~$\locfinal$, then this attacker is not able to deduce whether the system visited~$\locpriv$.

\begin{definition}[timed opacity \wrt{} $\Times$]\label{definition:ET-timed opacity}
	Given a TA~$\valuate{\A}{\pval}$, a private location~$\locpriv$,
		a target location~$\locfinal$
		and a set of execution times~$\Times$,
		we say that $\valuate{\A}{\pval}$ is \emph{opaque \wrt{} $\locpriv$ on the way to~$\locfinal$ for execution times~$\Times$}
		if $\Times \subseteq \PrivDurReach{\valuate{\A}{\pval}}{\locpriv}{\locfinal} \cap \PubDurReach{\valuate{\A}{\pval}}{\locpriv}{\locfinal}$.
\end{definition}
If one does not have the ability to tune the system (\ie{} change internal delays, or add some \stylecode{sleep()} or \stylecode{Wait()} statements in the program), one may be first interested in knowing whether the system is opaque for all execution times (\ie{} the durations of the runs from the initial location to the target location $\locfinal$).
In other words, if a system is \emph{fully opaque}, for any possible measured execution time, an attacker is not able to deduce anything on the system, in terms of visit of~$\locpriv$.

\begin{definition}[full timed opacity]\label{definition:opacity}
	Given a TA~$\valuate{\A}{\pval}$, a private location~$\locpriv$ and
	a target location~$\locfinal$,
	we say that $\valuate{\A}{\pval}$ is \emph{fully opaque \wrt{} $\locpriv$ on the way to~$\locfinal$}
	if $\PrivDurReach{\valuate{\A}{\pval}}{\locpriv}{\locfinal} = \PubDurReach{\valuate{\A}{\pval}}{\locpriv}{\locfinal}$.
\end{definition}

That is, a system is fully opaque if, for any execution time~$d$, a run of duration~$d$ reaches~$\locfinal$ after going through~$\locpriv$ iff another run of duration~$d$ reaches~$\locfinal$ without going through~$\locpriv$.

\begin{remark}\label{remark:R2}
	This definition is symmetric: a system is not opaque iff an attacker can deduce $\locpriv$ or $\neg \locpriv$.
	For instance, if there is no path through $\locpriv$ to $\locfinal$, but a path to $\locfinal$, a system is not opaque \wrt{} \cref{definition:opacity}.
\end{remark}
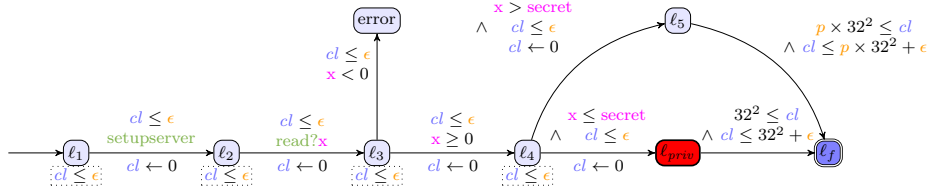
\begin{figure*}[tb]
\newcommand{\ratio}{0.5\textwidth}
 
	\centering
	 \footnotesize

\scalebox{.8}{

	\begin{tikzpicture}[scale=1, xscale=2.5, yscale=2.2, auto, ->, >=stealth']
 
		\node[location, initial] at (0,0) (l1) {$\loc_1$};
 
		\node[location] at (1, 0) (l2) {$\loc_2$};
 
		\node[location] at (2, 0) (l3) {$\loc_3$};
 
		\node[location] at (2, 1) (final1) {\styleloc{error}};
 
		\node[location] at (3, 0) (l4) {$\loc_4$};
 
		\node[location, private] at (4, 0) (hidden1) {$\locpriv$};
 
		\node[location] at (4, 1) (hidden2) {$\loc_5$};
 
		\node[location, final] at (5, 0) (final2) {$\locfinal$};
 
		\node[invariant, below=of l1] {$\clockcl \leq \styleparam{\epsilon}$};
		\node[invariant, below=of l2] {$\clockcl \leq \styleparam{\epsilon}$};
		\node[invariant, below=of l3] {$\clockcl \leq \styleparam{\epsilon}$};
		\node[invariant, below=of l4] {$\clockcl \leq \styleparam{\epsilon}$};
		
		\path (l1) edge node[align=center]{$\clockcl \leq \styleparam{\epsilon}$ \\ $\styleact{\mathrm{setupserver}}$} node[below] {$\clockcl \assign 0$} (l2);
 
		\path (l2) edge node[align=center]{$\clockcl \leq \styleparam{\epsilon}$ \\  $\styleact{\mathrm{read}?}\styledisc{x}$} node[below] {$\clockcl \assign 0$} (l3);

		\path (l3) edge node[above left, align=center]{$\clockcl \leq \styleparam{\epsilon}$ \\  $ \styledisc{x} < 0$} (final1);

		\path (l3) edge node[align=center]{$\clockcl \leq \styleparam{\epsilon}$ \\  $ \styledisc{x} \geq 0$ } node[below] {$\clockcl \assign 0$} (l4);

		\path (l4) edge node{\begin{tabular}{@{} c @{\ } c@{} }
		& $ \styledisc{x} \leq \mathrm{\styledisc{secret}}$\\
		 $\land $ & $ \clockcl \leq \styleparam{\epsilon}$\\
		\end{tabular}} node [below, align=center]{$\clockcl\assign 0$} (hidden1);

		\path (l4) edge[bend left] node{\begin{tabular}{@{} c @{\ } c@{} }
		& $ \styledisc{x} > \styledisc{\mathrm{secret}}$\\
		 $\land $ & $ \clockcl \leq \styleparam{\epsilon}$\\
		 & $\clockcl\assign 0$\\
		\end{tabular}} (hidden2);

		\path (hidden1) edge node{\begin{tabular}{@{} c @{\ } c@{} }
		& $ 32^2  \leq \clockcl$\\
		$\land$ & $ \clockcl \leq 32^2 + \styleparam{\epsilon}$\\
		\end{tabular}} (final2);

		\path (hidden2) edge[bend left] node{\begin{tabular}{@{} c @{\ } c@{} }
		& $ \styleparam{p} \times 32^2  \leq \clockcl$\\
		$\land$ & $ \clockcl \leq \styleparam{p} \times 32^2 + \styleparam{\epsilon}$\\
		\end{tabular}} (final2);
 
	\end{tikzpicture}
	
	}
	
	\caption{A Java program encoded in a PTA}
	\label{figure:example-Java:PTA}

\end{figure*}

\begin{example}\label{example:Java-PTA}
	Consider the PTA~$\A$ in \cref{figure:example-Java:PTA} where $\styleclock{cl}$ is a clock, while $\styleparam{\epsilon},\styleparam{p}$ are parameters. %
	We use a sightly extended PTA syntax:
	$\styleact{read?}\styledisc{x}$ reads the value input on a given channel~$\styleact{read}$, and assigns it to a (discrete, global) variable~$\styledisc{x}$.
	$\styledisc{secret}$ is a constant variable of arbitrary value.
	If both $\styledisc{x}$ and $\styledisc{secret}$ are finite-domain variables (\eg{} bounded integers) then they can be seen as syntactic sugar for locations.
	Such variables are supported by most model checkers, including \uppaal{}~\cite{LPY97} and \imitator{}~\cite{Andre21}.

	This PTA encodes a server process
	and is a (manual) translation of a Java program from the DARPA Space/Time Analysis for Cybersecurity (STAC) library\footnote{%
		\url{https://github.com/Apogee-Research/STAC/blob/master/Canonical_Examples/Source/Category1_vulnerable.java}
	},
	that compares a user-input variable with a given secret
	and performs different actions taking different times depending on this secret.
	The original Java program is \emph{vulnerable}, and tagged as such in the DARPA library, because some sensitive information can be deduced from the timing information.
	The original Java code is given in \cref{appendix:Java}.
	
	In our encoding, a single instruction takes a time in $[0, \styleparam{\epsilon}]$, while $\styleparam{p}$ is a (parametric) factor to one of the \stylecode{sleep} instructions of the program.
	Note that in the original Java code in \cref{appendix:Java}, at line~25, there is no parameter~$\styleparam{p}$ but an integer~2; that is, the code is fixed to have $\pval(\styleparam{p}) = 2$.\label{newtext:p=2}
	For sake of simplicity, we abstract away instructions not related to time, and merge subfunctions calls.
	For this work, we simplify the problem and abstract in this way.
	Precisely modeling the timing behavior of a program is itself a complicated problem (due to caching, speculative execution, etc.)\ and we leave to future work.

	Let $\pval_1$ such that $\pval_1(\styleparam{\epsilon}) = 1$ and $\pval_1(\styleparam{p}) = 2$%
	.
	For this example,
		$\PrivDurReach{\valuate{\A}{\pval_1}}{\locpriv}{\locfinal} = [1024, 1029]$
		while
		$\PubDurReach{\valuate{\A}{\pval_1}}{\locpriv}{\locfinal} = [2048, 2053]$.
	Therefore, $\valuate{\A}{\pval_1}$ is opaque \wrt{} $\locpriv$ on the way to~$\locfinal$ for execution times~$\Times = [1024, 1029] \cap [2048, 2053] = \emptyset$.

	Let $\pval_2$ such that $\pval_2(\styleparam{\epsilon}) = 2$ and $\pval_2(\styleparam{p}) = 1.002$.
		$\PrivDurReach{\valuate{\A}{\pval_2}}{\locpriv}{\locfinal} = [1024, 1034]$
		while
		$\PubDurReach{\valuate{\A}{\pval_2}}{\locpriv}{\locfinal} = [1026.048, 1036.048]$.
	Therefore, $\valuate{\A}{\pval_2}$ is opaque \wrt{} $\locpriv$ on the way to~$\locfinal$ for execution times~$\Times = [1026.048, 1034]$.

	Obviously, neither $\valuate{\A}{\pval_1}$ nor $\valuate{\A}{\pval_2}$ are fully opaque \wrt{} $\locpriv$ on the way to~$\locfinal$.

\end{example}
\subsection{Decision and computation problems}\label{ss:problems}

We can now define the timed opacity computation problem, which consists in computing the possible execution times ensuring opacity \wrt{} a private location.
In other words, the attacker model is as follows: the attacker knows the system model in the form of a TA, and can only observe the computation time between the start of the program and the time it reaches a given (final) location.

\smallskip

\defProblem
	{Timed opacity computation}
	{A TA~$\valuate{\A}{\pval}$, a private location~$\locpriv$,
		a target location~$\locfinal$
		}
	{Compute the execution times~$\Times$ such that $\valuate{\A}{\pval}$ is opaque \wrt{} $\locpriv$ on the way to~$\locfinal$ for these execution times~$\Times$.}

\medskip

Let us illustrate that this computation problem is certainly not easy.
For the TA~$\A$ in \cref{figure:opaqueforN}, the execution times~$\Times$ for which $\A$ is opaque \wrt{} $\locpriv$ on the way to~$\locfinal$ is exactly~$\setn$; that is, only integer times are opaque (as the system can only leave $\locpriv$ and hence enter~$\locfinal$ at an integer time).
\begin{figure}[h]
	{\centering
		\begin{tikzpicture}[->, >=stealth', auto, node distance=2cm, thin]

		\node[location, initial] at (0, 0) (s0) {$\locinit$};
		\node[location, private] at (1.5,1.5) (sp) {$\locpriv$};
		\node[location, final] at (3,0) (s1) {$\locfinal$};

		\path (s0) edge node[align=center]{} (s1);
		\path (s0) edge node[align=center]{$\styleclock{\clock} = 0$} (sp);
		\path (sp) edge[loop above] node[align=center]{$\styleclock{\clock} = 1$\\$\styleclock{\clock} \assign 0$} (sp);
		\path (sp) edge[] node[align=center]{$\styleclock{\clock} = 0$} (s1);

		\end{tikzpicture}

	}
	\caption{TA for which the set of opaque execution times is $\setn$}
	\label{figure:opaqueforN}
\end{figure}
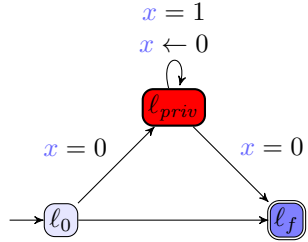

\smallskip

The synthesis counterpart allows for a higher-level problem by also synthesizing the internal timings guaranteeing opacity.

\smallskip

\defProblem
	{Timed opacity synthesis}
	{A PTA~$\A$, a private location~$\locpriv$,
		a target location~$\locfinal$
		}
	{Synthesize the parameter valuations~$\pval$ and the execution times~$\Times_\pval$ such that $\valuate{\A}{\pval}$ is opaque \wrt{} $\locpriv$ on the way to~$\locfinal$ for these execution times~$\Times_\pval$.}

\medskip

Note that the execution times can depend on the parameter valuations.

We can also define the full timed opacity decision problem, which consists in answering whether a timed automaton is fully opaque \wrt{} a private location.%

\defProblem{Full timed opacity decision}
{A TA~$\valuate{\A}{\pval}$, a private location $\locpriv$, a target location $\locfinal$}
{Is $\valuate{\A}{\pval}$ fully opaque \wrt{} $\locpriv$ on the way to~$\locfinal$?}

\smallskip

Note that a last problem of interest would be \emph{full timed opacity synthesis}, aiming at synthesizing (ideally the entire set of) parameter valuations~$\pval$ for which $\valuate{\A}{\pval}$ is \emph{fully opaque} \wrt{} $\locpriv$ on the way to~$\locfinal$.
This is left as future work.

\section{Timed opacity problems for timed automata}\label{section:TA}

In this section, we address the non-parametric problems defined in \cref{ss:problems}, \ie{} the timed opacity computation problem (\cref{ss:timed-opacity-computation}) and full timed opacity decision problem (\cref{ss:full-timed-opacity}).
We show that both problems can be solved using a construction of the $\mathit{DReach}$ sets based on the RA arithmetic~\cite{W99} (\cref{ss:DReach-RA}).

In this section, let $\A$ denote a (non-parametric) timed automaton.

\subsection{Computing $\PrivDurReach{\A}{\locpriv}{\locfinal}$ and $\PubDurReach{\A}{\locpriv}{\locfinal}$}\label{ss:DReach-RA}

We must be able to express the set of execution times, \ie{} the durations of all runs from the initial location to a given target location.
While the problem of expressing the set of execution times seems very natural for timed automata, it was barely addressed in the literature, with the exception of~\cite{BDR08,Rosenmann19}.

\subsubsection{The RA arithmetic}

We use the RA arithmetic, which is the set of first-order formulae, interpreted over the real numbers, of $\left\langle \R, +, <, \setn, 0, 1\right\rangle $ where $\setn$ is a unary predicate such that $\setn(z)$ is true iff $z$ is a natural number.
This arithmetic has a decidable theory with a complexity of 3EXPTIME~\cite{W99}.

\subsubsection{Computing execution times of timed automata}
With $\region,\region'\in\Regions_\A$, we denote by $\durRegions{\region}{\region'}$ the set of durations $d$ such that there exists a finite path $\rho = (s_i)_{i}$ in $\TTS_{\A}$ %
such that $\duration(\rho)=d$ and the associated path $\pi(\rho) = (r_k)_{0\leq k\leq K}$ in the region graph satisfies $r_0 = \region, r_K = \region'$.
It is shown in~\cite[Proposition 5.3]{BDR08} that these sets $\durRegions{\region}{\region'}$ can be defined in RA arithmetic.
Moreover, they are definable by a disjunction of terms of the form $d=m$, ${\exists z, \setn(z) \land d = m+cz}$ and ${\exists z, \setn(z)\land m+cz < d < m+cz+1}$, where $c,m\in \setn$.
Let us give the main idea of the proof presented in~\cite{BDR08} (even though this explanation is not necessary to follow our reasoning).
The idea of the proof of~\cite{BDR08} is to consider a TA~$\A^0$ obtained from~$\A$ by adding a new clock~$\clock_0$ which is reset to~0 each time it reaches the value~1 and to count all of the resets of~$\clock_0$.
The construction of~$\A^0$ ensures that each (finite) run $\rho$ of $\TTS_{\A}$ corresponds to a run $\rho^0$ of $\TTS_{\A^0}$ (at each state, the value of $\clock_0$ is the fractional part of the total time elapsed), and conversely (erasing $\clock_0$). %
The authors propose next a classical automaton~$C$ as a particular subgraph of the region graph $\Regions_{\A^0}$, where the only action~$a$ denotes the reset of $\clock_0$ (all other transitions are labeled with the silent action).
The conclusion follows because $t\in \durRegions{\region}{\region'}$ if $\integralp{t}$ is the length of a path in $C'$, the deterministic automaton obtained from~$C$ by subset construction.
\begin{lemma}[Reachability-duration computation]
	\label{lemma:DReachComputation}
	The sets $\PrivDurReach{\A}{\locpriv}{\locfinal}$ and $\PubDurReach{\A}{\locpriv}{\locfinal}$ are computable and definable in RA arithmetic.
\end{lemma}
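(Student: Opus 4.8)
The plan is to reduce the computation of $\PrivDurReach{\A}{\locpriv}{\locfinal}$ and $\PubDurReach{\A}{\locpriv}{\locfinal}$ to the region-to-region duration sets $\durRegions{\region}{\region'}$ provided by \cite[Proposition~5.3]{BDR08}, which are already known to be computable and definable in RA arithmetic, and then to show that RA-definability is preserved under the finitely many Boolean operations we need. First I would observe that the initial state $\sinit = (\locinit, \ClocksZero)$ lies in a unique region $\region\init$, and that the set of ``final'' regions is $\Regions_{\locfinal} = \{ \region \in \Regions_\A \mid \region = \class{(\locfinal, \clockval)} \text{ for some } \clockval \}$, a finite set. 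By definition of $\duration$ and of the region graph, a duration $d$ belongs to $\PrivDurReach{\A}{\locpriv}{\locfinal} \cup \PubDurReach{\A}{\locpriv}{\locfinal}$ (i.e.\ $\locfinal$ is reachable in time $d$) iff $d \in \bigcup_{\region' \in \Regions_{\locfinal}} \durRegions{\region\init}{\region'}$.

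The key step is to handle the ``passing through $\locpriv$'' / ``avoiding $\locpriv$'' side condition. For this I would split the locations into those that do not belong to $\locpriv$, namely work with the region subgraph. Concretely: a run witnessing $d \in \PrivDurReach{\A}{\locpriv}{\locfinal}$ decomposes, at the first visit to $\locpriv$, into a prefix from $\region\init$ to some region $\region_p$ over $\locpriv$ that avoids all $\locpriv$-regions strictly before $\region_p$, followed by a suffix from $\region_p$ to some $\region' \in \Regions_{\locfinal}$. The duration of the whole run is the sum of the two durations, and sums of RA-definable sets are RA-definable (using $+$ in the signature and an existential quantifier: $d \in D_1 + D_2 \iff \exists d_1 \exists d_2\, (d_1 \in D_1 \land d_2 \in D_2 \land d = d_1 + d_2)$, and RA has decidable, hence quantifier-admitting, first-order theory). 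The ``avoid $\locpriv$ before'' and ``avoid $\locpriv$ entirely'' constraints are handled by applying the $\durRegions{}{}$ construction not to the full region graph of $\A$, but to the region graph of the automaton $\A'$ obtained from $\A$ by deleting location $\locpriv$ (equivalently, restricting $\regionEdges_\A$ to regions whose location is not $\locpriv$); the region graph of $\A'$ is a subgraph of that of $\A$, and \cite[Proposition~5.3]{BDR08} applies verbatim to it. Thus
\[
\PrivDurReach{\A}{\locpriv}{\locfinal} \;=\; \bigcup_{\substack{\region_p \text{ over } \locpriv \\ \region' \in \Regions_{\locfinal}}} \Big( \durRegions[\A']{\region\init}{\region_p^{-}} + \durRegions[\A]{\region_p}{\region'} \Big),
\]
where $\region_p^{-}$ ranges over the predecessor regions of $\region_p$ just before entering $\locpriv$ (a finite adjustment handled by unfolding the last edge), and $\PubDurReach{\A}{\locpriv}{\locfinal} = \bigcup_{\region' \in \Regions_{\locfinal}} \durRegions[\A']{\region\init}{\region'}$ using only the $\locpriv$-free region graph. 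Since all unions and sums are finite, and each $\durRegions{}{}$ term is RA-definable and computable, both sets are RA-definable and computable, which is the claim.

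The step I expect to be the main obstacle is the careful bookkeeping in splitting a run at its \emph{first} visit to $\locpriv$: I must make sure the prefix genuinely avoids $\locpriv$ (so that the concatenation of an RA-formula for the $\locpriv$-free prefix durations with one for the arbitrary $\locpriv$-to-$\locfinal$ suffix durations exactly captures $\PrivDurReach{}{}$, with no double-counting and no missing runs), and that edges entering $\locpriv$ are correctly attributed — this is where the slightly delicate handling of the region just before $\locpriv$, including the delay spent in that region, comes in. Everything else is routine: the finiteness of $\Regions_\A$ guarantees the unions are finite, and closure of RA-definability under finite union, existential quantification and the sum operation is immediate from the signature $\langle \R, +, <, \setn, 0, 1\rangle$ and the decidability result of \cite{W99}.
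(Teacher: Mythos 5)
Your overall strategy---reducing both sets to the region-to-region duration sets $\durRegions{\region}{\region'}$ of~\cite{BDR08} and invoking closure of RA-definability under finite unions and existential quantification---is the same as the paper's, and your treatment of $\PubDurReach{\A}{\locpriv}{\locfinal}$ (run the construction on the automaton with $\locpriv$ removed) is essentially the paper's argument, which deletes the transitions entering~$\locpriv$ instead of the location. The gap is in the $\PrivDurReach{\A}{\locpriv}{\locfinal}$ part: splitting a run at its first visit of~$\locpriv$ and taking a Minkowski sum of region-level duration sets, as in $\durRegions{\region\init}{\region_p^{-}} + \durRegions{\region_p}{\region'}$, is not exact. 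The set $\durRegions{\region_p}{\region'}$ collects durations of suffixes starting from \emph{some} state of~$\region_p$, whereas the suffix of an actual run must start from the \emph{precise} clock valuation reached by the prefix; region equivalence is only a time-abstract bisimulation, so residual durations differ between states of the same region and are correlated with the prefix duration, a correlation the sum destroys. Concretely, take a single clock~$\clock$ (never reset), an edge from~$\locinit$ to~$\locpriv$ guarded by $0 < \clock < 1$ and an edge from~$\locpriv$ to~$\locfinal$ guarded by $\clock = 1$: every run through~$\locpriv$ to~$\locfinal$ has duration exactly~$1$, so $\PrivDurReach{\A}{\locpriv}{\locfinal} = \{1\}$, while your formula yields $(0,1)+(0,1) = (0,2)$, a strict overapproximation. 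Since the lemma is subsequently used for exact intersection and equality tests of these sets (\cref{proposition:ET-opacity-computation,proposition:ET-full-opacity-decsion}), an overapproximation invalidates the downstream results.

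The repair---and the paper's actual route---is to avoid any concatenation: enrich~$\A$ with a Boolean flag~$\bflag$, set to $\BTrue$ on every transition whose target is~$\locpriv$ (a finite-domain flag merely doubles the locations), and obtain $\PrivDurReach{\A}{\locpriv}{\locfinal}$ by a \emph{single} end-to-end application of~\cite{BDR08} in the enriched automaton, from the initial region to the finitely many regions of~$\locfinal$ with $\bflag = \BTrue$; a finite union of RA-definable sets is then RA-definable and computable. You correctly anticipated that the delicate point was the bookkeeping around the split, but the difficulty is not the edge entering~$\locpriv$ or the choice of~$\region_p^{-}$: no bookkeeping at the region level can restore the lost correlation between the prefix duration and the concrete state at the split point, so the run must not be split at all---the ``visited~$\locpriv$'' information has to be carried in the state space instead.
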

	\begin{proof}
		Let \A~be a TA. We aim at reducing the computation of the sets $\PrivDurReach{\A}{\locpriv}{\locfinal}$ and $\PubDurReach{\A}{\locpriv}{\locfinal}$ to the computation of sets $\durRegions{\region}{\region'}$.

		First, let us compute $\PrivDurReach{\A}{\locpriv}{\locfinal}$.
		From~\A, we define a TA $\A'$ by adding a Boolean discrete variable~$\bflag$, initially $\BFalse$.
		Recall that discrete variables over a finite domain are syntactic sugar for \emph{locations}: therefore, $\locfinal$ with $\bflag = \BFalse$ and $\locfinal$ with $\bflag = \BTrue$ can be seen as two different locations.
		Then, we set $\bflag \assign \BTrue$ on any transition whose target location is~$\locpriv$; therefore, $\bflag = \BTrue$ denotes that $\locpriv$ has been visited.
		We denote by ${\locfinal'}_{\BTrue}$ the final state of $\A'$ where $\bflag = \BTrue$.
		$\PrivDurReach{\A}{\locpriv}{\locfinal}$ is exactly the set of executions times in $\A'$ between $\locinit$ and ${\locfinal'}_{\BTrue}$.
		For all the regions $\region'_i$ associated to ${\locfinal'}_{\BTrue}$, we can compute (using~\cite{BDR08}) $\durRegions{\region}{\region'_i}$, where $\region$ is the region associated to $\locinit$ in~$\A'$.
		Therefore, $\PrivDurReach{\A}{\locpriv}{\locfinal}$ can be computed as the union of all the $\durRegions{\region}{\region'_i}$ (of which there is a finite number), which is definable in RA arithmetic.%

		Second, let us compute $\PubDurReach{\A}{\locpriv}{\locfinal}$.
		We define another TA $\A''$ obtained from $\A$ by deleting all the transitions leading to $\locpriv$.
		Therefore, the set of durations reaching~$\locfinal$ in~$\A''$ is exactly the set of durations reaching~$\locfinal$ in~$\A$ associated to runs not visiting~$\locpriv$.
		$\PubDurReach{\A}{\locpriv}{\locfinal}$ is exactly the set of executions times in $\A''$ between $\locinit$ and ${\locfinal}$.
		For all the regions $\region'_i$ associated to $\locfinal$, we can compute $\durRegions{\region}{\region'_i}$, where $\region$ is the region associated to $\locinit$ in $\A''$.
		Therefore, $\PrivDurReach{\A}{\locpriv}{\locfinal}$ can be computed as the union of all the $\durRegions{\region}{\region'_i}$.
	\end{proof}
\subsection{Answering the timed opacity computation problem}\label{ss:timed-opacity-computation}
\begin{proposition}[timed opacity computation]\label{proposition:ET-opacity-computation}
	The timed opacity computation problem is solvable for TAs.
\end{proposition}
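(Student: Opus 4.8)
The plan is to obtain \cref{proposition:ET-opacity-computation} as an essentially immediate corollary of \cref{lemma:DReachComputation}. By the definition of timed opacity \wrt{} a set of execution times (\cref{definition:ET-timed opacity}), a TA $\valuate{\A}{\pval}$ is opaque \wrt{} $\locpriv$ on the way to~$\locfinal$ for a set $\Times$ exactly when $\Times \subseteq \PrivDurReach{\A}{\locpriv}{\locfinal} \cap \PubDurReach{\A}{\locpriv}{\locfinal}$. Hence the family of sets $\Times$ witnessing opacity is downward closed, it is entirely determined by its unique maximal element $\PrivDurReach{\A}{\locpriv}{\locfinal} \cap \PubDurReach{\A}{\locpriv}{\locfinal}$, and (equivalently) this intersection is precisely the set of execution times $d$ for which $\valuate{\A}{\pval}$ is opaque. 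So it suffices to show that this intersection can be effectively computed and finitely represented.

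First I would invoke \cref{lemma:DReachComputation}: both $\PrivDurReach{\A}{\locpriv}{\locfinal}$ and $\PubDurReach{\A}{\locpriv}{\locfinal}$ are computable and definable in RA arithmetic, each as a finite disjunction of terms of the forms $d = m$, $\exists z\, \bigl(\setn(z) \land d = m + cz\bigr)$ and $\exists z\, \bigl(\setn(z) \land m + cz < d < m + cz + 1\bigr)$ with $c,m \in \setn$. Then I would simply form the conjunction of the two resulting formulae: since RA arithmetic is closed under Boolean connectives, $\PrivDurReach{\A}{\locpriv}{\locfinal} \cap \PubDurReach{\A}{\locpriv}{\locfinal}$ is again RA-definable; and because both operands are available in the explicit disjunctive form above, the intersection can be put back into the same shape by distributing the conjunction over the disjunctions and performing, for each of the finitely many pairs of terms, the one-dimensional (hence routine) case analysis of intersecting a singleton, a ``periodic'' point-set, or a ``periodic'' open interval with another such set. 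This produces a finite, manipulable description of the set of opaque execution times, which solves the timed opacity computation problem for TAs.

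The only point requiring care is that ``solvable/computable'' be read in the strong sense of returning a concrete finite representation rather than merely an abstract set; this is exactly what the disjunctive normal form inherited from \cite{BDR08} guarantees, and it is what makes the termwise intersection effective (decidability of RA arithmetic alone would suffice for a weaker emptiness/membership statement, but not obviously for returning the set itself in closed form). Consequently, I expect no genuine obstacle in the proof of this proposition: all the real difficulty has been pushed into \cref{lemma:DReachComputation} (and, beneath it, into the region-graph and added-clock construction of \cite{BDR08}), and the present statement follows by intersecting the two $\mathit{DReach}$ sets.
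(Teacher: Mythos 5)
Your proposal is correct and follows essentially the same route as the paper: reduce to \cref{lemma:DReachComputation} and output the intersection $\PrivDurReach{\A}{\locpriv}{\locfinal} \cap \PubDurReach{\A}{\locpriv}{\locfinal}$ as the (maximal, downward-closed) set of opaque execution times. The only difference is presentational: where you compute the intersection explicitly by distributing over the disjunctive normal form inherited from \cite{BDR08}, the paper simply appeals to the decidability of RA arithmetic to conclude that the intersection is effectively computable (with a 3EXPTIME upper bound).
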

	\begin{proof}
Let $\A$~be a TA. From \cref{lemma:DReachComputation}, we can compute and define in RA arithmetic the sets $\PubDurReach{\A}{\locpriv}{\locfinal}$ and $\PrivDurReach{\A}{\locpriv}{\locfinal}$.

By the decidability of RA arithmetic, the intersection of these sets is computable.
Then, the set $\Times = \PrivDurReach{\A}{\locpriv}{\locfinal} \cap \PubDurReach{\A}{\locpriv}{\locfinal}$ is effectively computable (with a 3EXPTIME upper bound).
	\end{proof}

This positive result can be put in perspective with the negative result of~\cite{Cassez09} that proves that it is undecidable whether a TA (and even the more restricted subclass of event-recording automata~\cite{AFH99}) is opaque, in a sense that the attacker can deduce some actions, by looking at observable actions together with their timing.
The difference in our setting is that only the global time is observable, which can be seen as a single action, occurring once only at the end of the computation.
In other words, our attacker is less powerful than the attacker in~\cite{Cassez09}.

\subsection{Checking for full timed opacity}\label{ss:full-timed-opacity}
\begin{proposition}[full timed opacity decision]\label{proposition:ET-full-opacity-decsion}
	The full timed opacity decision problem is decidable for TAs.
\end{proposition}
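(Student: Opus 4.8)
The plan is to reduce the full timed opacity decision problem to the equality of the two sets $\PrivDurReach{\A}{\locpriv}{\locfinal}$ and $\PubDurReach{\A}{\locpriv}{\locfinal}$, and then invoke the decidability of RA arithmetic. By \cref{lemma:DReachComputation}, both of these sets are computable and definable by formulae $\varphi_{\mathit{priv}}(d)$ and $\varphi_{\mathit{pub}}(d)$ in the RA arithmetic $\langle \R, +, <, \setn, 0, 1\rangle$. By \cref{definition:opacity}, $\valuate{\A}{\pval}$ is fully opaque \wrt{} $\locpriv$ on the way to $\locfinal$ if and only if these two sets coincide, which is expressed by the closed RA-formula $\forall d\, \big(\varphi_{\mathit{priv}}(d) \leftrightarrow \varphi_{\mathit{pub}}(d)\big)$.

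The key steps, in order, are: (i) construct the two TAs $\A'$ (with the Boolean flag marking a visit to $\locpriv$) and $\A''$ (with the transitions into $\locpriv$ removed), exactly as in the proof of \cref{lemma:DReachComputation}; (ii) compute, via the construction of \cite{BDR08}, the RA formulae defining $\PrivDurReach{\A}{\locpriv}{\locfinal}$ and $\PubDurReach{\A}{\locpriv}{\locfinal}$ as finite disjunctions of the normal-form terms $d = m$, $\exists z\,(\setn(z) \land d = m + cz)$, and $\exists z\,(\setn(z)\land m + cz < d < m + cz + 1)$ with $c,m \in \setn$; (iii) form the closed RA sentence asserting equivalence of the two formulae for all $d$; (iv) decide this sentence using the decidability of RA arithmetic \cite{W99}, obtaining the answer to the full timed opacity decision problem (with a 3EXPTIME upper bound inherited from the theory).

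I do not expect a genuine obstacle here, since all the heavy lifting is already packaged in \cref{lemma:DReachComputation} and in the decidability of RA arithmetic; the proposition is essentially a corollary. The only point that needs a sentence of care is that ``fully opaque'' is precisely set equality (\cref{definition:opacity}), so the decision question is literally ``$\varphi_{\mathit{priv}}$ and $\varphi_{\mathit{pub}}$ define the same subset of $\R_{\geq 0}$'', which is a first-order property over the reals expressible in the RA signature; once this is observed, decidability is immediate. If anything is subtle, it is merely bookkeeping: ensuring the domain of quantification is the nonnegative reals (harmless, since $d \geq 0$ is RA-expressible) and that the finitely many regions associated to the respective final locations are correctly unioned, both of which are already handled in \cref{lemma:DReachComputation}.
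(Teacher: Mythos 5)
Your proposal is correct and follows essentially the same route as the paper: it invokes \cref{lemma:DReachComputation} to obtain RA-definable descriptions of $\PrivDurReach{\A}{\locpriv}{\locfinal}$ and $\PubDurReach{\A}{\locpriv}{\locfinal}$, and then decides their equality (which is exactly full timed opacity by \cref{definition:opacity}) via the decidability of RA arithmetic~\cite{W99}. Your extra remarks on phrasing the equality as the closed sentence $\forall d\,(\varphi_{\mathit{priv}}(d)\leftrightarrow\varphi_{\mathit{pub}}(d))$ and on restricting to nonnegative reals are just a slightly more explicit rendering of the same argument.
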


	\begin{proof}
		Let $\A$~be a TA. From \cref{lemma:DReachComputation}, we can compute and define in RA arithmetic the sets $\PubDurReach{\A}{\locpriv}{\locfinal}$ and $\PrivDurReach{\A}{\locpriv}{\locfinal}$.

		From the decidability of RA arithmetic, the equality between these sets is decidable.
		Therefore, $ \PrivDurReach{\A}{\locpriv}{\locfinal} \overset{?}{=} \PubDurReach{\A}{\locpriv}{\locfinal}$ is decidable.
\end{proof}
From \cite{W99} and \cite[Theorem 7.5]{BDR08}, the computation of a set $\durRegions{\region}{\region'}$ is 2EXPTIME and the RA arithmetic has a decidable theory with complexity 3EXPTIME. Therefore, our construction is 3EXPTIME, which is an upper-bound for the problem complexity.
Note that, as in~\cite{BDR08}, we did not compute a lower bound for the complexity of \cref{proposition:ET-opacity-computation,proposition:ET-full-opacity-decsion}.
This remains to be done as future work.

\begin{example}
	Consider again the PTA~$\A$ in \cref{figure:example-PTA}, and let $\pval$ be such that $\pval(\param_1) = 1$ and $\pval(\param_2) = 2$.
	Recall from \cref{example:running:DurReach} that
		$\PrivDurReach{\valuate{\A}{\pval}}{\loc_2}{\loc_1} = [1, 3]$
	and
	$\PubDurReach{\valuate{\A}{\pval}}{\loc_2}{\loc_1} = [2, 3]$.
	Thus, $\PrivDurReach{\valuate{\A}{\pval}}{\loc_2}{\loc_1} \neq \PubDurReach{\valuate{\A}{\pval}}{\loc_2}{\loc_1}$ and therefore $\valuate{\A}{\pval}$ is \emph{not} \emph{(fully)} opaque \wrt{} $\loc_2$ on the way to~$\loc_1$.

	Now, consider $\pval'$ such that $\pval'(\param_1) = \pval'(\param_2) = 1.5$.
	This time, $\PrivDurReach{\valuate{\A}{\pval'}}{\loc_2}{\loc_1} = \PubDurReach{\valuate{\A}{\pval'}}{\loc_2}{\loc_1} = [1.5, 3]$ and therefore $\valuate{\A}{\pval'}$ is \emph{(fully)} opaque \wrt{} $\loc_2$ on the way to~$\loc_1$.
\end{example}
\section{The theory of parametric timed opacity \wrt{} execution times}\label{section:theory}

We address in this section the parametric problems of timed opacity \wrt{} execution times.

Let us consider the following decision problem, \ie{} the problem of checking the \emph{emptiness} of the parameter valuations and execution times set guaranteeing timed opacity \wrt{} execution times.
The decision problem associated to \emph{full} timed opacity will be considered in \cref{section:theoryFullOpacity}.

\smallskip

\defProblem
	{Timed opacity emptiness}
	{A PTA~$\A$, a private location~$\locpriv$,
		a target location~$\locfinal$
		}
	{Is the set of valuations~$\pval$ such that $\valuate{\A}{\pval}$ is opaque \wrt{} $\locpriv$ on the way to~$\locfinal$ for a non-empty set of execution times empty?}

The negation of the timed opacity emptiness consists in deciding whether there exists at least one parameter valuation for which $\valuate{\A}{\pval}$ is opaque for at least some execution time.
\label{newtext:dual}

\subsection{Undecidability in general}

We prove undecidability results for a ``sufficient'' number of clocks and parameters.
Put it differently, our proofs of undecidability require a minimum number of clocks and parameters to work; the problems are obviously undecidable for larger numbers, but the decidability is open for smaller numbers.
This will be briefly discussed in \cref{section:conclusion}.
	\label{newtext:nbclocks}

With the rule of thumb that all non-trivial decision problems are undecidable for general PTAs~\cite{Andre19STTT}, the following result is not surprising, and follows from the undecidability of reachability-emptiness for PTAs.

\begin{theorem}[undecidability]\label{proposition:TOE-undecidability}
	The timed opacity emptiness problem is undecidable for general PTAs.
\end{theorem}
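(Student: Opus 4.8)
The plan is to reduce from the reachability-emptiness problem for PTAs: given a PTA~$\A$ and one of its locations~$\locTarget$, decide whether the set of parameter valuations~$\pval$ for which $\locTarget$ is reachable in~$\valuate{\A}{\pval}$ is empty. This problem is undecidable~\cite{AHV93,Andre19STTT}. Given such an instance, I would build a PTA~$\A'$ together with a private location~$\locpriv$ and a fresh final location~$\locfinal$ by augmenting~$\A$ with:
a fresh initial location~$\locinit'$ with no invariant;
an edge from~$\locinit'$ to the initial location~$\locinit$ of~$\A$ that resets every clock (so that the embedded copy of~$\A$ starts exactly in the initial state of~$\A$);
an edge from~$\locinit'$ to a fresh location~$\locpub$ with no invariant, together with an edge from~$\locpub$ to~$\locfinal$;
and an edge from~$\locTarget$ to the fresh location~$\locpriv$, followed by an edge from~$\locpriv$ to~$\locfinal$.
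Apart from the clock-resetting edge, all new edges carry a trivially true guard and reset no clock, and neither a clock nor a parameter is added, so $\A'$ uses the same clocks and parameters as~$\A$.

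The correctness argument has two halves.
First, the public branch $\locinit' \to \locpub \to \locfinal$ reaches~$\locfinal$ without ever visiting~$\locpriv$, and since $\locpub$ has no invariant one may delay there for an arbitrary amount of time before firing the last edge; moreover it is the only route to~$\locfinal$ that avoids~$\locpriv$. Hence $\PubDurReach{\valuate{\A'}{\pval}}{\locpriv}{\locfinal} = \Rgeqzero$ for every~$\pval$.
Second, every run reaching~$\locfinal$ through~$\locpriv$ must first reach~$\locTarget$ within the embedded copy of~$\A$ and then take $\locTarget \to \locpriv \to \locfinal$; since that copy faithfully simulates~$\A$, the set $\PrivDurReach{\valuate{\A'}{\pval}}{\locpriv}{\locfinal}$ is non-empty iff $\locTarget$ is reachable in~$\valuate{\A}{\pval}$.
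As $\PrivDurReach{\valuate{\A'}{\pval}}{\locpriv}{\locfinal} \subseteq \Rgeqzero = \PubDurReach{\valuate{\A'}{\pval}}{\locpriv}{\locfinal}$, the intersection of the two sets equals $\PrivDurReach{\valuate{\A'}{\pval}}{\locpriv}{\locfinal}$; and since the opacity condition is downward closed in the set of execution times, $\valuate{\A'}{\pval}$ is opaque \wrt{}~$\locpriv$ on the way to~$\locfinal$ for a non-empty set of execution times exactly when this intersection is non-empty, \ie{} exactly when $\locTarget$ is reachable in~$\valuate{\A}{\pval}$.

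Putting the two halves together, the set of valuations for which $\valuate{\A'}{\pval}$ is opaque for a non-empty set of execution times is empty iff the set of valuations for which $\locTarget$ is reachable in~$\valuate{\A}{\pval}$ is empty, so undecidability of reachability-emptiness transfers to timed opacity emptiness. The main points requiring care are the two design choices highlighted above: the public branch must produce \emph{every} duration, so that the intersection collapses onto $\PrivDurReach{\valuate{\A'}{\pval}}{\locpriv}{\locfinal}$ and the ``non-empty set of execution times'' quantifier becomes harmless; and the embedded copy of~$\A$ must be entered with all clocks reset to~$0$, so that reachability of~$\locTarget$ is preserved exactly. Since no extra clock or parameter is introduced, the result already holds for the small fixed numbers of clocks and parameters for which reachability-emptiness is known to be undecidable.
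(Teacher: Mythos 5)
Your proposal is correct and follows essentially the same reduction as the paper: from reachability-emptiness, via a gadget with an always-available public branch yielding all durations and a private branch through the target location of~$\A$ into $\locpriv$ and a fresh final location. The only cosmetic difference is that the paper uses urgent locations to preserve execution times exactly, while you reset clocks on entry and argue only about non-emptiness, which suffices for this problem.
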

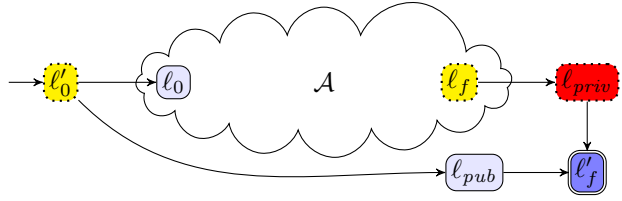
\begin{figure}[tb]
	{\centering
	\begin{tikzpicture}[->, >=stealth', auto, node distance=2cm, thin]

		\node[location] (l0) at (-2, 0) {$\locinit$};
		\node[location, urgent] (lf) at (+1.8, 0) {$\locfinal$};
		\node[cloud, cloud puffs=15.7, cloud ignores aspect, minimum width=5cm, minimum height=2cm, align=center, draw] (cloud) at (0cm, 0cm) {$\A$};

		\node[location, urgent, initial] (l0') at (-3.5, 0) {$\locinit'$};
		\node[location, urgent, private] (lpriv) at (+3.5, 0) {$\locpriv$};
		\node[location] (lpub) at (+2, -1.2) {$\locpub$};
		\node[location, final] (lf') at (+3.5, -1.2) {$\locfinal'$};

		\path
			(l0') edge (l0) %
			(lf) edge (lpriv) %
			(lpriv) edge (lf') %
			(l0') edge[out=-45,in=180] (lpub)
			(lpub) edge (lf')
			;

	\end{tikzpicture}

	}
	\caption{Reduction from reachability-emptiness} %
	\label{figure:undecidability}
\end{figure}
\begin{proof}
	We reduce from the reachability-emptiness problem, \ie{} the existence of a parameter valuation for which there exists a run reaching a given location in a PTA, which is undecidable (\eg{}~\cite{AHV93,Miller00,Doyen07,JLR15,BBLS15}).
	Consider an arbitrary PTA~$\A$ with initial location~$\locinit$ and a given location~$\locfinal$.
	It is undecidable whether there exists a parameter valuation for which there exists a run reaching~$\locfinal$%
	(proofs of undecidability in the literature generally reduce from the halting problem of a 2-counter machine which is undecidable~\cite{Minsky67}, so one can see~$\A$ as an encoding of a 2-counter machine).

	Now, add the following locations and transitions (all unguarded) as in \cref{figure:undecidability}:
		a new urgent\footnote{%
			Where time cannot elapse (depicted in dotted yellow in our figures).
		} initial location~$\locinit'$ with outgoing transitions to~$\locinit$ %
			and to a new location~$\locpub$;
		a new urgent location $\locpriv$ with an %
			incoming transition from~$\locfinal$; %
		a new final location~$\locfinal'$ with incoming transitions from~$\locpriv$ %
			and~$\locpub$. %
	Also, $\locfinal$ is made urgent.
	Let~$\A'$ denote this new PTA.

	First note that, due to the unguarded transitions, $\locfinal'$ is reachable for any parameter valuation and for any execution time by runs going through~$\locpub$ and not going through~$\locpriv$.
	That is, for all~$\pval$, $\PubDurReach{\valuate{\A'}{\pval}}{\locpriv}{\locfinal'} = [0, \infty)$.

	Assume there exists some parameter valuation~$\pval$ such that $\locfinal$ is reachable from~$\locinit$ in~$\valuate{\A}{\pval}$ for some execution times~$\Times$: then, due to our construction with additional urgent locations, $\locpriv$ is reachable on the way to $\locfinal'$ in~$\valuate{\A'}{\pval}$ for the exact same execution times~$\Times$.
	Therefore, $\valuate{\A}{\pval}$ is opaque \wrt{} $\locpriv$ on the way to~$\locfinal'$ for execution times~$\Times$.

	Conversely, if $\locfinal$ is not reachable from~$\locinit$ in~$\A$ for any valuation, then $\locpriv$ is not reachable on the way to $\locfinal'$ for any valuation in~$\A'$.
	Therefore, there is no valuation~$\pval$ such that $\valuate{\A}{\pval}$ is opaque \wrt{} $\locpriv$ on the way to~$\locfinal'$ for any execution time.
	Therefore, there exists a valuation~$\pval$ such that $\valuate{\A}{\pval}$ is opaque \wrt{} $\locpriv$ on the way to~$\locfinal'$ iff $\locfinal$ is reachable in~$\A$---which is undecidable.
\end{proof}
\begin{remark}\label{remark:nbvariables}
	Our proof reduces from the reachability-emptiness problem, for which several undecidability proofs were proposed (notably~\cite{AHV93,Miller00,Doyen07,JLR15,BBLS15}), with various flavors (numbers of parameters, integer- or dense-time, integer- or rational-valued parameters, etc.).
	See \cite{Andre19STTT} for a survey.
	Notably, this means (from~\cite{BBLS15}) that \cref{proposition:TOE-undecidability} holds for PTAs with at least 3 clocks and a single parameter.
\end{remark}

\subsection{A decidable subclass}\label{ss:decLUPTA}

We now show that the timed opacity emptiness problem is decidable for the subclass of PTAs called L/U-PTAs~\cite{HRSV02}.
Despite early positive results for L/U-PTAs~\cite{HRSV02,BlT09}, more recent results (notably \cite{JLR15,ALime17,ALR18FORMATS,ALM20}) mostly proved undecidable properties of L/U-PTAs, and therefore this positive result is welcome.

\begin{theorem}[decidability]\label{proposition:decidability-LU}
	The timed opacity emptiness problem is decidable for L/U-PTAs.
\end{theorem}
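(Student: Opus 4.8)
The plan is to reduce this to a non‑parametric question on a single timed automaton and then to reuse the argument behind \cref{proposition:ET-opacity-computation}. First I would record the elementary reformulation: by \cref{definition:ET-timed opacity}, $\valuate{\A}{\pval}$ is opaque \wrt{} $\locpriv$ on the way to $\locfinal$ for some \emph{non-empty} set of execution times if and only if $\PrivDurReach{\valuate{\A}{\pval}}{\locpriv}{\locfinal} \cap \PubDurReach{\valuate{\A}{\pval}}{\locpriv}{\locfinal} \neq \emptyset$ (take any singleton in the intersection as $\Times$). So the timed opacity emptiness problem for an L/U-PTA $\A$ is exactly: is there \emph{no} valuation $\pval$ with $\PrivDurReach{\valuate{\A}{\pval}}{\locpriv}{\locfinal} \cap \PubDurReach{\valuate{\A}{\pval}}{\locpriv}{\locfinal} \neq \emptyset$?

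The core of the proof is the claim that
\[\exists \pval :\ \PrivDurReach{\valuate{\A}{\pval}}{\locpriv}{\locfinal} \cap \PubDurReach{\valuate{\A}{\pval}}{\locpriv}{\locfinal} \neq \emptyset
\quad\Longleftrightarrow\quad
\PrivDurReach{\Azeroinf}{\locpriv}{\locfinal} \cap \PubDurReach{\Azeroinf}{\locpriv}{\locfinal} \neq \emptyset,\]
where $\Azeroinf$ is the (non-parametric) TA obtained from $\A$ by instantiating every lower-bound parameter to $0$ and every upper-bound parameter to $+\infty$. For ``$\Rightarrow$'' I would invoke the standard monotonicity of L/U-PTAs~\cite{HRSV02,BlT09}: $\Azeroinf$ relaxes every guard and invariant of every $\valuate{\A}{\pval}$, so the transition system of $\valuate{\A}{\pval}$ embeds into that of $\Azeroinf$ and each run of $\valuate{\A}{\pval}$ is a run of $\Azeroinf$ visiting the same sequence of locations; hence both $\mathit{DReach}$ sets only grow from $\valuate{\A}{\pval}$ to $\Azeroinf$, and so does their intersection. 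For ``$\Leftarrow$'' I would pick some $d$ in the intersection (in $\Azeroinf$) witnessed by a run $\varrun_1$ reaching $\locpriv$ on the way to $\locfinal$ and a run $\varrun_2$ avoiding $\locpriv$, both of duration $d$; along any run of duration $d$ every clock value stays in $[0,d]$, so choosing a natural number $N$ larger than $d$ plus the largest absolute value of the additive constants occurring in the guards and invariants of $\A$, and letting $\pval$ map every lower-bound parameter to $0$ and every upper-bound parameter to $N$, one checks that every constraint that is vacuous in $\Azeroinf$ but genuine in $\valuate{\A}{\pval}$ is still satisfied all along $\varrun_1$ and $\varrun_2$ (re-instantiated upper-bound constraints have right-hand side $\ge N > d$; re-instantiated lower-bound constraints whose right-hand side carries an upper-bound parameter with nonpositive coefficient have right-hand side $< 0$). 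Thus $\varrun_1,\varrun_2$ are runs of $\valuate{\A}{\pval}$ and $d$ witnesses non-emptiness of the intersection for $\valuate{\A}{\pval}$.

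Finally, since $\Azeroinf$ is a plain TA, \cref{lemma:DReachComputation} yields descriptions of $\PrivDurReach{\Azeroinf}{\locpriv}{\locfinal}$ and $\PubDurReach{\Azeroinf}{\locpriv}{\locfinal}$ in RA arithmetic, and by the decidability of RA arithmetic~\cite{W99} emptiness of their intersection is decidable --- which, by the equivalence above, decides timed opacity emptiness for $\A$. I expect the only delicate step to be the ``$\Leftarrow$'' direction: one must verify carefully that re-instantiating the upper-bound parameters to a finite value does not destroy the two witnessing runs, and that the valuation $\pval$ built this way is admissible (which is why we rely on $0$ and on arbitrarily large naturals being legal parameter values). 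The ``$\Rightarrow$'' direction is merely textbook L/U-PTA monotonicity, and the concluding decidability step is essentially a verbatim reuse of the proof of \cref{proposition:ET-opacity-computation}.
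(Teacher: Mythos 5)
Your proposal is correct and follows essentially the same route as the paper: reduce to the non-parametric TA $\Azeroinf$ via L/U monotonicity (\cref{lemma:HRSV02:prop4.2}) in one direction and instantiation of upper-bound parameters to a sufficiently large finite value in the other, then decide non-emptiness of $\PrivDurReach{\Azeroinf}{\locpriv}{\locfinal} \cap \PubDurReach{\Azeroinf}{\locpriv}{\locfinal}$ via \cref{lemma:DReachComputation} and RA arithmetic. Your only departure is cosmetic: you choose the large constant $N$ slightly more carefully (accounting for additive constants and general linear guards) than the paper's quick choice of $d+1$, which if anything makes the ``$\Leftarrow$'' step more robust.
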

\begin{proof}
	We reduce to the timed opacity computation problem of a given TA, which is decidable (\cref{proposition:ET-opacity-computation}).

	Let $\A$ be an L/U-PTA.
	Let $\Azeroinf$ denote the structure obtained as follows: any occurrence of a lower-bound parameter is replaced with~0, and any occurrence of a conjunct $\clock \compOpLeq \param$ (where $\param$ is necessarily an upper-bound parameter) is deleted, \ie{} replaced with~$\CTrue$.

	Let us show that
		the set of valuations~$\pval$ such that $\valuate{\A}{\pval}$ is opaque \wrt{} $\locpriv$ on the way to~$\locfinal$ for a non-empty set of execution times is non empty
		iff
		the solution to the timed opacity computation problem for~$\Azeroinf$ is non-empty.

	\begin{itemize}
		\item[$\Rightarrow$]
			Assume there exists a valuation~$\pval$ such that $\valuate{\A}{\pval}$ is opaque \wrt{} $\locpriv$ on the way to~$\locfinal$ for a non-empty set of execution.
			Therefore, the solution to the timed opacity computation problem for~$\Azeroinf$ is non-empty.
			That is, there exists a duration~$d$ such that
				there exists a run of duration~$d$ such that $\locpriv$ is reachable on the way to~$\locfinal$,
			and
				there exists a run of duration~$d$ such that $\locpriv$ is unreachable on the way to~$\locfinal$.

			We now need the following monotonicity property of L/U-PTAs:

			\begin{lemma}[\cite{HRSV02}]\label{lemma:HRSV02:prop4.2}
				Let~$\A$ be an L/U-PTA and~$\pval$ be a parameter valuation.
				Let $\pval'$ be a valuation such that
				for each upper-bound parameter~$\param^u$, $\pval'(\param^u) \geq \pval(\param^u)$
				and
				for each lower-bound parameter~$\param^l$, $\pval'(\param^l) \leq \pval(\param^l)$.
				Then any run of~$\valuate{\A}{\pval}$ is a run of $\valuate{\A}{\pval'}$.
			\end{lemma}

			Therefore, from \cref{lemma:HRSV02:prop4.2}, the runs of~$\valuate{\A}{\pval}$ of duration~$d$ such that $\locpriv$ is reachable (resp.\ unreachable) on the way to~$\locfinal$
				are also runs of~$\Azeroinf$.
			Therefore, there exists a non-empty set of durations such that $\Azeroinf$ is opaque, \ie{} solution to the timed opacity computation problem for~$\Azeroinf$ is non-empty.

		\item[$\Leftarrow$]
			Assume the solution to the timed opacity computation problem for~$\Azeroinf$ is non-empty.
			That is, there exists a duration~$d$ such that
				there exists a run of duration~$d$ such that $\locpriv$ is reachable on the way to~$\locfinal$ in~$\Azeroinf$,
			and
				there exists a run of duration~$d$ such that $\locpriv$ is unreachable on the way to~$\locfinal$ in~$\Azeroinf$.

			The result could follow immediately---if only assigning $0$ and~$\infty$ to parameters was a proper parameter valuation.
			From~\cite{HRSV02,BlT09}, if a location is reachable in the TA obtained by valuating lower-bound parameters with~0 and upper-bound parameters with~$\infty$, then there exists a sufficiently large constant~$C$ such that this run exists in $\valuate{\A}{\pval}$ such that $\pval$ assigns 0 to lower-bound and~$C$ to upper-bound parameters.
			Here, we can trivially pick~$d + 1$, as any clock constraint $\clock \leq d + 1$ or $\clock < d + 1$ will be satisfied for a run of duration~$d $.
			Let $\pval$ assign 0 to lower-bound and~$d$ to upper-bound parameters.
			Then,
				there exists a run of duration~$d$ such that $\locpriv$ is reachable on the way to~$\locfinal$ in~$\valuate{\A}{\pval}$,
			and
				there exists a run of duration~$d$ such that $\locpriv$ is unreachable on the way to~$\locfinal$ in~$\valuate{\A}{\pval}$.
			Therefore, the set of valuations~$\pval$ such that $\valuate{\A}{\pval}$ is opaque \wrt{} $\locpriv$ on the way to~$\locfinal$ for a non-empty set of execution times is non empty---which concludes the proof.
			\qedhere
	\end{itemize}
\end{proof}
\begin{remark}
	The class of L/U-PTAs is known to be relatively meaningful, and many case studies from the literature fit into this class, including case studies proposed even before this class was defined in~\cite{HRSV02}, \eg{} a toy railroad crossing model and a model of Fischer mutual exclusion protocol given in~\cite{AHV93} (see~\cite{Andre19STTT} for a survey).
	Even though the PTA in \cref{figure:example-Java:PTA} does not fit in this class, it can easily be transformed into an L/U-PTA, by duplicating $\styleparam{\param}$ into $\styleparam{\param^l}$ (used in lower-bound comparisons with clocks) and $\styleparam{\param^u}$ (used in upper-bound comparisons with clocks).
\end{remark}
\subsection{Intractability of synthesis for L/U-PTAs}\label{ss:intractability}

Even though the \emph{emptiness} problem for the timed opacity \wrt{} a set of execution times $\Times$ is decidable for L/U-PTAs (\cref{proposition:decidability-LU}), the \emph{synthesis} of the parameter valuations remains intractable in general, as shown in the following \cref{proposition:intractabilitySynthTO}.
By intractable, we mean more precisely that the solution, if it can be computed, cannot (in general) be represented using any formalism for which the emptiness of the intersection with equality constraints is decidable.
That is, a formalism in which it is decidable to answer the question ``is the computed solution intersected with an equality test between variables empty?''\ cannot be used to represent the solution.
By empty, we mean emptiness of the valuations set.
For example, let us question whether we could represent the solution of the timed opacity synthesis problem for L/U-PTAs using the formalism of a \emph{finite union of polyhedra}: testing whether a finite union of polyhedra intersected with ``equality constraints'' (typically $\param_1 = \param_2$) is empty or not \emph{is} decidable.
The Parma polyhedra library~\cite{BHZ08} can typically compute the answer to this question.
Therefore, from the following \cref{proposition:intractabilitySynthTO}, finite unions of polyhedra cannot be used to represent the solution of the timed opacity synthesis problem for L/U-PTAs.
As finite unions of polyhedra are a very common formalism (not to say the \emph{de facto} standard) to represent the solutions of various timing parameters synthesis problems, the synthesis is then considered to be infeasible in practice, or \emph{intractable} (following the vocabulary used in \cite[Theorem~2]{JLR15}).

\begin{proposition}[intractability]\label{proposition:intractabilitySynthTO}
	If it can be computed, the solution to the timed opacity synthesis problem for L/U-PTAs cannot in general be represented using any formalism for which the emptiness of the intersection with equality constraints is decidable.
\end{proposition}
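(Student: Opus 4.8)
The natural strategy is to mimic the intractability proof for reachability synthesis in L/U-PTAs from \cite{JLR15} (their Theorem~2): exhibit a family of L/U-PTAs, parameterised by an instance of an undecidable problem, such that the timed opacity synthesis output, intersected with an equality constraint $\param^l = \param^u$ between a lower-bound and an upper-bound parameter, is non-empty if and only if that instance is positive. Since the instance problem is undecidable, any formalism in which this emptiness-of-intersection-with-equality question is decidable cannot, in general, be the output format. First I would recall the undecidability source: the halting problem for 2-counter machines, encoded as reachability-emptiness for PTAs (as already used in the proof of \cref{proposition:TOE-undecidability}), but now I need the refined fact that this undecidability persists when one takes the ``L/U relaxation'': an L/U-PTA $\A$ with a distinguished lower-bound parameter $\param^l$ and upper-bound parameter $\param^u$ such that, along the diagonal $\param^l = \param^u = \pval$, the TA $\valuate{\A}{\pval}$ behaves exactly like the bounded 2-counter machine simulation, so that $\locfinal$ is reachable for some diagonal valuation iff the machine halts.

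\textbf{Construction.} Starting from such an L/U-PTA $\A$, I would apply the same gadget as in the proof of \cref{proposition:TOE-undecidability} (\cref{figure:undecidability}): add an urgent initial location $\locinit'$ branching to $\locinit$ and to a fresh $\locpub$; add an urgent $\locpriv$ after $\locfinal$ leading to a fresh final location $\locfinal'$; and let $\locpub$ also lead to $\locfinal'$ via an unguarded transition. Call the result $\A'$; it remains an L/U-PTA since all added edges are unguarded. As established in that earlier proof, $\PubDurReach{\valuate{\A'}{\pval}}{\locpriv}{\locfinal'} = [0,\infty)$ for every $\pval$, and $\valuate{\A'}{\pval}$ is opaque \wrt{} $\locpriv$ on the way to~$\locfinal'$ for some execution time iff $\locfinal$ is reachable in $\valuate{\A}{\pval}$. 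Consequently, the solution $\Sol$ to the timed opacity synthesis problem for $\A'$ satisfies: $\pval \in \Sol$ (with a non-empty $\Times_\pval$) iff $\locfinal$ is reachable in $\valuate{\A}{\pval}$. Restricting to the diagonal, $\Sol \cap \{\param^l = \param^u\}$ is non-empty iff the 2-counter machine halts.

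\textbf{Conclusion and main obstacle.} Now suppose, for contradiction, that $\Sol$ could always be represented in some formalism $\mathcal{F}$ for which emptiness of the intersection with equality constraints is decidable. Then one could decide, for the $\A'$ built from an arbitrary 2-counter machine, whether $\Sol \cap \{\param^l = \param^u\} = \emptyset$, hence decide the halting problem — contradiction. Therefore no such $\mathcal{F}$ exists, which is exactly the statement. The one delicate point — and the part I expect to require the most care — is justifying the ``L/U relaxation with a diagonal that simulates a bounded machine'': I must ensure the encoding genuinely uses $\param^l$ only in lower-bound position and $\param^u$ only in upper-bound position (so $\A$, and hence $\A'$, is truly an L/U-PTA), while still forcing the exact comparisons $\clock = \pval$ needed to simulate counter tests along $\param^l = \param^u$. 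This is precisely the trick underlying \cite[Theorem~2]{JLR15}, where an equality test between two L/U parameters re-creates the full power of PTAs; I would invoke that construction, adapting the target from plain reachability of $\locfinal$ to reachability of $\locfinal$ inside the opacity gadget, and remark that the monotonicity \cref{lemma:HRSV02:prop4.2} is what guarantees the $\locpub$-branch keeps $\PubDurReach{}{}{}$ full independently of the parameter values.
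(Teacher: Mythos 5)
Your proposal is correct and rests on exactly the same key idea as the paper's proof: split every parameter occurring both as a lower and an upper bound into fresh copies $p_i^l$ and $p_i^u$ (the trick of \cite[Theorem~2]{JLR15}), so that deciding emptiness of the synthesized solution intersected with the equality constraints $\bigwedge_i p_i^l = p_i^u$ would decide an undecidable opacity-emptiness question, a contradiction. The only differences are in packaging: the paper reduces in one step from the already-established undecidability of timed opacity emptiness (\cref{proposition:TOE-undecidability}) applied to an \emph{arbitrary} PTA, whereas you re-instantiate that proposition's gadget on a 2-counter-machine encoding and reduce from halting; also, your closing appeal to \cref{lemma:HRSV02:prop4.2} is superfluous, since the $\locpub$ branch is unguarded and therefore keeps $\PubDurReach{\valuate{\A'}{\pval}}{\locpriv}{\locfinal'} = [0,\infty)$ for every valuation by construction.
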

	\begin{proof}
		We reuse a reasoning inspired by \cite[Theorem~2]{JLR15}, and we reduce from the undecidability of the timed opacity emptiness problem for general PTAs (\cref{proposition:TOE-undecidability}).
		Assume the solution of the timed opacity synthesis problem for L/U-PTAs can be represented in a formalism for which the emptiness of the intersection with equality constraints is decidable.

		Assume an arbitrary PTA~$\A$ with notably two locations $\locpriv$ and $\locfinal$. From~$\A$, we define an L/U-PTA~$\A'$ as follow: for each parameter $p_i$ that is used both as an upper bound and a lower bound, replace its occurrences as upper bounds by a fresh parameter $p_i^u$ and its occurrences as lower bounds by a fresh parameter $p_i^l$.

		By assumption, the solution of the synthesis problem $\Gamma = \Set{(\pval,\Times_\pval)}{\valuate{\A'}{\pval} \text{ is opaque \wrt{} \locpriv\ on the way to \locfinal\ for times } \Times_\pval }$ for~$\A'$ can be computed and represented in a formalism for which the emptiness of the intersection with equality constraints is decidable.

		However, solving the emptiness of $\Set{(\pval,\Times_\pval)\in\Gamma}{\bigwedge_i \valuate{p_i^u}{\pval} = \valuate{p_i^l}{\pval}}$ (which is decidable by assumption), we can decide the timed opacity emptiness for the PTA~$\A$ (which is undecidable from \cref{proposition:TOE-undecidability}).
		This leads to a contradiction, and therefore the solution of the timed opacity synthesis problem for L/U-PTAs cannot in general be represented in a formalism for which the emptiness of the intersection with equality constraints is decidable.
	\end{proof}
\section{The theory of parametric full timed opacity}\label{section:theoryFullOpacity}

We address here the following decision problem, which asks about the emptiness of the parameter valuations set guaranteeing full timed opacity.

\smallskip

\defProblem
{Full timed opacity Emptiness}
{A PTA~$\A$, a private location~$\locpriv$,
	a target location~$\locfinal$
}
{Is the set of valuations~$\pval$ such that $\valuate{\A}{\pval}$ is fully opaque \wrt{} $\locpriv$ on the way to~$\locfinal$ empty?}

Equivalently, we are interested in deciding whether there exists at least one parameter valuation for which $\valuate{\A}{\pval}$ is fully opaque.

\subsection{Undecidability in general}

Considering that \cref{proposition:TOE-undecidability} shows that the undecidability of the timed opacity emptiness problem, the undecidability of the emptiness problem for the full timed opacity is not surprising, but does not follow immediately.

To prove this result (that will be stated formally in \cref{theorem:FTOE-undecidability}), we first need the following lemma stating that the reachability-emptiness (hereafter sometimes referred to as ``EF-emptiness'') problem is undecidable in constant time, \ie{} for a fixed time bound~$T$.
That is, the following lemma shows that, given a PTA~$\A$, a target location~$\locTarget$ and a time bound~$T$, it is undecidable whether the set of parameter valuations for which there exists a run reaching~$\locTarget$ in exactly $T$~time units is empty or not.

\begin{lemma}[Reachability in constant time]\label{lemma:undecReachFixedTime}
		The reachability-emptiness problem in constant time is undecidable for PTAs with 4~clocks and 2~parameters.
\end{lemma}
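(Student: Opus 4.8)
The plan is to reduce from the halting problem of deterministic two-counter (Minsky) machines, which is undecidable~\cite{Minsky67}, using any fixed rational time bound $T > 0$ (e.g.\ $T = 1$). Given a 2CM~$M$, I would build a PTA~$\A_M$ over four clocks $\clock_1, \clock_2, \clock_3, \clockabs$ and two parameters $\param, \paramabs$, with a target location~$\locTarget$, such that $M$ halts if and only if there is a valuation~$\pval$ for which $\valuate{\A_M}{\pval}$ admits a run~$\varrun$ reaching~$\locTarget$ with $\duration(\varrun) = T$.

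The construction starts from a standard parametric encoding of~$M$ on three clocks and one parameter (the encoding underlying the undecidability proofs cited for \cref{proposition:TOE-undecidability}, e.g.~\cite{AHV93,BBLS15}): $\clock_1, \clock_2$ hold the two counter values (a counter of value~$c$ being encoded by a clock reading of the form $c\cdot\param$, where $\param$ is a parameter acting as an arbitrarily small counting unit), $\clock_3$ is an auxiliary clock used to test/copy/update counter values, and the instruction currently being simulated is recorded in the location; crucially, every guard and invariant of this encoding compares a clock with a term $\alpha\param + d$ for fixed integers $\alpha, d$, and for every sufficiently small~$\param$ the simulation is faithful until a counter overflows, reaching the location $\loc_{\mathrm{halt}}$ encoding HALT iff $M$ halts. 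I would then rescale this core by multiplying every integer constant~$d$ occurring in it by~$\paramabs$, so that guards become $\clock \compOp \alpha\param + d\paramabs$ — still in the PTA fragment, since no product of parameters appears. Under this rescaling the original "simulation-cycle period"~$1$ becomes~$\paramabs$, and the faithful simulation of an $n$-step halting computation reaches~$\loc_{\mathrm{halt}}$ after a total time $D\cdot\paramabs$ for some integer~$D$ determined by the (unique) halting computation. Finally I add the clock~$\clockabs$, never reset, which measures total elapsed time, together with a transition from~$\loc_{\mathrm{halt}}$ to~$\locTarget$ through a fresh location carrying the invariant $\clockabs \le T$ and the exit guard $\clockabs = T$; $\locTarget$ is reachable only via~$\loc_{\mathrm{halt}}$.

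If $M$ halts, pick $\pval(\param)$ small enough that no counter overflows in the rescaled core, and $\pval(\paramabs)$ small enough that $D\cdot\pval(\paramabs) \le T$: the simulation then reaches~$\loc_{\mathrm{halt}}$ at time $D\cdot\pval(\paramabs) \le T$, idles until $\clockabs = T$, and reaches~$\locTarget$ at time exactly~$T$. Conversely, if $M$ does not halt, then for every valuation some counter eventually overflows, so $\loc_{\mathrm{halt}}$ — hence $\locTarget$ — is unreachable, and in particular $\locTarget$ is not reached in time~$T$. The construction uses exactly the four clocks $\clock_1, \clock_2, \clock_3, \clockabs$ and the two parameters $\param, \paramabs$, and is clearly computable from~$M$ (the integer~$D$ plays no role in $\A_M$, only in the analysis). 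The main obstacle is the interface between the two parameters: one must check that turning the simulation's time scale into~$\paramabs$ keeps every guard and invariant linear — this is exactly where it matters that, in the base encoding, the counting parameter~$\param$ occurs only with fixed integer coefficients, so that scaling the additive integer constants by~$\paramabs$ never creates the forbidden term~$\param\cdot\paramabs$ — and that faithfulness of the simulation is preserved verbatim under this uniform rescaling, which it is since only relative clock values, all measured in the common unit~$\param$, drive the encoding.
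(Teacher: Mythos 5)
Your route is genuinely different from the paper's. The paper quotes \cite[Theorem~3.12]{ALM20} --- EF-emptiness is already undecidable over \emph{bounded} time for PTAs with 3~clocks and 2~parameters --- and then only needs a two-line gadget to pass from ``within $1$ time unit'' to ``at exactly $1$ time unit'': add one fresh never-reset clock $\clock$ and a transition from $\locfinal$ to a new location guarded by $\clock = 1$, yielding 4~clocks and 2~parameters. You instead rebuild a bounded-time undecidability argument from scratch: starting from a 3-clock/1-parameter two-counter-machine encoding \cite{Minsky67,BBLS15}, you introduce a second parameter $\paramabs$ compressing the simulation period, and add the fourth clock $\clockabs$ with exit guard $\clockabs = T$. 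What the paper's choice buys is that it never has to open the internals of the counter-machine encoding; what yours would buy, if completed, is independence from the bounded-time theorem, at the price of re-proving a version of it.

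As written, though, there are genuine gaps. First, the correctness of the rescaled core is asserted via a ``uniform rescaling'' argument that is not accurate: $\param$ is \emph{not} rescaled, so the modified PTA is not a time-dilated copy of the original at the same valuation. The statement you actually need (and which does hold) is a scaling correspondence: for any valuation with $\pval(\paramabs) > 0$, the runs of the modified PTA are exactly the runs of the base PTA at parameter value $\pval(\param)/\pval(\paramabs)$ with every delay multiplied by $\pval(\paramabs)$; phrased this way you can use the base undecidability result as a black box (if the machine does not halt, $\loc_{\mathrm{halt}}$ is unreachable for every base valuation, hence for every modified valuation with $\paramabs > 0$) instead of appealing to the internal ``counting unit'' structure. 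Second, that correspondence says nothing about $\pval(\paramabs) = 0$: there the guards degenerate to $\clock \compOp \alpha\param$ (all integer offsets erased), which is not an instance of the base encoding at any parameter value, so nothing you have said excludes a spurious run reaching $\loc_{\mathrm{halt}}$ --- and hence $\locTarget$ at time exactly $T$ --- even when the machine does not halt, which would break the emptiness direction of the reduction. You need either a direct argument for that degenerate valuation or a small gadget forcing $\paramabs > 0$ (e.g.\ an initial edge guarded $\clock < \paramabs$ with $\clock$ reset afterwards). A minor point of the same kind: in the halting direction you should fix the ratio $\pval(\param)/\pval(\paramabs)$ first (a witnessing valuation of the base encoding) and then shrink $\pval(\paramabs)$ below $T/D$, since ``$\param$ small enough'' only makes sense relative to $\paramabs$.
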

\begin{figure}[tb]
	{\centering
		\begin{tikzpicture}[->, >=stealth', auto, node distance=2cm, thin]

		\node[location, initial] (l0) at (-2, 0) {$\locinit$};
		\node[location] (lf) at (+1.8, 0) {$\locfinal$};
		\node[cloud, cloud puffs=15.7, cloud ignores aspect, minimum width=5cm, minimum height=2cm, align=center, draw] (cloud) at (0cm, 0cm) {$\A$};

		\node[location, final] (lf') at (+4, 0) {$\locfinal'$};

		\path
		(lf) edge node[] {$\styleclock{\clock} = 1$} (lf')
		;

		\end{tikzpicture}

	}
	\caption{Undecidability of EF-emptiness over constant time}
	\label{figure:undecidabilityEFemptConstantTime}
\end{figure}
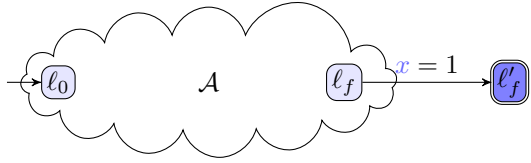
\begin{proof}
	In \cite[Theorem~3.12]{ALM20}, we showed that the EF-emptiness problem is undecidable over bounded time for PTAs with (at least) 3~clocks and 2~parameters.
	That is, given a fixed bound~$T$ and a location~$\locfinal$, it is undecidable whether the set of parameter valuations for which at least one run reaches $\locfinal$ within~$T$ time units is empty or not.

	We reduce the reachability in bounded time (\ie{} in at most~$T$ time units) to the reachability in constant time (\ie{} in exactly~$T$ time units).
	In this proof, we fix $T = 1$.

	Assume a PTA~$\A$ with a location~$\locfinal$.
	We define a PTA~$\A'$ as in \cref{figure:undecidabilityEFemptConstantTime} by
	adding a new location~$\locfinal'$, and a transition from~$\locfinal$ to~$\locfinal'$ guarded by $\clock = 1$, where $\clock$ is a new clock (initially~0), not used in~$\A$ and therefore never reset in the automaton.

	Let us show that there is no valuation such that $\locfinal$ is reachable in at most 1 time unit iff there is no valuation such that $\locfinal'$ is reachable exactly in 1 time unit.
	\begin{description}
	\item[$\Leftarrow$] Assume EF-emptiness holds in constant time for~$\A'$, \ie{} there exists no parameter valuation for which $\locfinal'$ is reachable in $T=1$ time units.
	Then, from the construction of~$\A'$, no parameter valuation exists for which $\locfinal$ is reachable in $T \leq 1$ time units.

	\item[$\Rightarrow$] Conversely, assume EF-emptiness holds over bounded time for~$\A$, \ie{} there exists no parameter valuation for which $\locfinal$ is reachable in $T \leq 1$ time units.
	Then, from the construction of~$\A'$, no parameter valuation exists for which $\locfinal'$ is reachable in $T = 1$ time units.
	\end{description}
	This concludes the proof of the lemma.
\end{proof}

We can now state and prove \cref{theorem:FTOE-undecidability}.

\begin{theorem}[undecidability]\label{theorem:FTOE-undecidability}
	The full timed opacity emptiness problem is undecidable for general PTAs with (at least) 4~clocks and 2~parameters.
\end{theorem}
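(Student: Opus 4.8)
The plan is to reduce from the reachability-emptiness problem in constant time, which \cref{lemma:undecReachFixedTime} shows undecidable already for PTAs with 4~clocks and 2~parameters. Fixing the time bound $T = 1$, I would take the PTA~$\A'$ built in the proof of \cref{lemma:undecReachFixedTime}, with its target location~$\locfinal'$ and its distinguished clock~$\clock$ that is never reset (so that~$\clock$ always equals the total elapsed time and, by construction of~$\A'$, every run reaching~$\locfinal'$ does so at time exactly~$1$); recall it is undecidable whether there is a parameter valuation~$\pval$ for which~$\locfinal'$ is reachable in~$\valuate{\A'}{\pval}$.

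From~$\A'$ I would build a PTA~$\A''$ with a small gadget analogous to the one of \cref{figure:undecidability}: add a fresh urgent initial location~$\locinit''$ with one unguarded edge to the former initial location of~$\A'$ and one unguarded edge to a fresh location~$\locpub$; add a fresh private location~$\locpriv$, entered by an unguarded edge from~$\locfinal'$ and left by an edge to a fresh final location~$\locfinal''$ guarded by $\clock = 1$; and add an edge from~$\locpub$ to~$\locfinal''$, also guarded by $\clock = 1$. No fresh clock or parameter is introduced: the urgency of~$\locinit''$ can be encoded with the already-present clock~$\clock$ (\eg{} the invariant $\clock \leq 0$), and both new guards reuse~$\clock$; hence~$\A''$ has 4~clocks and 2~parameters, and no edge of~$\A''$ resets~$\clock$.

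The core of the proof is to establish that, for every valuation~$\pval$, $\PubDurReach{\valuate{\A''}{\pval}}{\locpriv}{\locfinal''}$ is \emph{always} the singleton~$\{1\}$, whereas $\PrivDurReach{\valuate{\A''}{\pval}}{\locpriv}{\locfinal''}$ equals~$\{1\}$ when~$\locfinal'$ is reachable in~$\valuate{\A'}{\pval}$ and is~$\emptyset$ otherwise. The first claim holds because the only way to reach~$\locfinal''$ without visiting~$\locpriv$ is $\locinit'' \to \locpub \to \locfinal''$; since no time elapses in the urgent~$\locinit''$ and~$\clock$ starts at~$0$, the guard $\clock = 1$ pins the duration of this (always available, parameter-free) run to~$1$. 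The second holds because any run visiting~$\locpriv$ on the way to~$\locfinal''$ must first reach~$\locfinal'$ inside the embedded copy of~$\A'$ — possible exactly when~$\locfinal'$ is reachable in~$\valuate{\A'}{\pval}$, and then only at time~$1$ — after which the exit guard $\clock = 1$ on $\locpriv \to \locfinal''$ forbids any delay, forcing duration exactly~$1$. Therefore $\valuate{\A''}{\pval}$ is fully opaque \wrt{}~$\locpriv$ on the way to~$\locfinal''$ iff both $\mathit{DReach}$ sets equal~$\{1\}$, iff~$\locfinal'$ is reachable in~$\valuate{\A'}{\pval}$, iff~$\locfinal'$ is reachable in exactly one time unit in~$\valuate{\A'}{\pval}$; so the full timed opacity emptiness problem for~$\A''$ coincides with the reachability-emptiness problem in constant time for~$\A'$, which is undecidable, and the whole construction stays within 4~clocks and 2~parameters.

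I expect the main obstacle to be forcing the set of durations of private runs (those reaching~$\locfinal''$ through~$\locpriv$) to be \emph{exactly} a singleton rather than, say, an unbounded interval: unlike the intersection-based argument of \cref{proposition:TOE-undecidability}, full opacity demands \emph{equality} of the two $\mathit{DReach}$ sets, so the proof hinges on (i)~the gadget clock~$\clock$ never being reset in~$\A'$, which makes~$\locfinal'$ reachable only at a single time value, and (ii)~the post-$\locfinal'$ path being time-rigid thanks to the $\clock = 1$ exit guard. I would also carefully check that the auxiliary tweaks — encoding urgency by an invariant on~$\clock$, reusing~$\clock$ in the new guards — neither increase the clock count nor silently alter the behaviour of the embedded~$\A'$, in particular that all clocks of~$\A'$ still start at~$0$ because~$\locinit''$ is urgent.
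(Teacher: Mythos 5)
Your proof is correct and follows essentially the same route as the paper: a reduction from the constant-time reachability-emptiness problem of \cref{lemma:undecReachFixedTime}, wrapping its 4-clock, 2-parameter PTA in the same gadget (urgent initial location, a parameter-free public bypass forced to duration exactly~1, and a private location inserted after the lemma's target), reusing the never-reset clock so no clock or parameter is added. The only cosmetic difference is that you pin the private runs to duration~1 via the exit guard $\clock = 1$ on the edge leaving $\locpriv$, whereas the paper achieves this by making the post-$\locfinal$ locations urgent; both yield $\PrivDurReach{}{}{} \in \{\emptyset,\{1\}\}$ and $\PubDurReach{}{}{} = \{1\}$, so the equivalence with the undecidable problem goes through identically.
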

\begin{proof}
		We reduce from the reachability-emptiness problem in constant time, which is undecidable (\cref{lemma:undecReachFixedTime}).

		Consider an arbitrary PTA~$\A$ with (at least) 4~clocks and 2~parameters, with initial location~$\locinit$ and a given location~$\locfinal$.
		We add the following locations and transitions in~$\A$\ to obtain a PTA~$\A'$, as in \cref{figure:undecidabilityFullOpacityPTA}:
			a new urgent initial location $\locinit'$, with outgoing transition to $\locinit$ and to a new location $\locpub$,
			a new urgent location $\locpriv$ with an incoming transition from $\locfinal$,
			a new urgent and final location $\locfinal'$ with incoming transitions from $\locpriv$\ and $\locpub$, and
			a guard $\clock = 1$ (with a new clock $\clock$, never reset) on the transition from $\locpub$ to~$\locfinal'$.

		First, note that, due to the guarded transition, $\locfinal'$ is reachable for any parameter valuation and (only) for an execution time equal to~1 by runs going through $\locpub$ and not going through~$\locpriv$.
		That is, for all $\pval$, $\PubDurReach{\valuate{\A}{\pval}}{\locpriv}{\locfinal} = \set{1}$.

		We show that
			there exists a valuation $\pval$ such that $\valuate{\A'}{\pval}$ is fully opaque \wrt{} $\locpriv$ on the way to $\locfinal'$
			iff
			there exists a valuation $\pval$ such that $\locfinal$ is reachable in $\valuate{\A}{\pval}$ for execution time equal to~1.
		\begin{itemize}
			\item[$\Leftarrow$]
			Assume there exists some valuation $\pval$ such that $\locfinal$ is reachable from $\locinit$ in~$\A$ (only) for execution time equal to~1.
			Then, due to our construction, $\locpriv$ is reachable on the way to $\locfinal'$ in $\valuate{\A'}{\pval}$ for the exact same execution time 1.
			Therefore, $\valuate{\A'}{\pval}$ is fully opaque \wrt{} $\locpriv$ on the way to $\locfinal'$.

			\item[$\Rightarrow$]
			Conversely, if $\locfinal$ is not reachable from $\locinit$ in $\A$ for any valuation for execution time 1, then $\locpriv$ is not reachable on the way to $\locfinal'$ for any valuation of $\A'$.
                Therefore, there is no valuation $\pval$ such that $\valuate{\A'}{\pval}$ is fully opaque \wrt{} $\locpriv$ on the way to $\locfinal'$ for execution time~1. %
		\end{itemize}
		Therefore, there exists a valuation $\pval$ such that $\valuate{\A'}{\pval}$ is fully opaque \wrt{} $\locpriv$ on the way to $\locfinal'$
			iff
			there exists a valuation $\pval$ such that $\locfinal$ is reachable in $\valuate{\A}{\pval}$ for execution time equal to~1---which is undecidable.
		This concludes the proof.
		
		Let us briefly discuss the minimum number of clocks necessary to obtain undecidability using  our proof (the case of smaller numbers of clocks remains open).
		Recall that \cref{lemma:undecReachFixedTime} needs 4~clocks; in the current proof of \cref{theorem:FTOE-undecidability}, we add a new clock~$\clock$ which is never reset; however, since the proof of \cref{lemma:undecReachFixedTime} also uses a clock which is never reset, therefore we can reuse it, and our proof does not need an additional clock.
		So the result holds for 4~clocks and 2~parameters.
	\end{proof}
\begin{figure}[tb]
	{\centering
		\begin{tikzpicture}[->, >=stealth', auto, node distance=2cm, thin]

		\node[location] (l0) at (-2, 0) {$\locinit$};
		\node[location, urgent] (lf) at (+1.8, 0) {$\locfinal$};
		\node[cloud, cloud puffs=15.7, cloud ignores aspect, minimum width=5cm, minimum height=2cm, align=center, draw] (cloud) at (0cm, 0cm) {$\A$};

		\node[location, urgent, initial] (l0') at (-3.5, 0) {$\locinit'$};
		\node[location, urgent, private] (lpriv) at (+4, 0) {$\locpriv$};
		\node[location] (lpub) at (+2, -1.2) {$\locpub$};
		\node[location, final] (lf') at (+4, -1.2) {$\locfinal'$};

		\path
		(l0') edge (l0) %
		(lf) edge (lpriv) %
		(lpriv) edge (lf') %
		(l0') edge[out=-45,in=180] (lpub)
		(lpub) edge node {$\styleclock{\clock} = 1$} (lf')
		;

		\end{tikzpicture}

	}
	\caption{Reduction from reachability-emptiness} %
	\label{figure:undecidabilityFullOpacityPTA}
\end{figure}
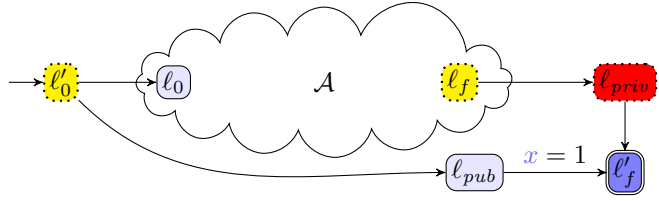
\subsection{Undecidability for L/U-PTAs}
	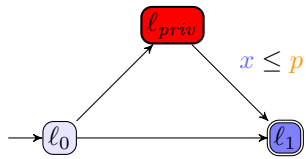
\begin{figure}[tb]
		{\centering
			\begin{tikzpicture}[->, >=stealth', auto, node distance=2cm, thin]

			\node[location, initial] at (0, 0) (s0) {$\loc_0$};
			\node[location, private] at (1.5,1.5) (sp) {$\locpriv$};
			\node[location, final] at (3,0) (s1) {$\loc_1$};

			\path (s0) edge node[align=center]{} (s1);
			\path (s0) edge node[align=center]{} (sp);
			\path (sp) edge[] node[align=center]{$\styleclock{\clock} \leq \styleparam{p}$} (s1);

			\end{tikzpicture}

		}
		\caption{$\A_{0,\infty}$ is not sufficient for the full timed opacity emptiness problem} %
		\label{figure:A0infNotSufficientForFTOEP}
	\end{figure}

	Note that reasoning like in \cref{ss:decLUPTA}, \ie{} reducing the emptiness problem to a decision problem of the non-parametric $\A_{0,\infty}$, is not relevant.
	\cref{figure:A0infNotSufficientForFTOEP} shows an L/U-PTA \A{} (and more precisely, a U-PTA, \ie{} an L/U-PTA with an empty set of lower-bound parameters~\cite{BlT09}) which is not fully opaque for any parameter valuation, but whose associated TA $\A_{0,\infty}$ is.

\begin{figure}[tb]
	{\centering
		\begin{tikzpicture}[->, >=stealth', auto, node distance=2cm, thin]

		\node[location, initial] at (0, 0) (s0) {$\loc_0$};
		\node[location, private] at (1.5,1.5) (sp) {$\locpriv$};
		\node[location, final] at (3,0) (s1) {$\locfinal$};

		\path (s0) edge node[align=center]{$\styleclock{\clock} \leq \styleparam{p}$} (s1);
		\path (s0) edge node[align=center]{} (sp);
		\path (sp) edge[] node[align=center]{$\styleclock{\clock} \leq 1$} (s1);

		\end{tikzpicture}

	}
	\caption{No monotonicity for full timed opacity in L/U-PTAs}
	\label{figure:counterexample-monotonicity-LU}
\end{figure}
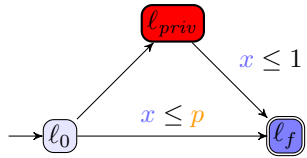

In addition, while it is well-known that L/U-PTAs enjoy a monotonicity for reachability properties (``enlarging an upper-bound parameter or decreasing a lower-bound parameter preserves reachability'') as recalled in \cref{lemma:HRSV02:prop4.2}, we can show in the following example that this is not the case for full timed opacity.

\begin{example}
	Consider the PTA in \cref{figure:counterexample-monotonicity-LU}.
	First assume $\pval$ such that $\pval(\param) = 0.5$.
    Then, $\valuate{\A}{\pval}$ is not full timed opaque: indeed, $\locfinal$ can be reached in 1 time unit via $\locpriv$, but not without going through~$\locpriv$.

	Second, assume $\pval'$ such that $\pval'(\param) = 1$.
	Then, $\valuate{\A}{\pval'}$ is full timed opaque: indeed, $\locfinal$ can be reached for any duration in $[0,1]$ by runs both going through and not going through $\locpriv$.

	Finally, let us enlarge~$\param$ further, and assume $\pval''$ such that $\pval''(\param) = 2$.
	Then, $\valuate{\A}{\pval''}$ becomes again not full timed opaque: indeed, $\locfinal$ can be reached in 2 time units not going through~$\locpriv$, but cannot be reached in 2 time units by going through~$\locpriv$.

	As a side note, remark that this PTA is actually a U-PTA, that is, monotonicity for this problem does not even hold for U-PTAs.
\end{example}

In fact, we show that, while the timed opacity emptiness problem is decidable for L/U-PTAs (\cref{proposition:decidability-LU}), the full timed opacity emptiness problem becomes undecidable for this same class (from 4~parameters).
This confirms (after previous works in~\cite{JLR15,ALime17,ALR18FORMATS}) that L/U-PTAs stand at the frontier between decidability and undecidability.

\begin{theorem}\label{theorem:FTOE:LU}
	The full timed opacity emptiness problem is undecidable for L/U-PTAs with (at least) 4~clocks and 4~parameters.
\end{theorem}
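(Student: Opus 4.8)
The plan is to adapt the reduction of \cref{theorem:FTOE-undecidability} so that the automaton it produces becomes an L/U-PTA, exploiting the fact that the full timed opacity condition $\PrivDurReach{\cdot}{\locpriv}{\locfinal} = \PubDurReach{\cdot}{\locpriv}{\locfinal}$ already encodes a kind of \emph{equality test} on parameters --- precisely the feature that L/U-PTAs otherwise lack, and which made \cref{proposition:decidability-LU} possible only for the weaker, non-full notion. We again reduce from reachability-emptiness in constant time (\cref{lemma:undecReachFixedTime}). Let $\A$ be the PTA witnessing that lemma: it has $4$ clocks, one of which ($\clock$) is never reset and is tested by a guard $\clock = 1$ on the last transition into $\locfinal$, and $2$ parameters $\param_1, \param_2$; and $\locfinal$ is reachable in $\valuate{\A}{\pval}$ iff it is reachable there at time exactly $1$ iff the encoded $2$-counter machine halts under the guess $\pval$. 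We shall also use that, as is customary in such encodings, every run of $\A$ reaching $\locfinal$ satisfies, for each $i$, a constraint pinning some clock to $\param_i$.

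First I would turn $\A$ into an L/U-PTA $\A_{\mathrm{LU}}$ with $4$ parameters by replacing every occurrence of $\param_i$ used as an upper bound by a fresh upper-bound parameter $\param_i^u$, and every occurrence used as a lower bound by a fresh lower-bound parameter $\param_i^l$ ($i = 1,2$). If $\pval(\param_i^l) = \pval(\param_i^u)$ for both $i$, then $\valuate{\A_{\mathrm{LU}}}{\pval}$ coincides with $\valuate{\A}{\pval'}$ for the corresponding $2$-parameter valuation; if $\pval(\param_i^l) > \pval(\param_i^u)$ for some $i$, the pinning constraint ``$\clock = \param_i$'' becomes the unsatisfiable ``$\clock \geq \param_i^l \land \clock \leq \param_i^u$'', so $\locfinal$ is unreachable. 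Then I would apply to $\A_{\mathrm{LU}}$ the construction of \cref{figure:undecidabilityFullOpacityPTA} (urgent $\locinit'$ with edges to $\locinit$ and to the non-urgent $\locpub$; $\locfinal$ made urgent with an edge to the urgent $\locpriv$; urgent final $\locfinal'$ with edges from $\locpriv$ and from $\locpub$, the latter guarded by $\clock = 1$, reusing the never-reset $\clock$), and in addition attach, for each $i \in \{1,2\}$, a \emph{poisoning gadget} $H_i$: a fresh non-urgent location $B_i$, with a reset-free edge from $\locinit'$ to $B_i$ and a reset-free edge from $B_i$ to $\locpriv$ guarded by $\clock > \param_i^l \land \clock < \param_i^u$. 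Call the result $\A'$. It still has $4$ clocks and $4$ parameters, and it is an L/U-PTA: $\clock > \param_i^l$ (resp.\ $\clock < \param_i^u$) is a lower-bound (resp.\ upper-bound) use, and $\clock = 1$ uses only the constant $1$.

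Next I would compute the two duration sets. The only runs of $\valuate{\A'}{\pval}$ reaching $\locfinal'$ while avoiding $\locpriv$ go $\locinit' \to \locpub \to \locfinal'$ and have duration exactly $1$ (since $\clock$ starts at $0$ at the urgent $\locinit'$ and is never reset), so $\PubDurReach{\valuate{\A'}{\pval}}{\locpriv}{\locfinal'} = \{1\}$ for every $\pval$. On the private side, $\PrivDurReach{\valuate{\A'}{\pval}}{\locpriv}{\locfinal'}$ is the union of the durations of runs reaching $\locfinal$ in $\valuate{\A_{\mathrm{LU}}}{\pval}$ (all equal to $1$, by the internal $\clock = 1$ guard of $\A$) together with, for each $i$ such that $\pval(\param_i^l) < \pval(\param_i^u)$, the whole nondegenerate open interval $(\pval(\param_i^l), \pval(\param_i^u))$ produced by $H_i$ (a run idling in $B_i$ until $\clock$ lies in that interval, then taking $B_i \to \locpriv \to \locfinal'$). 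Hence: if $\pval(\param_i^l) = \pval(\param_i^u)$ for both $i$, this set is $\{1\}$ when the encoded machine halts and $\emptyset$ otherwise; if $\pval(\param_i^l) < \pval(\param_i^u)$ for some $i$, it contains an interval, hence strictly more than $\{1\}$; and if $\pval(\param_i^l) > \pval(\param_i^u)$ for some $i$ (with no index strictly below), it is $\emptyset$. Therefore $\valuate{\A'}{\pval}$ is fully opaque \wrt{} $\locpriv$ on the way to $\locfinal'$ iff $\pval(\param_i^l) = \pval(\param_i^u)$ for both $i$ and the encoded machine halts; equivalently, there is a valuation making $\A'$ fully opaque iff there is a valuation making $\locfinal$ reachable in $\valuate{\A}{\cdot}$ in constant time $1$ --- which is undecidable. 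Since $\A'$ is an L/U-PTA with $4$ clocks and $4$ parameters, this proves the theorem.

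I expect the main obstacle to be the regime $\pval(\param_i^l) > \pval(\param_i^u)$: monotonicity arguments do not help here (indeed \cref{figure:counterexample-monotonicity-LU} shows full timed opacity is not monotone even for U-PTAs), so one must argue directly, from the shape of the encoding of \cref{lemma:undecReachFixedTime}, that no ``unintended'' run reaches $\locfinal$, so that the private duration set genuinely collapses to $\emptyset$ in that case. A secondary point requiring care is verifying that the gadgets $H_i$ create no new run reaching $\locfinal'$ while avoiding $\locpriv$ (they cannot, since the only edge out of $B_i$ leads to $\locpriv$), and that $H_i$ is truly inert when $\pval(\param_i^l) = \pval(\param_i^u)$, so that for such valuations $\A'$ behaves exactly as the automaton of the proof of \cref{theorem:FTOE-undecidability}.
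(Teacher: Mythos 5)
Your overall architecture is the same as the paper's: reduce from (bounded/constant-time) reachability-emptiness, split each original parameter $\param_i$ into $\param_i^l$ and $\param_i^u$, keep a never-reset clock so that the public side contributes exactly one duration, and add a gadget whose runs through $\locpriv$ produce a nondegenerate interval of durations exactly when $\pval(\param_i^l) < \pval(\param_i^u)$, thereby destroying full opacity for all ``$<$'' valuations (your poisoning gadget $H_i$ plays precisely the role of the $\loc_4$ gadget in \cref{figure:undecidabilityFTOELU}); the clock/parameter counts ($4$ and $4$) also come out right. The problem is the remaining regime $\pval(\param_i^l) > \pval(\param_i^u)$, which you yourself flag: your argument that the private duration set collapses to $\emptyset$ there rests on the assumption that \emph{every} run of the black-box automaton of \cref{lemma:undecReachFixedTime} reaching $\locfinal$ pins each parameter by an equality test $\param_i^l \leq \clock \leq \param_i^u$. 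Nothing in the paper (which treats the construction of \cite[Theorem~3.12]{ALM20} as a black box) licenses this, and it is not automatic even for typical $2$-counter-machine encodings (a parameter may simply not be exercised on some accepting run). Without it, a ``$>$'' valuation could reach $\locfinal$ at time~$1$ through unintended behaviour of $\A_{\mathrm{LU}}$, making the private set $\{1\}$ and yielding a spurious fully opaque valuation even when the original reachability instance is negative; so, as written, the reduction's correctness is not established.

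The gap is fixable, and it is instructive to compare the two repairs. The paper sidesteps the issue structurally: on the way from $\locinit'$ into $\locinit$ it inserts transitions guarded by $\param_1^l \leq \clockx \leq \param_1^u$ and $\param_2^l \leq \clockx \leq \param_2^u$ (plus a $\clockx = 1$ entry guard resetting all clocks of $\A$ but $\clockx$), so that $\A$ cannot be entered at all unless $\param_i^l \leq \param_i^u$ for both $i$ --- no assumption on the internals of $\A$ is needed, and the always-available $\clockx = 2$ public transition then makes every ``$>$'' valuation non-opaque. Alternatively, within your construction you could drop the pinning assumption and invoke \cref{lemma:HRSV02:prop4.2}: if $\locfinal$ were reachable at time~$1$ in $\valuate{\A_{\mathrm{LU}}}{\pval}$ with $\pval(\param_i^l) \geq \pval(\param_i^u)$, then choosing $c_i \in [\pval(\param_i^u), \pval(\param_i^l)]$ and setting both copies to $c_i$ decreases lower bounds and increases upper bounds, so the same run survives under the diagonal valuation, \ie{} $\locfinal$ is reachable in time~$1$ in the original $\A$ --- hence no spurious positives in the contrapositive direction. (Note this uses only monotonicity of \emph{reachability of runs}, which is sound, even though full timed opacity itself is not monotone, as \cref{figure:counterexample-monotonicity-LU} shows.) With either repair your proof goes through; without one of them, the ``$>$'' case is a genuine hole.
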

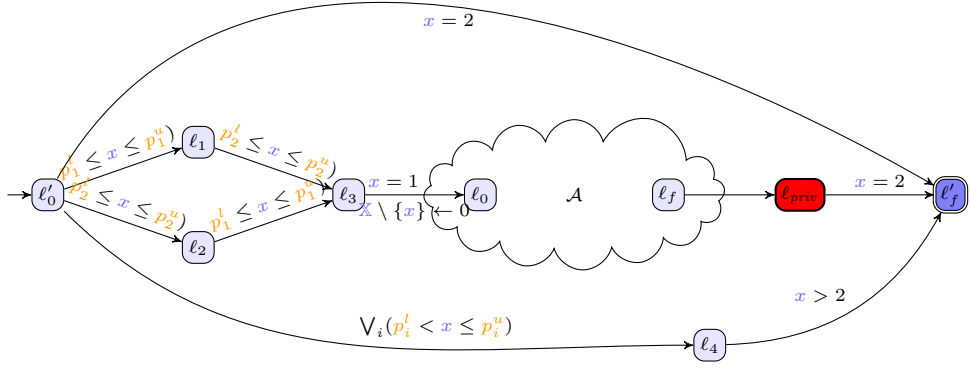
\begin{figure}[tb]
	{\centering
		\scriptsize
		\begin{tikzpicture}[->, >=stealth', auto, node distance=2cm, thin]

		\node[location] (l0) at (-1.25, 0) {$\locinit$};
		\node[cloud, cloud puffs=15.7, cloud ignores aspect, minimum width=4cm, minimum height=2cm, align=center, draw] (cloud) at (0, 0) {$\A$};
		\node[location] (lf) at (+1.25, 0) {$\locfinal$};

		\node[location, initial] (l0') at (-7, 0) {$\locinit'$};
		\node[location] (l1) at (-5, .7) {$\loc_1$};
		\node[location] (l2) at (-5, -.7) {$\loc_2$};
		\node[location] (l3) at (-3, 0) {$\loc_3$};
		\node[location, private] (lpriv) at (+3.0, 0) {$\locpriv$};
		\node[location] (l4) at (+1.8, -2) {$\loc_4$};

		\node[location, final] (lf') at (+5, 0) {$\locfinal'$};

		\path
		(lf) edge node[] {} (lpriv)
		(lpriv) edge node[] {$\clockx = 2$} (lf')

		(l0') edge node[sloped,align=center] {$\styleparam{\param_1^l} \leq \clockx \leq \styleparam{\param_1^u})$} (l1)
		(l1) edge node[sloped,align=center] {$\styleparam{\param_2^l} \leq \clockx \leq \styleparam{\param_2^u})$} (l3)
		(l0') edge node[sloped] {$\styleparam{\param_2^l} \leq \clockx \leq \styleparam{\param_2^u})$} (l2)
		(l2) edge node[sloped,align=center] {$\styleparam{\param_1^l} \leq \clockx \leq \styleparam{\param_1^u})$} (l3)
		(l3) edge node[above,align=center,xshift=-1em] {$\clockx = 1$} node[below] {$\styleclock{\Clock} \setminus \{ \clockx \} \leftarrow 0$} (l0)

		(l0') edge[out=315,in=180] node[] {$\bigvee_{i}(\styleparam{\param_i^l} < \clockx \leq \styleparam{\param_i^u})$} (l4)
		(l4) edge[bend right] node[] {$\clockx > 2$} (lf')

		(l0') edge[out=60,in=150] node[below] {$\clockx = 2$} (lf')

		;

		\end{tikzpicture}

	}
	\caption{Undecidability of full timed opacity emptiness for L/U-PTAs}
	\label{figure:undecidabilityFTOELU}
\end{figure}
\begin{proof}
	Let us recall from \cite[Theorem~3.12]{ALM20} that the EF-emptiness problem is undecidable over bounded time for PTAs with (at least) 3~clocks and 2~parameters.
	Assume a PTA~$\A$ with 3~clocks and 2~parameters, say $\param_1$ and $\param_2$, and a target location~$\locfinal$.
	Take~1 as a time bound.
	From \cite[Theorem~3.12]{ALM20}, it is undecidable whether there exists a parameter valuation for which~$\locfinal$ is reachable in time $\leq 1$.

	The idea of our proof is that, as in~\cite{JLR15,ABPP19}, we ``split'' each of the two parameters used in~$\A$ into a lower-bound parameter ($\param_1^l$ and $\param_2^l$) and an upper-bound parameter ($\param_1^u$ and $\param_2^u$).
	Each construction of the form $\clock < \param_i$ (resp.\ $\clock \leq \param_i$) is replaced with $\clock < \param_i^u$ (resp.\ $\clock \leq \param_i^u$)
	while
	each construction of the form $\clock > \param_i$ (resp.\ $\clock \geq \param_i$) is replaced with $\clock > \param_i^l$ (resp.\ $\clock \geq \param_i^l$);
	$\clock = \param_i$ is replaced with $\param_i^l \leq \clock \leq \param_i^u$.

	The idea is that the PTA~$\A$ is exactly equivalent to our construction with duplicated parameters only when $\param_1^l = \param_1^u$ and $\param_2^l = \param_2^u$.
	The crux of the rest of this proof is that we will ``rule out'' any parameter valuation not satisfying these equalities, so as to use directly the undecidability result of \cite[Theorem~3.12]{ALM20}.

	Now, consider the extension of~$\A$ given in \cref{figure:undecidabilityFTOELU}, and let~$\A'$ be this extension.
	We assume that $\clock$ is an extra clock not used in~$\A$.
	The syntax ``$\Clock \setminus \{ \clock \} \leftarrow 0$'' denotes that all clocks of the original PTA~$\A$ are reset---but not the new clock~$\clock$.
	The guard on the lower transition from $\locinit'$ to $\loc_4$ stands for 2 different transitions guarded with
		$\param_1^l < \clock \leq \param_1^u$,
		and
		$\param_2^l < \clock \leq \param_2^u$,
		respectively.
	Let us first make the following observations:
	\begin{enumerate}
		\item for any parameter valuation, one can take the upper transition from $\locinit'$ to $\locfinal'$ at time~2, \ie{} $\locfinal'$ is always reachable in time~2 without going through location~$\locpriv$;
		\item the original automaton~$\A$ can only be entered whenever $\param_1^l \leq \param_1^u$ and $\param_2^l \leq \param_2^u$; going from~$\locinit'$ to~$\locinit$ takes exactly 1 time unit (due to the $\clock = 1$ guard);
		\item if a run reaches $\locpriv$ on the way to~$\locfinal'$, then its duration is necessarily~2;
		\item from \cite[Theorem~3.12]{ALM20}, it is undecidable whether there exists a parameter valuation for which there exists a run reaching~$\locfinal$ from~$\locinit$ in time $\leq 1$, \ie{} reaching~$\locfinal$ from~$\locinit'$ in time $\leq 2$.
	\end{enumerate}

	Let us consider the following cases:
	\begin{enumerate}
		\item if $\param_1^l > \param_1^u$ or $\param_2^l > \param_2^u$, then thanks to the transitions from $\locinit'$ to~$\locinit$, there is no way to enter the original PTA~$\A$ (and therefore to reach $\locpriv$ on the way to~$\locfinal'$); since these valuations can still reach $\locfinal'$ for some execution times (notably $\clock = 2$ through the upper transition from~$\locinit'$ to~$\locfinal'$), then $\A'$ is not fully opaque for any of these valuations.

		\item if $\param_1^l < \param_1^u$ or $\param_2^l < \param_2^u$, then one of the lower transitions from~$\locinit'$ to~$\loc_4$ can be taken, and therefore $\locfinal'$ is reachable in a time $> 2$ without going through~$\locpriv$.
		Since no run can reach $\locfinal'$ while going through~$\locpriv$ for a duration $\neq 2$, then again $\A'$ is not fully opaque for any of these valuations.

		\item if $\param_1^l = \param_1^u$ and $\param_2^l = \param_2^u$, then the behavior of the modified~$\A$ (with duplicate parameters) is exactly the one of the original~$\A$.
			Also, note that the lower transitions from~$\locinit'$ to~$\locfinal'$ (via~$\loc_4$) cannot be taken.
			In contrast, the upper transition from~$\locinit'$ to~$\locfinal'$ can still be taken, and therefore there exists a run of duration~2 reaching~$\locfinal'$ without visiting~$\locpriv$.

			Now, assume there exists a parameter valuation for which there exists a run of~$\A$ of duration $\leq 1$ reaching $\locfinal$.
			And, as a consequence, $\locpriv$ is reachable, and therefore there exists some run of duration~2 (including the 1 time unit to go from~$\locinit$ to~$\locinit'$) reaching $\locfinal'$ after visiting~$\locpriv$.
			From the above reasoning, all runs reaching~$\locfinal'$ have duration~2; in addition, we exhibited a run visiting~$\locpriv$ and a run not visiting~$\locpriv$; therefore the modified automaton $\A'$ is fully opaque for such a parameter valuation.

			Conversely, assume there exists no parameter valuation for which there exists a run of~$\A$ of duration $\leq 1$ reaching $\locfinal$.
			In that case, $\A'$ is not fully opaque for any parameter valuation.
	\end{enumerate}

	As a consequence, there exists a parameter valuation~$\pval'$ for which $\valuate{\A'}{\pval'}$ is fully opaque iff there exists a parameter valuation~$\pval$ for which there exists a run in~$\valuate{\A}{\pval}$ of duration $\leq 1$ reaching $\locfinal$---which is undecidable from \cite[Theorem~3.12]{ALM20}.
\end{proof}

\section{Parameter synthesis for opacity}\label{section:synthesis}

Despite the negative theoretical result of \cref{proposition:TOE-undecidability}, we now address the timed opacity synthesis problem for the full class of PTAs.
Our method may not terminate (due to the undecidability) but, if it does, its result is correct.
Our workflow can be summarized as follows.

\begin{enumerate}
	\item We enrich the original PTA by adding a Boolean flag~$\bflag$ and a final synchronization action;
	\item We perform \emph{self-composition} (\ie{} parallel composition with a copy of itself) of this modified PTA;
	\item We perform reachability-synthesis using \EFsynth{} on~$\locfinal$ with contradictory values of~$\bflag$.
\end{enumerate}

We detail each operation in the following.

In this section, we assume a PTA~$\A$, a given private location~$\locpriv$ and a given final location~$\locfinal$.

\subsection{Enriching the PTA}\label{ss:enriching}

We first add a Boolean flag $\bflag$ initially set to $\BFalse$, and then set to $\BTrue$ on any transition whose target location is~$\locpriv$ (in the line of the proof of \cref{proposition:ET-opacity-computation}).
Therefore, $\bflag = \BTrue$ denotes that $\locpriv$ has been visited.
Second, we add a synchronization action $\actionEnd$ on any transition whose target location is~$\locfinal$.
Third, we add a new clock $\clockabs$ (never reset) together with a new parameter $\paramabs$, and we guard all transitions to~$\locfinal$ with $\clockabs = \paramabs$.
This will allow to measure the (parametric) execution time.
Let $\Enrich(\A, \locpriv, \locfinal)$ denote this procedure.

\begin{figure*}[tb]
\newcommand{\ratio}{0.5\textwidth}
 
	\centering
	\footnotesize

	\begin{tikzpicture}[scale=1, xscale=2.3, yscale=2.3, auto, ->, >=stealth']
 
		\node[location, initial] at (0,0) (l1) {$\loc_1$};
 
		\node[location] at (1, 0) (l2) {$\loc_2$};
 
		\node[location] at (1.8, 0) (l3) {$\loc_3$};
 
		\node[location] at (1.8, 1) (final1) {\styleloc{error}};
 
		\node[location] at (2.6, 0) (l4) {$\loc_4$};
 
		\node[location, private] at (3.5, 0) (hidden1) {$\locpriv$};
 
		\node[location] at (3.5, 1) (hidden2) {$\loc_5$};
 
		\node[location, final] at (4.7, 0) (final2) {$\locfinal$};
 
		\node[invariant, below=of l1] {$\clockcl \leq \styleparam{\epsilon}$};
		\node[invariant, below=of l2] {$\clockcl \leq \styleparam{\epsilon}$};
		\node[invariant, below=of l3] {$\clockcl \leq \styleparam{\epsilon}$};
		\node[invariant, below=of l4] {$\clockcl \leq \styleparam{\epsilon}$};
		
		\path (l1) edge node[align=center]{$\clockcl \leq \styleparam{\epsilon}$ \\ $\styleact{\mathrm{setupserver}}$} node[below] {$\clockcl \assign 0$} (l2);
 
		\path (l2) edge node[align=center]{$\clockcl \leq \styleparam{\epsilon}$ \\  $\styleact{\mathrm{read}?}\styledisc{x}$} node[below] {$\clockcl \assign 0$} (l3);

		\path (l3) edge node[above left, align=center]{$\clockcl \leq \styleparam{\epsilon}$ \\  $ \styledisc{x} < 0$} (final1);

		\path (l3) edge node[align=center]{$\clockcl \leq \styleparam{\epsilon}$ \\  $ \styledisc{x} \geq 0$ } node[below] {$\clockcl \assign 0$} (l4);

		\path (l4) edge node{\begin{tabular}{@{} c @{\ } c@{} }
		& $ \styledisc{x} \leq \styledisc{secret}$\\
		 $\land $ & $ \clockcl \leq \styleparam{\epsilon}$\\
		\end{tabular}} node [below, align=center]{$\clockcl\assign 0$ \\ $\styledisc{b}\assign\CTrue$} (hidden1);

		\path (l4) edge[bend left] node{\begin{tabular}{@{} c @{\ } c@{} }
		& $ \styledisc{x} > \styledisc{secret}$\\
		 $\land $ & $ \clockcl \leq \styleparam{\epsilon}$\\
		 & $\clockcl\assign 0$\\
		\end{tabular}} (hidden2);

		\path (hidden1) edge node{\begin{tabular}{@{} c @{\ } c@{} }
		& $ 32^2  \leq \clockcl$\\
		$\land$ & $ \clockcl \leq 32^2 + \styleparam{\epsilon}$\\
		$\land$ & $\styleclock{ \clockabs }= \styleparam{\paramabs}$\\
		\end{tabular}} node[below]{$\actionEnd$} (final2);

		\path (hidden2) edge[bend left] node[right,xshift=8]{\begin{tabular}{@{} c @{\ } c@{} }
		& $ \styleparam{p} \times 32^2 \leq \clockcl$\\
		$\land$ & $ \clockcl \leq \styleparam{p} \times 32^2 + \styleparam{\epsilon}$\\
		$\land$ & $\styleclock{ \clockabs }= \styleparam{\paramabs}$\\
		& $\actionEnd$\\
		\end{tabular}} (final2);
 
	\end{tikzpicture}
	\caption{Transformed version of \cref{figure:example-Java:PTA}}
	\label{figure:example-Java:PTA-transformed}

\end{figure*}
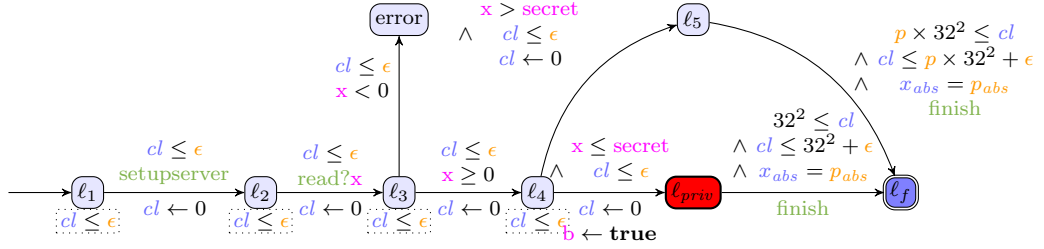

\begin{example}
	\cref{figure:example-Java:PTA-transformed} shows the transformed version of the PTA in \cref{figure:example-Java:PTA}.
\end{example}
\subsection{Self-composition}

We use here the principle of \emph{self-composition}~\cite{BDR11}, \ie{} composing the PTA with a copy of itself.
More precisely, given a PTA~$\A' = \Enrich(\A, \locpriv, \locfinal)$, we first perform an identical copy of~$\A'$ \emph{with distinct variables}: that is, a clock~$\clock$ of~$\A'$ is distinct from a clock~$\clock$ in the copy of~$\A'$---which can be trivially performed using variable renaming.\footnote{%
	In fact, the fresh clock~$\clockabs$ and parameter~$\paramabs$ can be shared to save two variables, as~$\clockabs$ is never reset, and both PTAs enter~$\locfinal$ at the same time, therefore both ``copies'' of~$\clockabs$ and $\paramabs$ always share the same values.
}
Let~$\Copy(\A')$ denote this copy of~$\A'$.
We then compute $\A' \parallel_{\{\actionEnd\}} \Copy(\A')$.
That is, $\A'$ and~$\Copy(\A')$ evolve completely independently due to the interleaving---except that they are forced to enter~$\locfinal$ at the same time, thanks to the synchronization action~$\actionEnd$.

\subsection{Synthesis}\label{ss:synthesis}

Then, we apply reachability synthesis \EFsynth{} (over all parameters, \ie{} the ``internal'' timing parameters, but also the $\paramabs$ parameter) to the following goal location:
the original~$\A'$ is in~$\locfinal$ with $\bflag = \BTrue$ while its copy~$\Copy(\A')$ is in~$\locfinal'$ with $\bflag' = \BFalse$ (primed variables denote variables from the copy).
Intuitively, we synthesize timing parameters and execution times such that there exists a run reaching~$\locfinal$ with $\bflag = \BTrue$ (\ie{} that has visited $\locpriv$) and there exists another run of same duration reaching~$\locfinal$ with $\bflag = \BFalse$ (\ie{} that has not visited $\locpriv$).

Let $\SynthOp(\A, \locpriv, \locfinal)$ denote the entire procedure.
We formalize $\SynthOp$ in \cref{algo:SynthOp}, where ``$\locfinal \land \bflag = \BTrue$'' denotes the location $\locfinal$ with $\bflag = \BTrue$.
Recall that $\paramabs$ is added by the enrichment step described in \cref{ss:enriching}.
The set of execution times $\Times$ is therefore given by the possible valuations of~$\paramabs$; these valuations may depend on the model timing parameters (in the form of a constraint).
Finally note that \EFsynth{} is called on a set made of a single location of $\A' \parallel_{\{\actionEnd\}} \Copy(\A')$; by definition of the synchronous product, this location is a \emph{pair} of locations, one from~$\A'$ (\ie{} ``$\locfinal \land \bflag = \BTrue$'') and one from~$\Copy(\A')$ (\ie{} ``$\locfinal' \land \bflag' = \BFalse$'').

\begin{algorithm}[tb]
	\Input{A PTA $\A$ with parameters set~$\Param$, locations~$\locpriv,\locfinal$}
	\Output{Parameter constraint $\K$ over $\Param \cup \{ \paramabs \}$}

	\BlankLine

	$\A' \assign \Enrich(\A, \locpriv, \locfinal)$

	$\A'' \assign \A' \parallel_{\{\actionEnd\}} \Copy(\A')$

	\Return $\EFsynth \Big(\A'', \big\{ ( \locfinal \land \bflag = \BTrue , \locfinal' \land \bflag' = \BFalse ) \big\} \Big)$

	\caption{$\SynthOp(\A, \locpriv, \locfinal)$}
	\label{algo:SynthOp}
\end{algorithm}
\begin{example}
	Consider again the PTA~$\A$ in \cref{figure:example-Java:PTA}: its enriched version~$\A'$ is given in \cref{figure:example-Java:PTA-transformed}.
	Fix $\pval(\styleparam{\epsilon}) = 1$, $\pval(\styleparam{p}) = 2$.
	We then perform the synthesis applied to the self-composition of~$\A'$ according to \cref{algo:SynthOp}.
	The result obtained with \imitator{} is:
	$\paramabs = \emptyset$ (as expected from \cref{example:Java-PTA}).

	Now fix $\pval(\styleparam{\epsilon}) = 2$, $\pval(\styleparam{p}) = 1.002$.
	We obtain:
	$\paramabs \in [1026.048, 1034]$ (again, as expected from \cref{example:Java-PTA}).

	Now let us keep all parameters unconstrained.
	The result of \cref{algo:SynthOp} is the following 3-dimensional constraint:
\begin{tabular}{l l}
 & $5 \times \styleparam{\epsilon} + 1024  \geq \styleparam{\paramabs} \geq 1024$
 \\
$ \land$ & $1024 \times \styleparam{p} + 5 \times \styleparam{\epsilon} \geq \styleparam{\paramabs} \geq 1024 \times \styleparam{p}  \geq 0$
 \end{tabular}
\end{example}
\subsection{Correctness}\label{ss:correctness}

We will state below that, whenever $\SynthOp(\A, \locpriv, \locfinal)$ terminates, then its result is an exact (sound and complete) answer to the timed opacity synthesis problem.

Let us first prove a technical lemma used later to prove the soundness of \SynthOp{}.

\begin{lemma}\label{lemma:ET}
	Assume $\SynthOp(\A, \locpriv, \locfinal)$ terminates with result~$\K$.
	For all $\pval \models \K$, there exists a run ending in~$\locfinal$ at time~$\pval(\paramabs)$ in~$\valuate{\A}{\pval}$.
\end{lemma}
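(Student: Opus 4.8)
The plan is to unfold the definition of $\SynthOp$, invoke the correctness of $\EFsynth$, and then pull the resulting run back through the two constructions performed by $\SynthOp$ — self-composition and enrichment — keeping track of durations by means of the never-reset clock $\clockabs$.

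Recall from \cref{algo:SynthOp} that $\SynthOp(\A, \locpriv, \locfinal)$ returns $\EFsynth\big(\A'', \{(\locfinal \land \bflag = \BTrue,\, \locfinal' \land \bflag' = \BFalse)\}\big)$, where $\A' = \Enrich(\A, \locpriv, \locfinal)$ and $\A'' = \A' \parallel_{\{\actionEnd\}} \Copy(\A')$. Assume $\SynthOp$ terminates with result~$\K$ and fix $\pval \models \K$. By \cref{prop:EFsynth}, since $\EFsynth$ terminates, the target set is reachable in $\valuate{\A''}{\pval}$, so there is a run $\varrun''$ of $\valuate{\A''}{\pval}$ whose last state has location $(\locfinal \land \bflag = \BTrue,\, \locfinal' \land \bflag' = \BFalse)$ (truncating $\varrun''$ at its first visit of this target if necessary). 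Since $\Enrich$ labels by $\actionEnd$ exactly the edges of $\A'$ entering $\locfinal$, and $\actionEnd$ lies in the synchronization set, the last transition of $\varrun''$ is a joint $\actionEnd$-transition in which both copies enter $\locfinal$, resp.\ $\locfinal'$, simultaneously.

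First I would project $\varrun''$ onto its first component to obtain a run $\varrun'$ of $\valuate{\A'}{\pval}$. By \cref{definition:parallel}, each discrete step of $\A''$ is either a step of a single copy (the other staying put) or, for $\actionEnd$, a joint step of both copies, whereas delay steps are common to both copies. Keeping the delays and the discrete steps involving the first copy (merging consecutive delays, and viewing the final joint $\actionEnd$-step as a step of the first copy) therefore yields a genuine run $\varrun'$ of $\valuate{\A'}{\pval}$ performing the same delays as $\varrun''$ and ending in $\locfinal$ (with $\bflag = \BTrue$). By construction of $\Enrich$ (see \cref{ss:enriching}), every edge of $\A'$ entering $\locfinal$ carries the conjunct $\clockabs = \paramabs$, and $\clockabs$ is fresh and never reset, so along $\varrun'$ its value equals the total elapsed time; evaluating this at the last transition of $\varrun'$ gives $\duration(\varrun') = \pval(\paramabs)$.

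Then I would project $\varrun'$ onto $\A$. The enrichment merely splits each location according to the value of $\bflag$, labels some edges with the fresh action $\actionEnd$, adds the fresh clock $\clockabs$ together with the fresh parameter $\paramabs$, and strengthens the guards of the edges entering $\locfinal$ with $\clockabs = \paramabs$; in particular it removes no guard and no invariant of $\A$. Hence, forgetting the $\bflag$-component of locations, the action labels, and the clock $\clockabs$, the sequence $\varrun'$ becomes a run $\varrun$ of $\valuate{\A}{\pval}$ with the same delays, ending in $\locfinal$; consequently $\duration(\varrun) = \duration(\varrun') = \pval(\paramabs)$, as required. The two projection steps are the only delicate points — one must argue with some care that a run of the self-composition restricts to a run of a single copy, and that a run of the enriched automaton restricts to a run of the original one, both times preserving the sequence of delays; once this is done, the quantitative conclusion follows immediately from $\clockabs$ never being reset.
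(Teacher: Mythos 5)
Your proof is correct and follows essentially the same route as the paper's: the key point in both is that $\Enrich$ guards every transition into $\locfinal$ with $\clockabs = \paramabs$ where $\clockabs$ is never reset, so any witnessing run must reach $\locfinal$ exactly at time $\pval(\paramabs)$. You merely make explicit the invocation of \cref{prop:EFsynth} and the two projection steps (from the self-composition to $\A'$ and from $\A'$ to $\A$), which the paper leaves implicit here and largely defers to the proof of \cref{proposition:soundness}.
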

\begin{proof}
	From the construction of the procedure~$\Enrich$, we added a new clock $\clockabs$ (never reset) together with a new parameter $\paramabs$, and we guarded all transitions to~$\locfinal$ with $\clockabs = \paramabs$.
	Therefore, valuations of~$\paramabs$ correspond exactly to the times at which $\locfinal$ can be reached in~$\valuate{\A}{\pval}$.
\end{proof}

We can now prove soundness and completeness.

\begin{proposition}[soundness]\label{proposition:soundness}
	Assume $\SynthOp(\A, \locpriv, \locfinal)$ terminates with result~$\K$.
	For all $\pval \models \K$, there exists a run of duration $\pval(\paramabs)$ such that $\locpriv$ is reachable on the way to~$\locfinal$ in~$\valuate{\A}{\pval}$
		and
	there exists a run of duration $\pval(\paramabs)$ such that $\locpriv$ is avoided on the way to~$\locfinal$ in~$\valuate{\A}{\pval}$.
\end{proposition}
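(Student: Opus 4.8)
The plan is to unravel the definition of $\SynthOp$ layer by layer, combining the correctness of $\EFsynth$ (\cref{prop:EFsynth}) with a decomposition of a single witnessing run of the self-composition into two runs of the original automaton. Recall that $\SynthOp(\A,\locpriv,\locfinal)$ returns $\EFsynth\big(\A'',\{(\locfinal \land \bflag=\BTrue,\locfinal'\land\bflag'=\BFalse)\}\big)$, where $\A'=\Enrich(\A,\locpriv,\locfinal)$ and $\A''=\A'\parallel_{\{\actionEnd\}}\Copy(\A')$. So I would start from the hypothesis that this call terminates with result~$\K$, fix an arbitrary $\pval\models\K$, and apply \cref{prop:EFsynth} to obtain a run~$\varrun''$ of $\valuate{\A''}{\pval}$ reaching the pair location $(\locfinal\land\bflag=\BTrue,\locfinal'\land\bflag'=\BFalse)$.

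Next I would decompose $\varrun''$ into one run per copy. By the definition of the synchronized product (\cref{definition:parallel}), every action other than $\actionEnd$ is interleaved, while $\actionEnd$ is synchronized; and by construction of $\Enrich$ (\cref{ss:enriching}), $\actionEnd$ labels exactly the transitions whose target is~$\locfinal$ (resp.\ the copy's final location). Hence $\varrun''$ projects onto a run~$\varrun_1$ of $\valuate{\A'}{\pval}$ ending in $\locfinal$ with $\bflag=\BTrue$ and a run~$\varrun_2$ of $\valuate{\Copy(\A')}{\pval}$ ending in its final location with $\bflag'=\BFalse$; and since the two copies can only enter their final locations by taking the $\actionEnd$ transition \emph{simultaneously}, the total delays accumulated along $\varrun_1$ and $\varrun_2$ coincide, so $\duration(\varrun_1)=\duration(\varrun_2)$. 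Undoing the fresh-variable renaming that defines $\Copy$, $\varrun_2$ becomes a run of $\valuate{\A'}{\pval}$ ending in $\locfinal$ with $\bflag=\BFalse$ of the same duration. Moreover, by \cref{lemma:ET} this common duration equals $\pval(\paramabs)$, since the guard $\clockabs=\paramabs$ on the transitions into $\locfinal$ pins it down.

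Finally I would transport both runs from $\valuate{\A'}{\pval}$ back to $\valuate{\A}{\pval}$ and read off the position of~$\locpriv$. The enrichment only adds the flag~$\bflag$ (set to $\BTrue$ precisely on transitions targeting~$\locpriv$, and never reset), the label~$\actionEnd$, the clock~$\clockabs$ and the guard $\clockabs=\paramabs$ on edges into~$\locfinal$; erasing these additions turns $\varrun_1$ and the renamed $\varrun_2$ into genuine runs of $\valuate{\A}{\pval}$ of duration $\pval(\paramabs)$ ending in~$\locfinal$, without affecting the recorded flag value. A run ending with $\bflag=\BTrue$ must have taken a transition into~$\locpriv$ before reaching~$\locfinal$, i.e.\ $\locpriv$ is reached on the way to~$\locfinal$ along it; a run ending with $\bflag=\BFalse$ took no such transition, i.e.\ $\locpriv$ is avoided on the way to~$\locfinal$. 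This yields exactly the two runs required by the statement.

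I expect the main obstacle to be the decomposition step: arguing carefully, from \cref{definition:parallel}, that a reachability witness in the self-composition really does split into two \emph{independent} runs of the components, and — crucially — that they have the \emph{same} duration. The equality of durations is precisely where the $\Enrich$ construction must be invoked: $\actionEnd$ has to be the unique action on every edge entering the final location, so that strong synchronization forces both copies to reach $\locfinal$ at the same global time. The remaining parts (the $\A' \leftrightarrow \A$ correspondence and reading the flag as ``reached/avoided on the way to~$\locfinal$'') are essentially bookkeeping once the enrichment is set up as in \cref{ss:enriching}.
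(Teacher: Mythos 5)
Your proposal is correct and follows essentially the same route as the paper's own proof: invoke \cref{prop:EFsynth} on the self-composition, use \cref{lemma:ET} to pin the duration to $\pval(\paramabs)$, and split the witnessing run into the two interleaved component runs whose $\bflag$ values encode visiting/avoiding $\locpriv$. Your write-up is in fact more explicit than the paper's about why the $\actionEnd$ synchronization forces the two projected runs to have equal duration, which is a welcome elaboration rather than a deviation.
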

\begin{proof}
	$\SynthOp(\A, \locpriv, \locfinal)$ is the result of \EFsynth{} called on the self-composition of~$\Enrich(\A, \locpriv, \locfinal)$.
	Recall that $\Enrich$ has enriched~$\A$ with the addition of a guard $\clockabs = \paramabs$ on the incoming transitions of~$\locfinal$, as well as a Boolean flag $\bflag$ that is $\BTrue$ iff $\locpriv$ was visited along a run.
	Assume $\pval \models \K$.
	From \cref{prop:EFsynth}, there exists a run of $\A''$ reaching $ \locfinal \land \bflag = \BTrue , \locfinal' \land \bflag' = \BFalse$.
	From \cref{lemma:ET}, this run
	takes $\pval(\paramabs)$ time units.
	From the self-composition that is made of interleaving only (except for the final synchronization), there exists a run of duration $\pval(\paramabs)$ such that $\locpriv$ is reachable on the way to~$\locfinal$ in~$\valuate{\A}{\pval}$
	and
	there exists a run of duration $\pval(\paramabs)$ such that $\locpriv$ is avoided on the way to~$\locfinal$ in~$\valuate{\A}{\pval}$.
\end{proof}
\begin{proposition}[completeness]\label{proposition:completeness}
	Assume $\SynthOp(\A, \locpriv, \locfinal)$ terminates with result~$\K$.
	Assume $\pval$.
	Assume there exists a run of duration $\pval(\paramabs)$ such that $\locpriv$ is reachable on the way to~$\locfinal$ in~$\valuate{\A}{\pval}$
		and
	there exists a run of duration $\pval(\paramabs)$ such that $\locpriv$ is avoided on the way to~$\locfinal$ in~$\valuate{\A}{\pval}$.
	Then $\pval \models \K$.
\end{proposition}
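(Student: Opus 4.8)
The plan is to prove completeness as the converse of soundness, again via the correctness of \EFsynth{} (\cref{prop:EFsynth}). Since $\SynthOp(\A, \locpriv, \locfinal)$ returns $\EFsynth\big(\A'', \{ (\locfinal \land \bflag = \BTrue, \locfinal' \land \bflag' = \BFalse) \}\big)$ where $\A'' = \A' \parallel_{\{\actionEnd\}} \Copy(\A')$ and $\A' = \Enrich(\A, \locpriv, \locfinal)$, it suffices to exhibit, for the given $\pval$, a run of $\valuate{\A''}{\pval}$ reaching the symbolic target $(\locfinal \land \bflag = \BTrue, \locfinal' \land \bflag' = \BFalse)$; then $\pval \models \K$ follows from \cref{prop:EFsynth}.

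First I would lift the two hypothesised runs to the enriched automaton and its copy. Let $\varrun_1$ be a run of $\valuate{\A}{\pval}$ of duration $\pval(\paramabs)$ reaching $\locfinal$ along which $\locpriv$ is reached, and $\varrun_2$ a run of the same duration reaching $\locfinal$ along which $\locpriv$ is avoided. Recall that $\Enrich$ only adds: the Boolean flag $\bflag$, set to $\BTrue$ exactly on the edges entering $\locpriv$; the never-reset clock $\clockabs$ and parameter $\paramabs$, with a guard $\clockabs = \paramabs$ added to every edge entering $\locfinal$; and the label $\actionEnd$ on those same edges. Since $\clockabs$ is never reset and $\duration(\varrun_1) = \pval(\paramabs)$, the value of $\clockabs$ when the last edge of $\varrun_1$ is taken is exactly $\pval(\paramabs)$, so the new guard $\clockabs = \paramabs$ holds (this is the converse of the reasoning in \cref{lemma:ET}). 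Hence $\varrun_1$ lifts to a run $\varrun_1'$ of $\valuate{\A'}{\pval}$ reaching $\locfinal$ with $\bflag = \BTrue$ (as $\locpriv$ was visited), and symmetrically $\varrun_2$ lifts to a run $\varrun_2'$ of $\valuate{\Copy(\A')}{\pval}$ reaching $\locfinal'$ with $\bflag' = \BFalse$ (as $\locpriv'$ was avoided).

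Next I would interleave $\varrun_1'$ and $\varrun_2'$ into a run $\varrun''$ of $\valuate{\A''}{\pval}$. Both lifted runs have total duration $\pval(\paramabs)$ and take their $\actionEnd$-labelled edge as their last (and only $\actionEnd$) transition. I would take the common refinement of the two partitions of $[0, \pval(\paramabs)]$ induced by the delays of $\varrun_1'$ and $\varrun_2'$, and schedule in $\A''$: delay transitions of the appropriate sublengths, interleaved with the discrete transitions of the two components each fired at its original time, with transitions that fall at the same instant ordered arbitrarily since none of them, except the final $\actionEnd$, is synchronized. A delay at time $t$ is legal because the product invariant at the current pair of locations is the conjunction of the two component invariants, which holds since the clock valuation of each component at time $t$ satisfies its own invariant; the clocks of the two copies are disjoint so no guard value is disturbed. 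At time $\pval(\paramabs)$ both components are poised to take their $\actionEnd$-edge, which they then fire synchronously, so $\varrun''$ reaches $(\locfinal \land \bflag = \BTrue, \locfinal' \land \bflag' = \BFalse)$, and \cref{prop:EFsynth} gives $\pval \models \K$.

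The main obstacle is the interleaving construction: one has to verify carefully that subdividing the delays and reordering the non-synchronized discrete transitions yields a genuine run of the product TTS — that every intermediate state satisfies the conjoined invariants and that all guards remain satisfied with unchanged clock values. This is routine precisely because the two copies share no clocks and synchronize only on the terminal action $\actionEnd$ (both components reaching their final location at the same time $\pval(\paramabs)$), but it is the only place where some care is needed.
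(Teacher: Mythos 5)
Your proof is correct and follows essentially the same route as the paper's: lift the two hypothesised runs through $\Enrich$ (checking the $\clockabs = \paramabs$ guard and the flag values), interleave them in the self-composition so that both components fire $\actionEnd$ simultaneously at time $\pval(\paramabs)$, and conclude with the correctness of \EFsynth{} (\cref{prop:EFsynth}). You simply spell out the interleaving construction in more detail than the paper, which asserts it directly from the definition of the synchronized product.
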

\begin{proof}
	Assume $\SynthOp(\A, \locpriv, \locfinal)$ terminates with result~$\K$.
	Assume $\pval$.
	Assume there exists a run~$\varrun$ of duration $\pval(\paramabs)$ such that $\locpriv$ is reachable on the way to~$\locfinal$ in~$\valuate{\A}{\pval}$
		and
	there exists a run~$\varrun'$ of duration $\pval(\paramabs)$ such that $\locpriv$ is avoided on the way to~$\locfinal$ in~$\valuate{\A}{\pval}$.

	First, from~$\Enrich$, there exists a run~$\varrun$ of duration $\pval(\paramabs)$ such that $\locpriv$ is reachable (resp.\ avoided) on the way to~$\locfinal$ in~$\valuate{\A}{\pval}$ implies that there exists a run~$\varrun$ of duration $\pval(\paramabs)$ such that $\locfinal \land \bflag = \BTrue$ (resp.\ $\bflag = \BFalse$) is reachable in~$\valuate{\Enrich(\A)}{\pval}$.

	Since our self-composition allows any interleaving, runs~$\varrun$ of $\valuate{\A'}{\pval}$ and~$\varrun'$ in~$\valuate{\Copy(\A')}{\pval}$ are independent---except for reaching~$\locfinal$.
	Since $\varrun$ and $\varrun'$ have the same duration $\pval(\paramabs)$, then they both reach $\locfinal$ at the same time and, from our definition of self-composition, they can simultaneously fire action~$\actionEnd$ and enter~$\locfinal$ at time~$\pval(\paramabs)$.
	Hence, there exists a run reaching $\locfinal \land \bflag = \BTrue , \locfinal' \land \bflag' = \BFalse$ in~$\valuate{\A''}{\pval}$.

	Finally, from \cref{prop:EFsynth}, $\pval \models \K$.
\end{proof}
\begin{theorem}[correctness]\label{theorem:correctness}
	Assume $\SynthOp(\A, \locpriv, \locfinal)$ terminates with result~$\K$.
	Assume $\pval$.
	The following two statements are equivalent:
	\begin{enumerate}
		\item
			There exists a run of duration $\pval(\paramabs)$ such that $\locpriv$ is reachable on the way to~$\locfinal$ in~$\valuate{\A}{\pval}$
				and
			there exists a run of duration $\pval(\paramabs)$ such that $\locpriv$ is avoided on the way to~$\locfinal$ in~$\valuate{\A}{\pval}$.
		\item $\pval \models \K$.
	\end{enumerate}
\end{theorem}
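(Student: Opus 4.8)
The plan is to observe that this theorem is nothing more than the conjunction of the two immediately preceding results, which have already been established. Concretely, the implication ``$\pval \models \K$ $\Rightarrow$ (1)'' is precisely the statement of \cref{proposition:soundness} (soundness of \SynthOp{}), and the converse implication ``(1) $\Rightarrow$ $\pval \models \K$'' is precisely the statement of \cref{proposition:completeness} (completeness of \SynthOp{}). So the proof should simply invoke both propositions, under the shared hypothesis that $\SynthOp(\A, \locpriv, \locfinal)$ terminates with result~$\K$, and conclude the equivalence.

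In slightly more detail: first I would restate the hypothesis (termination of \SynthOp{} with output~$\K$), so that both \cref{proposition:soundness,proposition:completeness} are applicable. Then, fixing an arbitrary parameter valuation~$\pval$, I would note that \cref{proposition:soundness} gives $(2) \Rightarrow (1)$ directly: if $\pval \models \K$, it produces a run of duration $\pval(\paramabs)$ reaching~$\locfinal$ through~$\locpriv$ and another of the same duration reaching~$\locfinal$ while avoiding~$\locpriv$. Conversely, \cref{proposition:completeness} gives $(1) \Rightarrow (2)$: the existence of those two runs forces $\pval \models \K$. Since $\pval$ was arbitrary, the two statements are equivalent, which is exactly the claim.

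There is essentially no obstacle here — the genuine content lies in \cref{proposition:soundness,proposition:completeness} (and before that in the correctness of \EFsynth{}, \cref{prop:EFsynth}, and \cref{lemma:ET} linking $\paramabs$ valuations to reachability times of~$\locfinal$). The only thing worth being careful about is that the self-composition $\A'' = \A' \parallel_{\{\actionEnd\}} \Copy(\A')$ genuinely behaves as an interleaving of two independent copies synchronized only on~$\actionEnd$, so that a pair of runs of equal duration in~$\valuate{\A}{\pval}$ corresponds to a single run of~$\valuate{\A''}{\pval}$ reaching $(\locfinal \land \bflag = \BTrue, \locfinal' \land \bflag' = \BFalse)$ and vice versa — but this reasoning is already carried out inside the proofs of the two propositions, so the theorem's proof only needs to cite them.

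\begin{proof}
	This is an immediate consequence of \cref{proposition:soundness,proposition:completeness}.
	Assume $\SynthOp(\A, \locpriv, \locfinal)$ terminates with result~$\K$, and let $\pval$ be a parameter valuation.
	By \cref{proposition:soundness}, if $\pval \models \K$ then there exists a run of duration $\pval(\paramabs)$ such that $\locpriv$ is reachable on the way to~$\locfinal$ in~$\valuate{\A}{\pval}$, and there exists a run of duration $\pval(\paramabs)$ such that $\locpriv$ is avoided on the way to~$\locfinal$ in~$\valuate{\A}{\pval}$; that is, (2)~implies~(1).
	Conversely, by \cref{proposition:completeness}, the existence of such a pair of runs implies $\pval \models \K$; that is, (1)~implies~(2).
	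Hence the two statements are equivalent.
\end{proof}
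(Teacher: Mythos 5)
Your proof is correct and matches the paper's own argument, which likewise derives the theorem directly from \cref{proposition:soundness,proposition:completeness} (soundness giving $(2)\Rightarrow(1)$ and completeness giving $(1)\Rightarrow(2)$). No gap: the content indeed lives in those two propositions, and the theorem's proof is just their conjunction.
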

\begin{proof}
	From \cref{proposition:soundness,proposition:completeness}
\end{proof}

\section{Experiments}\label{section:experiments}
\subsection{Experimental environment}\label{ss:exp-env}

We use \imitator{}~\cite{Andre21}, a parametric timed model checking tool taking as input networks of PTAs extended with several handful features such as shared global discrete variables, PTA synchronization through strong broadcast, non-timing rational-valued parameters, etc.
\imitator{} supports various parameter synthesis algorithms, including reachability synthesis.
\imitator{} represents symbolic states as polyhedra, relying on PPL~\cite{BHZ08}.
\imitator{} is a leading tool for parameter synthesis for extensions of parametric timed automata.
Related tools are
Romeo~\cite{LRST09} (which cannot be used here, as it does not support parametric timed automata, but extensions of Petri nets),
\SpaceEx{}~\cite{FLDCRLRGDM11} (which does not perform parameter synthesis),
or
\uppaal{}~\cite{LPY97} (which cannot be used here, as our algorithm requires timing parameters, not supported by \uppaal{}).

We ran experiments using \imitator{} 2.10.4 ``Butter Jellyfish'' (build 2477 \texttt{HEAD/5b53333})
on a Dell XPS 13 9360 equipped with an Intel\textregistered{} Core\texttrademark{} i7-7500U CPU @ 2.70GHz with 8\,GiB memory running Linux Mint 18.3 64\,bits.\footnote{%
	Sources, models and results are available at
	\href{https://doi.org/10.5281/zenodo.3251141}{\nolinkurl{doi.org/10.5281/zenodo.3251141}}
	and
		\href{https://www.imitator.fr/static/ATVA19/}{\nolinkurl{imitator.fr/static/ATVA19/}}.
}

\subsection{Translating programs into PTAs}\label{ss:Java2PTA}

We will consider case studies from the PTA community and from previous works focusing on privacy using (parametric) timed automata.
In addition, we will be interested in analyzing programs too.
In order to apply our method to the analysis of programs, we need a systematic way of translating a program (\eg{} a Java program) into a PTA.
In general, precisely modeling the execution time of a program using models like timed automata is highly non-trivial due to complication of hardware pipelining, caching, OS scheduling, etc.
The readers are referred to the rich literature in, for instance, \cite{LYGY10}.
In this work, we instead make the following simplistic assumption on execution time of a program statement and focus on solving the parameter synthesis problem.
How to precisely model the execution time of programs is orthogonal and complementary to our work.

We assume that the execution time of a program statement other than \stylecode{Thread.sleep(n)} is within a range $[0,\epsilon]$ where $\epsilon$ is a small integer constant (in milliseconds), whereas the execution time of statement \stylecode{Thread.sleep(n)} is within a range $[n , n+\epsilon]$.
In fact, we choose to keep $\epsilon$ \emph{parametric} to be as general as possible, and to not depend on particular architectures.

Our test subject is a set of benchmark programs from the DARPA Space/Time Analysis for Cybersecurity (STAC) program.\footnote{\url{https://github.com/Apogee-Research/STAC/}}
	These programs are being released publicly to facilitate researchers to develop methods and tools for identifying STAC vulnerabilities in the programs.
	These programs are simple yet non-trivial, and were built on purpose to highlight vulnerabilities that can be easily missed by existing security analysis tools.

\subsection{A richer framework}\label{ss:richer}

The symbolic representation of variables and parameters in \imitator{} allows us to reason \emph{symbolically} concerning variables.
That is, instead of enumerating all possible (bounded) values of \styledisc{x} and \styledisc{secret} in \cref{figure:example-Java:PTA}, we turn them to parameters (\ie{} unknown constants), and \imitator{} performs a symbolic reasoning.
Even better, the analysis terminates for this example even when no bound is provided on these variables.
This is often not possible in (non-parametric) timed automata based model checkers, which usually have to enumerate these values.
Therefore, in our PTA representation of Java programs, we turn all user-input variable and secret constant variables to non-timing rational-valued parameters, also supported by \imitator{}.
Other local variables are implemented using \imitator{} discrete (shared, global) variables.

We also discuss how to enlarge the scope of our framework.

\paragraph{Multiple private locations}
This can be easily achieved by setting $\bflag$ to $\BTrue$ along any incoming transition of one of these private locations.

\paragraph{Multiple final locations}
The technique used depends on whether these multiple final locations can be distinguished or not.
If they are indistinguishable (\ie{} the observer knows when the program has terminated, but not in which state), then it suffices to merge all these final locations in a single one, and our framework trivially applies.
If they are distinguishable, then one analysis needs to be conducted on each of these locations (with a different parameter~$\paramabs$ for each of these), and the obtained constraints must be intersected.

\paragraph{Access to high-level variables}
In the literature, a distinction is sometimes made between low-level (``public'') and high-level (``private'') variables.
Opacity or non-interference can be defined in terms of the ability for an observer to deduce some information on the high-level variables.

\begin{example}
	For example, in \cref{figure:example:VNN18} (where $\styleclock{cl}$ is a clock and $\styledisc{h}$ a variable), if $\loc_2$ is reachable in 20 time units, then it is clear that the value of the high-level variable $\styledisc{h}$ is negative.
\end{example}

Our framework can also be used to address this problem, \eg{} by setting $\bflag$ to~$\BTrue$, not on locations but on selected tests / valuations of such variables.

\begin{example}
	For example, setting $\bflag$ to~$\BTrue$ on the upper transition from~$\loc_1$ to~$\loc_2$ in \cref{figure:example:VNN18}, the answer to the timed opacity computation problem is $\Times = (30, \infty)$, and the system is therefore not opaque since $\loc_2$ can be reached for any execution time in~$[0, \infty)$.
\end{example}
\begin{figure}[tb]
	\centering
	\footnotesize
	\begin{tikzpicture}[scale=1, auto, ->, >=stealth']

		\node[location, initial] at (0,0) (l1) {$\loc_1$};

		\node[location, final] at (2, 0) (l2) {$\loc_2$};

		\path (l1) edge[bend left] node[align=center]{$\styledisc{h} > 0$ \\ $\styleclock{cl} > 30$} (l2);

		\path (l1) edge[bend right] node[below,align=center]{$\styledisc{h} \leq 0$} (l2);

	\end{tikzpicture}
	\caption{\cite[Fig.~5]{VNN18}}
	\label{figure:example:VNN18}

\end{figure}
\subsection{Experiments}\label{ss:experiments}
\subsubsection{Benchmarks}\label{sss:benchmarks}

As a proof of concept, we applied our method to a set of examples from the literature.
The first five models come from previous works from the literature~\cite{GMR07,BCLR15,VNN18}, also addressing non-interference or opacity in timed automata.\LongVersion{\footnote{%
	As most previous works on opacity and timed automata do not come with an implementation nor with benchmarks, it is not easy to find larger models coming in the form of~TAs.
}}
In addition, we used two common models from the (P)TA literature, not necessarily linked to security:
	a toy coffee machine (\stylebench{Coffee}) used as benchmark in a number of papers,
	and
	a model Fischer's mutual exclusion protocol (\stylebench{Fischer-HRSV02}) \cite{HRSV02}.
In both cases, we added manually a definition of private location (the number of sugars ordered, and the identity of the process entering the critical section, respectively), and we verified whether they are opaque \wrt{} these internal behaviors.

We also applied our approach to a set of Java programs from the aforementioned STAC library.
We use identifiers of the form \stylebench{STAC:1:n} where \stylebench{1} denotes the identifier in the library, while \stylebench{n} (resp.~\stylebench{v}) denotes non-vulnerable (resp.\ vulnerable).
We manually translated these programs to PTAs, following the method described in \cref{ss:Java2PTA}.
We used a representative set of programs from the library; however, some of them were too complex to fit in our framework, notably when the timing leaks come from calls to external libraries (\stylebench{STAC:15:v}), when dealing with complex computations such as operations on matrices (\stylebench{STAC:16:v}) or when handling probabilities (\stylebench{STAC:18:v}).
Proposing efficient and accurate ways to represent arbitrary programs into (parametric) timed automata is orthogonal to our work, and is the object of future works.

\subsubsection{Timed opacity computation}\label{sss:timed-opacity-computation}

First, we \emph{verified} whether a given TA model is opaque, \ie{} if for all execution times reaching a given final location, both an execution goes through a given private location and an execution does not go through this private location.
To this end, we also answer the timed opacity computation problem, \ie{} to synthesize all execution times for which the system is opaque.
While this problem can be verified on the region graph (\cref{proposition:ET-opacity-computation}), we use the same framework as in \cref{section:synthesis}, but without parameters in the original TA.
That is, we use the Boolean flag~$\bflag$ and the parameter $\paramabs$ to compute all possible execution times.
In other words, we use a parametric analysis to solve a non-parametric problem.

\newcommand{\columnRef}[1]{}

\begin{table}[tb]
	\caption{Experiments: timed opacity}
	\centering
	\footnotesize
	\ShortVersion{\scriptsize}

	\setlength{\tabcolsep}{2pt} %

		\begin{tabular}{| c | c | c | c | c | c | r | r | c |} %
		\hline
		\LongVersionTable{\multicolumn{3}{| c |}{\cellHeader{}Model} & \multicolumn{3}{ c |}{\cellHeader{}Transf. PTA} & \multicolumn{3}{c |}{\cellHeader{}Result}}
		\ShortVersionTable{\multicolumn{3}{| c |}{\cellHeader{}Model} & \multicolumn{3}{ c |}{\cellHeader{}Transf. PTA} & \multicolumn{2}{c |}{\cellHeader{}Result}}
		\\
		\hline
		\rowHeader{}
		Name & \columnRef{Reference & }$|\A|$ & $|\Clock|$ & $|\A|$ & $|\Clock|$ & $|\Param|$\LongVersionTable{ & States} &Time (s) & Opaque? \\
		\hline
		\cite[Fig.~5]{VNN18} & \columnRef{\cite{VNN18} &} 1 & 1 & 2 & 3 & 3 \LongVersionTable{ & 13} & 0.02 & \cellFixable{}  \\ %
		\hline
		\cite[Fig.~1b]{GMR07} & \columnRef{\cite{GMR07} & }1 & 1 & 2 & 3 & 1 \LongVersionTable{& 25} & 0.04 & \cellFixable{} \\ %
		\hline
		\cite[Fig.~2a]{GMR07} & \columnRef{\cite{GMR07} & }1 & 1 & 2 & 3 & 1 \LongVersionTable{& 41} & 0.05 & \cellFixable{} \\ %
		\hline
		\cite[Fig.~2b]{GMR07} & \columnRef{\cite{GMR07} & }1 & 1 & 2 & 3 & 1 \LongVersionTable{& 41} & 0.02 & \cellFixable{} \\ %
		\hline
		Web privacy problem \cite{BCLR15} & \columnRef{\cite{BCLR15} & }1 & 2 & 2 & 4 & 1 \LongVersionTable{& 105} & 0.07 & \cellFixable{} \\ %
		\hline
		\stylebench{Coffee} & \columnRef{\cite{AMP21} & }1 & 2 & 2 & 5 & 1 \LongVersionTable{& 43} & 0.05 & \cellYes{} \\ %
		\hline
		\stylebench{Fischer-HSRV02} & \columnRef{\cite{HRSV02} & }3 & 2 & 6 & 5 & 1 \LongVersionTable{& 2495} & 5.83 & \cellFixable{} \\ %
		\hline
		\stylebench{STAC:1:n} & \columnRef{ & }\multicolumn{2}{c |}{\nbLoC{69}} & 2 & 3 & 6 \LongVersionTable{& 65} & 0.12 & \cellFixable{} \\ %
		\hline
		\stylebench{STAC:1:v} & \columnRef{ & }\multicolumn{2}{c |}{\nbLoC{69}} & 2 & 3 & 6 \LongVersionTable{& 63} & 0.11 & \cellNo{} \\ %
		\hline
		\stylebench{STAC:3:n} & \columnRef{ & }\multicolumn{2}{c |}{\nbLoC{87}} & 2 & 3 & 8 \LongVersionTable{& 289} & 0.72 & \cellYes{} \\ %
		\hline
		\stylebench{STAC:3:v} & \columnRef{ & }\multicolumn{2}{c |}{\nbLoC{87}} & 2 & 3 & 8 \LongVersionTable{& 287} & 0.74 & \cellFixable{} \\ %
		\hline
		\stylebench{STAC:4:n} & \columnRef{ & }\multicolumn{2}{c |}{\nbLoC{112}} & 2 & 3 & 8 \LongVersionTable{& 904} & 6.40 & \cellNo{} \\ %
		\hline
		\stylebench{STAC:4:v} & \columnRef{ & }\multicolumn{2}{c |}{\nbLoC{110}} & 2 & 3 & 8 \LongVersionTable{& 19183} & 265.52 & \cellNo{} \\ %
		\hline
		\stylebench{STAC:5:n} & \columnRef{ & }\multicolumn{2}{c |}{\nbLoC{115}} & 2 & 3 & 6 \LongVersionTable{& 144} & 0.24 & \cellYes{} \\ %
		\hline
		\stylebench{STAC:11A:v} & \columnRef{ & }\multicolumn{2}{c |}{\nbLoC{81}} & 2 & 3 & 8 \LongVersionTable{& 5037} & 47.77 & \cellFixable{} \\ %
		\hline
		\stylebench{STAC:11B:v} & \columnRef{ & }\multicolumn{2}{c |}{\nbLoC{85}} & 2 & 3 & 8 \LongVersionTable{& 5486} & 59.35 & \cellFixable{} \\ %
		\hline
		\stylebench{STAC:12c:v} & \columnRef{ & }\multicolumn{2}{c |}{\nbLoC{81}} & 2 & 3 & 8 \LongVersionTable{& 1177} & 18.44 & \cellNo{} \\ %
		\hline
		\stylebench{STAC:12e:n} & \columnRef{ & }\multicolumn{2}{c |}{\nbLoC{96}} & 2 & 3 & 8 \LongVersionTable{& 169} & 0.58 & \cellNo{} \\ %
		\hline
		\stylebench{STAC:12e:v} & \columnRef{ & }\multicolumn{2}{c |}{\nbLoC{85}} & 2 & 3 & 8 \LongVersionTable{& 244} & 1.10 & \cellFixable{} \\ %
		\hline
		\stylebench{STAC:14:n} & \columnRef{ & }\multicolumn{2}{c |}{\nbLoC{88}} & 2 & 3 & 8 \LongVersionTable{& 1223} & 22.34 & \cellFixable{} \\ %
		\hline
	\end{tabular}

	\label{table:nonparametric}
\end{table}

We tabulate the experiments results in \cref{table:nonparametric}.\label{newtext:headers}
We give from left to right the model name, the numbers of automata and of clocks in the original timed automaton (this information is not relevant for Java programs as the original model is not a TA), the numbers of automata\footnote{%
	As usual, it may be simpler to write PTA models as a network of PTAs.
	Recall from \cref{definition:parallel} that a network of PTAs gives a PTA.
	In this case, $|\A|$ denotes the number of input PTA components.
	\label{footnote:network}
}, of clocks and of parameters in the transformed PTA, the computation time in seconds (for the timed opacity computation problem), and the result.
In the result column, ``\cellYes{}'' (resp.~``\cellNo{}'') denotes that the model is opaque (resp.\ is not opaque, \ie{} vulnerable), while ``\cellFixable{}'' denotes that the model is not opaque, but could be fixed.
That is, although $\PrivDurReach{\valuate{\A}{\pval}}{\locpriv}{\locfinal} \neq \PubDurReach{\valuate{\A}{\pval}}{\locpriv}{\locfinal}$, their intersection is non-empty and therefore, by tuning the computation time, it may be possible to make the system opaque.
This will be discussed in \cref{ss:rendering-opaque}.

Even though we are interested here in timed opacity computation (and not in synthesis), note that all models derived from Java programs feature the parameter~$\styleparam{\epsilon}$.
The result is obtained by variable elimination, \ie{} by existential quantification over the parameters different from~$\paramabs$.
In addition, the number of parameters is increased by the parameters encoding the symbolic variables (such as $\styledisc{x}$ and $\styledisc{secret}$ in \cref{figure:example-Java:PTA}).

Concerning the Java programs, we decided to keep the most abstract representation, by imposing that each instruction lasts for a time in~$[0,\styleparam{\epsilon}]$, with $\styleparam{\epsilon}$ a parameter.
However,
	fixing an identical (parametric) time $\styleparam{\epsilon}$ for all instructions,
	or fixing an arbitrary time in a constant interval $[0, \epsilon]$ (for some constant $\epsilon$, \eg{}~1),
	or even fixing an identical (constant) time $\epsilon$ (\eg{}~1) for all instructions,
significantly speeds up the analysis.
These choices can be made for larger models.

\paragraph*{Discussion}
Overall, our method is able to answer the timed opacity computation problem for practical case studies, exhibiting which execution times are opaque (timed opacity computation problem), and whether \emph{all} execution times indeed guarantee opacity (timed opacity problem).
	\label{newtext:practical}

In many cases, while the system is not opaque, we are able to \emph{infer} the execution times guaranteeing opacity (cells marked ``\cellFixable{}'').
This is an advantage of our method \wrt{} methods outputting only binary answers.

We observed some mismatches in the Java programs, \ie{} some of the programs marked \stylebench{n} (non-vulnerable) in the library are actually vulnerable according to our method.
This mainly comes from the fact that the STAC library expect tools to use imprecise analyses on the execution times, while we use an exact method.
Therefore, a very small mismatch between $\PrivDurReach{\valuate{\A}{\pval}}{\locpriv}{\locfinal}$ and $\PubDurReach{\valuate{\A}{\pval}}{\locpriv}{\locfinal}$ will lead our algorithm to answer ``not opaque'', while some methods may not be able to differentiate this mismatch from imprecision (noise).
This is notably the case of \stylebench{STAC:14:n} where some action lasts either 5,010,000 or 5,000,000 time units depending on some secret, which our method detects to be different, while the library does not.
For \stylebench{STAC:1:n}, using our data, the difference in the execution time upper bound between an execution performing some secret action and an execution not performing it is larger than~1\,\%, which we believe is a value which is not negligible, and therefore this case study might be considered as vulnerable.
\label{oldtext:stac1n}

\stylebench{STAC:4:n} requires a more detailed discussion.
	This particular program is targeting vulnerabilities that can be detected easily \emph{when they accumulate}, typically in loops.
	This program checks a number of times (10) a user-input password, and each password check is made in the most insecure way, \ie{} by returning ``incorrect'' as soon as one character differs between the input password and the expected password.
	This way is very insecure because the execution time is proportional to the number of consecutive correct characters in the input password and, by observing the execution time, an attacker can guess how many characters are correct, and therefore using a limited number of tests, (s)he will eventually guess the correct password.
	The difference between the vulnerable (\stylebench{STAC:4:v}) and the non-vulnerable (\stylebench{STAC:4:n}) versions is that the non-vulnerable version immediately stops if the password is incorrect, and performs the 10 checks only if the password is correct.
	Therefore, while the computation time is very different between the correct input password and any incorrect input password, it is however very similar between an incorrect input password that would only be incorrect because, say, of the last character (\eg{} ``\texttt{kouignamaz}'' while the expected password is ``\texttt{kouignaman}''), and a completely incorrect input password differing as early as the first character (\eg{} ``\texttt{andouille}'').
	This makes the attacker's task very difficult.
	The main reason for the STAC library to label \stylebench{STAC:4:n} as a non-vulnerable program is because of the ``very similar'' nature of the computation times between an incorrect input password that would only be incorrect because of the last character, and a completely incorrect input password.
	(In contrast, the vulnerable version \stylebench{STAC:4:v} is completely vulnerable because this time difference is amplified by the loop, here 10~times.)
	While ``very similar'' might be acceptable for most tools, in our setting based on formal verification, we \emph{do} detect that testing ``\texttt{kouignamaz}'' or testing ``\texttt{kouignamzz}'' will yield a slightly faster computation time for the second input, because the first incorrect letter occurs earlier---and the program is therefore vulnerable.
	\label{newtext:mismatch4}

\subsubsection{Timed opacity synthesis}\label{sss:tos}

Then, we address the timed opacity synthesis problem.
In this case, we \emph{synthesize} both the execution time and the internal values of the parameters for which one cannot deduce private information from the execution time.

We consider the same case studies as for timed opacity computation; however, the Java programs feature no internal ``parameter'' and cannot be used here.
Still, as a proof of concept, we artificially enriched one of them (\stylebench{STAC:3:v}) as follows: in addition to the parametric value of $\styleparam{\epsilon}$ and the execution time, we parameterized one of the \stylecode{sleep} timers.
The resulting constraint can help designers to refine this latter value to ensure opacity.
Note that it may not be that easy to tune a Java program to make it non-opaque: while this is reasonably easy on the PTA level (restraining the execution times using an additional clock), this may not be clear on the original model: Making a program terminate slower than originally is easy with a \texttt{Sleep} statement; but making it terminate ``earlier'' is less obvious, as it may mean an abrupt termination, possibly leading to wrong results.

We tabulate the results in \cref{table:parametric}, where the columns are similar to \cref{table:nonparametric}.
A difference is that the first $|\Param|$ column denotes the number of parameters in the original model (without counting these added by our transformation).
In addition, \cref{table:parametric} does not contain a ``vulnerable?'' column as we \emph{synthesize} the condition for which the model is non-vulnerable, and therefore the answer is non-binary.
However, in the last column (``Constraint''), we make explicit whether no valuations ensure opacity (``\cellKnone{}''), all of them (``\cellKall{}''), or some of them (``\cellKsome{}'').

\paragraph*{Discussion}
An interesting outcome is that the computation time is comparable to the (non-parametric) timed opacity computation, with an increase of up to~20\,\% only.
In addition, for all case studies, we exhibit at least some valuations for which the system can be made opaque.
Also note that our method always terminates for these models, and therefore the result exhibited is complete.
Interestingly, \stylebench{Coffee} is opaque for any valuation of the 3~internal parameters.

\begin{table}[tb]
	\caption{Experiments: timed opacity synthesis}
	\centering
	\footnotesize
	\ShortVersion{\scriptsize}

	\setlength{\tabcolsep}{2pt} %
	\LongVersionTable{\begin{tabular}{| c | c | c | c | c | c | c | r | r | c |}} %
	\ShortVersionTable{\begin{tabular}{| c | c | c | c | c | c | r | r | c |}} %
		\hline
		\LongVersionTable{\multicolumn{4}{| c |}{\cellHeader{}Model} & \multicolumn{3}{ c |}{\cellHeader{}Transf. PTA} & \multicolumn{3}{c |}{\cellHeader{}Result}}
		\ShortVersionTable{\multicolumn{4}{| c |}{\cellHeader{}Model} & \multicolumn{3}{ c |}{\cellHeader{}Transf. PTA} & \multicolumn{2}{c |}{\cellHeader{}Result}}
		\\
		\hline
		\rowHeader{}
		Name & \columnRef{Reference & }$|\A|$ & $|\Clock|$ & $|\Param|$ & $|\A|$ & $|\Clock|$ & $|\Param|$ \LongVersionTable{& States} & Time (s) & Constraint \\
		\hline
		\cite[Fig.~5]{VNN18} & \columnRef{\cite{VNN18} &} 1 & 1 & 0 & 2 & 3 & 4 \LongVersionTable{& 13} & 0.02 & \cellKsome{}\\
		\hline
		\cite[Fig.~1b]{GMR07} & \columnRef{\cite{GMR07} & }1 & 1 & 0 & 2 & 3 & 3 \LongVersionTable{& 25} & 0.03 & \cellKsome{}\\
		\hline
		\cite[Fig.~2]{GMR07} & \columnRef{\cite{GMR07} & }1 & 1 & 0 & 2 & 3 & 3 \LongVersionTable{& 41} & 0.05 & \cellKsome{}\\
		\hline
		Web privacy problem \cite{BCLR15} & \columnRef{\cite{BCLR15} & }1 & 2 & 2 & 2 & 4 & 3 \LongVersionTable{& 105} & 0.07 & \cellKsome{}\\
		\hline
		\stylebench{Coffee} & \columnRef{\cite{AMP21} & }1 & 2 & 3 & 2 & 5 & 4 \LongVersionTable{& 85} & 0.10 & \cellKall{}\\
		\hline
		\stylebench{Fischer-HSRV02} & \columnRef{\cite{HRSV02} & }3 & 2 & 2 & 6 & 5 & 3 \LongVersionTable{& 2495} & 7.53 & \cellKsome{}\\
		\hline
		\stylebench{STAC:3:v} & \columnRef{ & }\multicolumn{2}{c |}{\nbLoC{87}} & 2 & 2 & 3 & 9 \LongVersionTable{& 361} & 0.93 & \cellKsome{}\\
		\hline
	\end{tabular}

	\label{table:parametric}
\end{table}
\subsection{``Repairing'' a non-opaque PTA}\label{ss:rendering-opaque}

Our method gives a result in time of a union of polyhedra over the internal timing parameters and the execution time.
On the one hand, we believe tuning the internal timing parameters should be easy: for a program, an internal timing parameter can be the duration of a \stylecode{sleep}, for example.
On the other hand, tuning the execution time of a program may be more subtle.
A solution is to enforce a minimal execution time by adding a second thread in parallel with a \stylecode{Wait()} primitive to ensure a minimal execution time.
Ensuring a \emph{maximal} execution time can be achieved with an exception stopping the program after a given time; however there is a priori no guarantee that the result of the computation is correct.
\section{Conclusion}\label{section:conclusion}

In this work, we proposed an approach based on parametric timed model checking to not only decide whether the model of a timed system can be subject to timing information leakage, but also to \emph{synthesize} internal timing parameters and execution times that render the system opaque.
We implemented our approach in a framework based on \imitator{},
and performed experiments on case studies from the literature and from a library of Java programs.

We now discuss future works in the following.

\paragraph*{Theory}
We proved decidability of the timed opacity computation problem (\cref{proposition:ET-opacity-computation}) and of the full timed opacity decision problem (\cref{proposition:ET-full-opacity-decsion}) for TAs, but we only provided an upper bound (3EXPTIME) on the complexity.
It can be easily shown that these problems are at least PSPACE, but the exact complexity remains to be exhibited.

In addition, the decidability of several ``low-dimensional'' problems (\ie{} with ``small'' number of clocks or parameters) remains open.
Among these, the one-clock case for parametric timed opacity emptiness (\cref{proposition:TOE-undecidability}) remains open:
	that is, is the timed opacity emptiness problem decidable for PTAs using a single clock?
Our method in \cref{section:synthesis} consists in duplicating the automaton and adding a clock that is never reset, thus resulting in a PTA with 3 clocks, for which reachability-emptiness is undecidable~\cite{AHV93}.
However, since one of the clocks is never reset, and since the automaton is structurally constrained (it is the result of the composition of two copies of the same automaton), decidability might be envisioned.
Recall that the 2-clock reachability-emptiness problem is a famous open problem~\cite{Andre19STTT}, despite recent advances, notably over discrete time~\cite{BO14,GH21}.
The 1-clock question also remains open for full timed opacity emptiness (\cref{theorem:FTOE-undecidability}).
The minimum number of parameters required for our proof of the undecidability of the full timed opacity emptiness problem for PTAs (resp.\ L/U-PTAs) to work is~2 (resp.~4), as seen in \cref{theorem:FTOE-undecidability} (resp.\ \cref{theorem:FTOE:LU}); it is open whether using less parameters can render these problems decidable.

Finally, concerning L/U-PTAs, we proved two negative results, despite the decidability of the timed opacity emptiness problem (\cref{proposition:decidability-LU}): the undecidability of the full timed opacity emptiness (\cref{theorem:FTOE:LU}) and the intractability of timed opacity synthesis (\cref{proposition:intractabilitySynthTO}).
It remains open whether these results still apply to the more restrictive class of U-PTAs~\cite{BlT09}.

\paragraph*{Full timed opacity synthesis}

We leave full timed opacity synthesis as future work; while we could certainly reuse partially our algorithm, this is not entirely trivial, as we need to select only the parameter valuations for which the whole set of execution times is exactly the set of opaque times.

\paragraph*{Applications}

The translation of the STAC library required some non-trivial creativity: while the translation from programs to quantitative extensions of automata is orthogonal to our work, proposing automated translations of (possibly annotated) programs to timed automata dedicated to timing analysis is on our agenda.

	Adding probabilities to our framework will be interesting, helping to quantify the execution times of ``untimed'' instructions in program with a finer grain than an interval; also note that some benchmarks make use of probabilities (notably \stylebench{STAC:18:v}).

	Finally, \imitator{} is a general model checker, not specifically aimed at solving the problem we address here.
	Notably, constraints managed by PPL contain all variables (clocks, timing parameters, and parameters encoding symbolic variables of programs), yielding an exponential complexity.
	Separating certain types of independent variables (typically parameters encoding symbolic variables of programs, and other variables) should increase efficiency.
\ifdefined\VersionAuthorforArXiV
\else
	\begin{acks}
		\ouracks{}
	\end{acks}
\fi
	\newcommand{\CCIS}{Communications in Computer and Information Science}
	\newcommand{\ENTCS}{Electronic Notes in Theoretical Computer Science}
	\newcommand{\FI}{Fundamenta Informormaticae}
	\newcommand{\FMSD}{Formal Methods in System Design}
	\newcommand{\IJFCS}{International Journal of Foundations of Computer Science}
	\newcommand{\IJSSE}{International Journal of Secure Software Engineering}
	\newcommand{\IPL}{Information Processing Letters}
	\newcommand{\JLAP}{Journal of Logic and Algebraic Programming}
	\newcommand{\JLC}{Journal of Logic and Computation}
	\newcommand{\LMCS}{Logical Methods in Computer Science}
	\newcommand{\LNCS}{Lecture Notes in Computer Science}
	\newcommand{\RESS}{Reliability Engineering \& System Safety}
	\newcommand{\STTT}{International Journal on Software Tools for Technology Transfer}
	\newcommand{\TCS}{Theoretical Computer Science}
	\newcommand{\ToPNoC}{Transactions on Petri Nets and Other Models of Concurrency}
	\newcommand{\TSE}{IEEE Transactions on Software Engineering}

\ifdefined\VersionAuthorforArXiV
	\renewcommand*{\bibfont}{\footnotesize}
	\printbibliography %
\else
	\printbibliography
\fi
\newpage
\appendix

\section{The code of the Java example}\label{appendix:Java}

\lstinputlisting[language=Java]{benchmarks/category1_vulnerable-cropped.java}

\label{newtext:2*32*32}%
Note that the two ``\texttt{for}'' loops featuring a \texttt{Thread.sleep(1}) (resp.~\texttt{2}) could be equivalently replaced with a simple \texttt{Thread.sleep(32*32)} (resp.\ \texttt{Thread.sleep(2*32*32)}) statement, but \begin{oneenumeration}
	\item this is the way the program is presented in the DARPA library, and 
	\item a (minor) difficulty may come from these loops instead of a simple \texttt{Thread.sleep(32*32)} statement.
\end{oneenumeration}

\ifdefined\WithReply
	\clearpage
	\newpage
	\input{letter2.tex}
\fi

\end{document}